\newcommand{\bb}{\mathbb}
\newcommand{\mc}{\mathcal}
\newcommand{\code}[1]{[\![#1]\!]}
\DeclareMathOperator{\supp}{supp}
\newcommand{\unit}[1]{\,\mathrm{#1}}
\newtheorem{theorem}{Theorem}
\newtheorem{lemma}[theorem]{Lemma}
\newcommand{\measureX}[1]{*+[F-:<.9em>]{#1}}
\begin{document}

\title{Fault-tolerant quantum architectures based on erasure qubits}
\author{Shouzhen Gu}
\affiliation{California Institute of Technology, Pasadena, CA 91125, USA}
\affiliation{AWS Center for Quantum Computing, Pasadena, CA 91125, USA}
\author{Alex Retzker}
\affiliation{AWS Center for Quantum Computing, Pasadena, CA 91125, USA}
\affiliation{Racah Institute of Physics, The Hebrew University of Jerusalem, Jerusalem 91904, Givat Ram, Israel}
\author{Aleksander Kubica}
\thanks{akubica@caltech.edu}
\affiliation{AWS Center for Quantum Computing, Pasadena, CA 91125, USA}
\affiliation{California Institute of Technology, Pasadena, CA 91125, USA}
\begin{abstract}
The overhead of quantum error correction (QEC) poses a major bottleneck for realizing fault-tolerant computation.
To reduce this overhead, we exploit the idea of erasure qubits, relying on an efficient conversion of the dominant noise into erasures at known locations.
We start by introducing a formalism for QEC schemes with erasure qubits and express the corresponding decoding problem as a matching problem.  
Then, we propose and optimize QEC schemes based on erasure qubits and the recently-introduced Floquet codes.
Our schemes are well-suited for superconducting circuits, being compatible with planar layouts.
We numerically estimate the memory thresholds for the circuit noise model that includes spreading (via entangling operations) and imperfect detection of erasures.
Our results demonstrate that, despite being slightly more complex, QEC schemes based on erasure qubits can significantly outperform standard approaches. 
\end{abstract}

\maketitle

\section{Introduction}

Quantum error correction (QEC) and fault-tolerant protocols~\cite{Shor1995,Steane1996,Shor1996} can benefit significantly from an ingenious design of qubits with tailored, often hardware-dependent, noise structure.
For instance, a bosonic cat qubit~\cite{Cochrane1999,Mirrahimi2014,Ofek2016,Puri2019,Guillaud2019} exhibits biased Pauli noise.
Such noise can be readily exploited, e.g., by an appropriate variant of the surface code~\cite{Tuckett2018,Tuckett2019,Tuckett2020,Bonilla2021,Dua2022,Xu2023,Higgott2023}, resulting in greatly increased QEC thresholds and reduced qubit overheads of the associated QEC protocols.

Recently, another type of qubit, often referred to as an \emph{erasure qubit}, has received significant attention.
Several theoretical proposals have described how the erasure qubit can be straightforwardly realized with, e.g., neutral atoms~\cite{Wu2022}, trapped ions~\cite{Kang2023} or superconducting circuits~\cite{Kubica2023,Teoh2023}, as well as several promising proof-of-principle experimental demonstrations~\cite{Ma2023,Scholl2023,Chou2023,Levine2023,Koottandavida2023}.
The idea behind the erasure qubit is to engineer a qubit in such a way that its dominant noise is detectable erasures~\cite{Grassl1997}.
Importantly, the knowledge of locations of erasures can be efficiently leveraged by QEC protocols (that may be based on the surface code) and decoding algorithms, leading to high QEC thresholds and an improved subthreshold scaling of the logical error rate~\cite{Stace2009,Delfosse2021,Sahay2023}.

For QEC protocols to benefit from erasure qubits, the following requirements have to be satisfied:
(i) a large erasure bias, defined as the ratio of the probability of an erasure to the probability of any other residual error within the computational subspace,
(ii) an implementation of standard quantum circuit operations, including state preparation, unitary gates and readout, in a way that preserves an erasure bias, and
(iii) the ability to perform an erasure check capable of reliably detecting erasures without introducing additional errors within the computational subspace.
Thus, it appears that approaches based on erasure qubits, compared to the standard ones, are more challenging to implement.
However, while there is an additional cost associated with, e.g., a careful design of the erasure check, once these additional building blocks are available, then new possibilities for optimized QEC protocols and fault-tolerant architectures open up.

In this article, we address the question of designing fault-tolerant architectures that are optimized for and make full use of erasure qubits.
We start by introducing a formalism for QEC protocols with erasure qubits and express the corresponding decoding problem as the hypergraph matching problem.
This, in turn, allows us to design decoding algorithms, which, in many relevant scenarios, may be based on the matching algorithm~\cite{Dennis2002,Higgott2023}.
We then focus on fault-tolerant architectures and find that erasure qubits are particularly suitable for a recently-introduced family of QEC codes, Floquet codes~\cite{Hastings2021,Haah2022}.
In particular, in one realization of erasure qubits via the dual-rail encoding~\cite{Duan2010}, the minimal set of quantum circuit operations necessary to implement Floquet codes consists of state preparation, readout, single-qubit gates and erasure checks (which also play the role of entangling operations).
We discuss possible physical implementations of erasure checks in the context of superconducting circuits.
To benchmark our scheme, we numerically estimate the memory thresholds of Floquet codes against circuit noise comprising erasures, Pauli errors and measurement errors (for readout and erasure checks).
Lastly, we analyze further optimizations of Floquet codes that lead to the reduced qubit overhead, as well as find the smallest Floquet codes with distance two and four, which require, respectively, 4 and 16 qubits.

\section{QEC protocols with erasure qubits}

Analyzing QEC protocols with erasure qubits poses some challenges.
First, modelling each erasure qubit may require at least a three-level system.
Second, quantum circuit operations might, in principle, introduce correlated coherent errors in the presence of erasures.
Consequently, there seems to be little hope for efficient simulation methods akin to the ones used for stabilizer circuits~\cite{Gottesman1998,Aaronson2004}.
In addition, decoding algorithms do not typically consider erasures and their spread.

In this section, we describe how making certain simplifying assumptions allows us to efficiently decode and simulate QEC protocols with erasure qubits; see Appendix~\ref{sec_formal} for a detailed description of our formalism.
In particular, we phrase the decoding problem as the hypergraph matching problem.

We emphasize that our formalism and numerical simulations go beyond the paradigm of erasure qubits.
Namely, they can be used for QEC schemes with leakage, allowing us to quantify the potential gains from the ability to detect leakage~\cite{AliferisTerhalleakage,Varbanov2020,Miao2023}.

\subsection{Setting the formalism}
\label{sec_formalism}

To describe QEC protocols, we use quantum circuits that consist of the following standard single-qubit (1Q) and two-qubit (2Q) operations:
(i) 1Q state preparation (of eigenstates of Pauli operators),
(ii) 1Q readout (in any Pauli basis),
(iii) 1Q Clifford gates, and
(iv) 2Q controlled-Pauli $CP$ gates, where $P \in \{X,Y,Z\}$,
as well as the additional operations:
(v) 1Q erasure checks,
(vi) 1Q reset (of the erasure qubit),
(vii) 2Q projective measurements of Pauli $PP$ operators.
We refer to such circuits as \emph{erasure circuits}.
In contrast, \emph{stabilizer circuits} consist only of operations (i)-(iv) and (vii). 
An example of an erasure circuit is presented in Fig.~\ref{fig_circuits}.

\begin{figure}[ht!]
\centering
(a)\qquad\quad
$\Qcircuit @C=.8em @R=.7em {
\lstick{\ldots} & \gate{P} & \gate{\mathrm{EC^*}} & \qw & \lstick{\ldots}  & \\
\lstick{\ket +} & \ctrl{-1} & \gate{\mathrm{EC^*}} & \ctrl{1} & \meter_{\qquad\qquad\quad\ X} &\\
& \lstick{\ldots} & \gate{\mathrm{EC^*}} & \gate{P} & \qw & \lstick{\ldots\!\!}}$
\quad\\
\vspace*{10pt}
(b)\qquad\quad 
$\Qcircuit @C=.8em @R=.7em {
\lstick{\ldots} & \measure{\mc E} & \gate{P} & \measure{\mc E} & \gate{\mathrm{EC^*}} & \measure{\mc E} & \qw & \lstick{\ldots}  & & \\
\lstick{\ket +} & \measure{\mc E} &\ctrl{-1} & \measure{\mc E} & \gate{\mathrm{EC^*}} & \measure{\mc E} & \ctrl{1} & \measure{\mc E} & \meter_{\qquad\qquad\quad\ X} &\\
& & \lstick{\ldots} & \measure{\mc E} & \gate{\mathrm{EC^*}} & \measure{\mc E} & \gate{P} & \measure{\mc E} & \qw & \lstick{\ldots\!\!}}$
\caption{(a) An erasure circuit consists of standard operations, such as state preparation, readout and controlled-Pauli gates, as well as erasure checks with reset (denoted by $\mathrm{EC^*}$).
(b) The same quantum circuit with all possible erasure locations (denoted by $\mc E$) explicitly inserted.
This circuit is used in the ancilla scheme for Floquet codes.
}
\label{fig_circuits}
\end{figure}

To simulate QEC protocols, we first need to explicitly include all the erasure locations (where erasures may happen) into erasure circuits and then replace each ideal operation (i)-(vii) by its noisy counterpart.
We realize the latter by following a standard procedure of adding appropriate Pauli noise $\mc P$ after each operation and adding bit-flip noise $\mc N$ to each outcome (for readout, erasure checks and projective measurements), as depicted in Table~\ref{tab_ideal2simulated}.
To realize the former, we choose to insert one erasure location on each wire between each two consecutive operations (i)-(vii); see Fig.~\ref{fig_circuits}(b) for an illustration.
The noise strengths could be arbitrary at every location, however, for the purposes of our analysis, we describe the noise by three parameters: $e$, the erasure rate; $p$, the Pauli error rate; and $q$, the classical bit-flip noise rate.

\begin{table}[t]
\begin{tabular}{| p{20mm} | c | c |}
 \hline
 {\bf operation} & {\bf ideal} & {\bf simulated} \\
 \hline\hline
 state \mbox{preparation} & \Qcircuit @C=.8em @R=.7em {
        \lstick{\ket \psi} & \qw} &
\Qcircuit @C=.8em @R=.7em {
        \lstick{\ket \psi} & \measure{\mc P\!\left(\frac{3}{2}p\right)} & \qw
    } \\
 \hline
readout & \hspace*{-5mm}\Qcircuit @C=.8em @R=.7em {
        & \meter_{\qquad\qquad\ P}
    } &
\Qcircuit @C=.8em @R=.7em {
& \gate{\mathrm{EC}} & \meter_{\qquad\qquad\qquad\ P} &\\
& \measureX{\mc N(q)} \cwx & \cw & \cw \\
& & \measureX{\mc N(q)} \cwx[-2] & \cw }\raisebox{-18.3mm}{ } \\
 \hline
 erasure check with reset & \Qcircuit @C=.8em @R=.7em {
        & \gate{\mathrm{EC^*}} & \qw
    } &
\Qcircuit @C=.8em @R=.7em {
& \gate{\mathrm{EC}} & \gate{\mathrm R}& \measure{\mathcal P(p)} & \qw \\
& \measureX{\mathcal N(q)} \cwx & \cw & \cw & \cw}\raisebox{-11mm}{ } \\
 \hline
entangling gate & \Qcircuit @C=.8em @R=.7em {
        & \multigate{1}{G} & \qw\\
        & \ghost{G} & \qw}
        & \Qcircuit @C=.8em @R=.7em {
         & \multigate{1}{G} & \multimeasure{1}{\mc P(p)} & \qw\\
        & \ghost{G} & \ghost{\mc P(p)} & \qw}\raisebox{-8mm}{ } \\
 \hline
projective measurement & \Qcircuit @C=.8em @R=.7em {
        & \multigate{1}{\Pi_{PP}} & \qw \\
        & \ghost{\Pi_{PP}} & \qw
    } & \Qcircuit @C=.8em @R=.7em {
        & \multigate{1}{\Pi_{PP}} & \multimeasure{1}{\mathcal P(p)} & \qw \\
        & \ghost{\Pi_{PP}} & \ghost{\mathcal P(p)} & \qw \\
        & \measureX{\mathcal N(q)} \cwx & \cw & \cw}\raisebox{-15.4mm}{ } \\
 \hline
\end{tabular}
\caption{
Mapping of an ideal circuit to a simulated circuit.
Note that erasures occur at the locations in between ideal operations; see Fig.~\ref{fig_circuits}(b).
For concreteness, we choose $\mc P(p)$ to be the 1Q or 2Q depolarizing channel (determined by its support) with error rate $p$, and $\mc N(q)$ to be the binary symmetric channel (that flips the measurement outcome) with error rate $q$.
In general, $\mc P$ and $\mc N$ can represent arbitrary Pauli and binary channels.
}
\label{tab_ideal2simulated}
\end{table}

In contrast to standard QEC protocols, with erasure qubits not only do we have Pauli noise that affects quantum circuits but also erasures.
Erasures are probabilistic processes happening at erasure locations and taking the state from the computational subspace to some orthogonal subspace, which we refer to as the erasure subspace.
Furthermore, erasures can spread (probabilistically) via 2Q operations.
We envision the following two scenarios.
\begin{itemize}
    \item Erasure-erasure spread: an erasure spreads to another erasure.
    \item Erasure-Pauli spread: an erasure spreads to a Pauli error.
\end{itemize}
One concrete realization of the erasure-Pauli spread, which we refer to as the erasure-depolarization spread, is when an erasure spreads to either Pauli $X$, or $Y$, or $Z$, each with probability $1/4$.
In other words, any qubit affected by an erasure causes full depolarization of any other qubit that is involved in the same 2Q operation.
In the rest of the article we focus on the erasure-depolarization spread.

The key part of QEC protocols with erasure qubits is the ability to detect erasures.
Erasure detection can be achieved by either readout or erasure checks.
Each erasure check performs a nondestructive measurement that distinguishes the states in the computational and erasure subspaces.
For simplicity, in our numerical simulations in Sec.~\ref{sec:simulationresults} and Sec.~\ref{sec_smallest} we assume that each erasure check is immediately followed by reset of the erasure qubit that reinitializes it in the computational subspace (and do not include an erasure location in between the erasure check and reset).
We envision the following two scenarios.
\begin{itemize}
    \item Conditional (active) reset that depends on the outcome of the erasure check (and possibly other previous erasure checks).
    \item Unconditional (passive) reset that is independent of any erasure check outcome.
\end{itemize}
For concreteness, in either case we assume that the erasure qubit is reinitialized in the maximally mixed state in the computational subspace.
We further assume that reset acts trivially on the computational subspace.
In the rest of the article we focus on unconditional reset.

We emphasize that in our formalism we assume that the operations (i)-(vii) do not create coherences between the computational and erasure subspaces.
This, in turn, allows us to efficiently sample from erasure circuits; see Sec.~\ref{sec:sampling} for details.
Also, our assumption about the erasure-depolarization spread allows us to push erasures through entangling operations and model entangling operations as always acting on the computational subspace.
We leverage this observation to express the decoding problem for erasure circuits as the hypergraph matching problem; see Sec.~\ref{sec:decode} for details.

Our approach generalizes straightforwardly to leakage simulations as long as there are no coherences between the computational and leakage subspaces (regardless of the number of leaked levels), the only difference being the lack of erasure check operations; see Fig.~\ref{fig:mainresults}(b)(e) and Fig.~\ref{fig:crosssectionssupplementary}.
When there are coherences, the simulation may become less efficient because we might not be able to use the stabilizer formalism to describe the relevant states.

\subsection{Decoding problem for erasure circuits}
\label{sec:decode}

Let us first describe the decoding problem for stabilizer circuits.
In addition to the stabilizer circuit, we also need to specify a distribution of Pauli errors that are placed at \emph{spacetime locations} between operations of the circuit and a set of \emph{detectors} $\{V_i\}$.
By definition, a detector is a product of measurement outcomes in the circuit that is deterministic in the absence of errors and gives information about possible errors when triggered.
The stabilizer circuit may describe the implementation of, e.g., a stabilizer code, where detectors are products of consecutive stabilizer measurements, or a Floquet code, where detectors are more complicated.
Given detectors which are triggered after running the circuit, the decoding problem is to find a Pauli recovery which undoes the errors that occurred.

The decoding problem for stabilizer circuits can be phrased as a hypergraph matching problem.
This formulation is particularly useful when the distribution of Pauli errors is either equal to or approximated by a product distribution of binary random variables, often referred to as \emph{error mechanisms}.
By definition, an error mechanism is a pair $(P_i,p_i)$ such that the Pauli error $P_i$ is inserted at specified spacetime locations in the circuit with probability $p_i$.
When $P_i$ occurs, it causes a subset of detectors $\mc T_i\subseteq \{V_i\}$ to be triggered.
If we then define a weighted hypergraph $H=(\{V_i\}, \{\mc T_i\})$, where each hyperedge $\mc T_i$ has weight $w(\mc T_i) = \log((1-p_i)/p_i)$,
then the problem of finding the most likely error triggering a subset of detectors is equivalent to the minimum-weight hypergraph matching problem on $H$, i.e., for a given subset of vertices $\nu\subseteq \{V_i\}$, find a subset of hyperedges $\tau\subseteq \{\mc T_i\}$ with lowest total weight $\sum_{\mc T\in\tau} w(\mc T)$, such that $\bigoplus_{\mc T\in\tau} \mc T = \nu$, where $\oplus$ denotes the symmetric difference of sets.
The recovery operator is the product of all Pauli errors (propagated to the end of the circuit) that correspond to the hyperedges in $\tau$.

The decoding problem for \emph{erasure circuits} is still to find a Pauli recovery.
This time, in addition to the triggered detectors, we also have erasure check outcomes.
In what follows, we describe a mapping of erasure circuits to stabilizer circuits, where erasures are converted into independent Pauli error mechanisms.
This mapping, in turn, allows us to phrase the decoding problem for erasure circuits as a hypergraph matching problem.

We decompose the erasure circuit $C_E$ into \emph{segments}.
By definition, a segment $s$ is the worldline of a single qubit $q$ between two consecutive reset operations~\footnote{For brevity, we also refer to state preparation and readout as reset operations.}; see Fig.~\ref{fig:circuits_decoder}(a) for an illustration.
We also define the entangling operations of $s$ as those with nontrivial support on $s$ and the spacetime locations associated with $s$ as those immediately following the entangling operations of $s$.
To map $C_E$ to a stabilizer circuit with independent Pauli error mechanisms, we modify each segment $s$ of $C_E$ by removing the erasure checks and reset operations in $s$ and add appropriate error mechanisms at the locations associated with $s$.
This mapping is guaranteed by Lemma~\ref{thm:segmentmapping}.

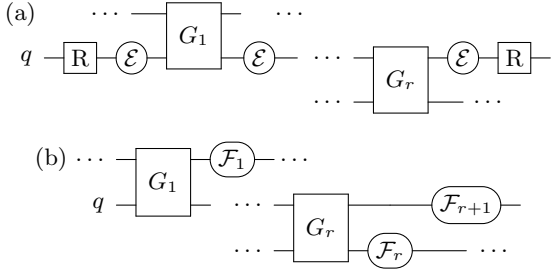
\begin{figure}[t!]
\centering
(a) $\Qcircuit @C=.8em @R=.7em {
&  & \lstick{\ldots} & \multigate{1}{G_1} & \ghost{I} & \lstick{\ldots}\\
\lstick{q} & \gate{\mathrm R} & \measure{\mc E} & \ghost{G_1} & \measure{\mc E} & \ghost{I} & \lstick{\ldots} & \multigate{1}{G_{r}} & \measure{\mc E} & \gate{\mathrm{R}} & \qw \\
& & & & & & \lstick{\ldots} & \ghost{G_{r}} & \qw & \lstick{\ldots}}$\\
\vspace*{10pt}
(b)\qquad $\Qcircuit @C=.8em @R=.4em {
\lstick{\ldots} & \multigate{1}{G_1} & \measure{\mc F_1} & \qw & \lstick{\ldots}\\
\lstick{q} & \ghost{G_1} & \ghost{G} & \lstick{\ldots} & \multigate{1}{G_{r}} & \qw & \measure{\mc F_{r+1}} & \qw\\
& & & \lstick{\ldots} & \ghost{G_{r}} & \measure{\mc F_{r}} & \qw & \lstick{\ldots}}$
\caption{
(a) A segment $s$ of the qubit $q$ in the erasure circuit $C_E$ with entangling operations $G_i$ and erasure locations $\mc E$.
(b) A stabilizer circuit $C$ with spacetime locations $\mc F_i$.
By placing spacetime correlated Pauli errors at $\overline{\mc F_i}$ with appropriate probabilities, $C$ becomes equivalent to $C_E$; see Algorithm~\ref{alg:erasuretostabilizercircuit} for details.}
\label{fig:circuits_decoder}
\end{figure}

\begin{lemma}
    \label{thm:segmentmapping}
    Let $s$ be a segment of an erasure circuit $C_E$.
    Given the outcomes $\vec{d}$ of erasure checks in $s$, the distribution of errors introduced by erasures in $s$ is equivalent to a distribution of spacetime correlated Pauli errors $\mc P$  that can be described by independent error mechanisms $\{(P_{i,j},p_i)\}$.
\end{lemma}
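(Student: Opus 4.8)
The plan is to reduce the claim to two structural facts about erasures inside a single segment: erasures are \emph{monotone} within $s$ (once the qubit $q$ enters the erasure subspace at some erasure location it stays there until the terminating reset), and, under the erasure-depolarization spread, an erased qubit acts on every downstream entangling operation of $s$ as a \emph{full} depolarizing source. First I would fix notation by listing the erasure locations $\mc E_0,\mc E_1,\ldots$ of $s$ in time order and, for each $k$, writing down the Pauli error $E_k$ produced when the \emph{first} erasure of the segment occurs at $\mc E_k$: a uniformly random Pauli on $q$ inserted at $\mc E_k$ and propagated by the fixed Clifford of the circuit to the spacetime locations $\mc F_i$, together with an independent uniformly random (fully depolarizing) Pauli on the partner of each entangling operation $G_j$ with $j>k$, likewise propagated. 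Because the operations (i)--(vii) create no coherence between the computational and erasure subspaces, the true error distribution of $s$ is exactly the classical mixture $\sum_k \Pr[\text{first erasure at }k]\,E_k$ together with the no-erasure term.

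The key step is to observe that this mixture is \emph{already} of product form and that the product structure survives conditioning. I attach to location $k$ the mechanism $M_k := E_k$, fired independently with probability $p_k$. Independence is legitimate, despite the physical fact that only the earliest erasure matters, because of the \emph{absorbing} property of full depolarization: the Pauli group modulo phase is a finite abelian group, so a product of independent uniformly random Paulis is again uniformly random. Hence if several $M_k$ fire with smallest index $k^\star$, their product is distributed exactly as $E_{k^\star}$ on every affected location (downstream partners are re-uniformized and the $q$ worldline is uniform from $\mc E_{k^\star}$ onward). Moreover the $E_k$ have monotonically nested supports (up to the $q$ worldline), and this nesting is precisely what allows a monotone correlation to be reproduced by independent mechanisms; I would then pin down the probabilities by matching, setting $p_k$ so that $\prod_k (1-p_k)$ reproduces the no-erasure probability and the telescoping marginals reproduce $\Pr[\text{first erasure}\le k]$.

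Next I would incorporate the conditioning on $\vec d$. Monotonicity forces the true erasure history to be a single step function, so $\vec d$ localizes the unique transition to one interval and splits $s$ into a known-not-erased region (where the corresponding $p_k$ are set to $0$), a known-erased region (where downstream partners are depolarized with certainty), and at most one uncertain interval (where the $p_k$ take appropriate \emph{conditional} values). In every case the nested-support structure is retained, so the conditional error distribution is again a product of independent mechanisms with adjusted probabilities; the classical readout errors $\mc N(q)$ on the erasure checks enter as further independent binary mechanisms and do not disturb the factorization. Finally I would replace each random-Pauli mechanism $M_k$ by genuinely \emph{fixed}-Pauli mechanisms: a full depolarization of a set of locations is a symmetric Pauli channel whose only nontrivial eigenvalue is $1-p_k$, and such a channel factorizes into independent fixed-Pauli error mechanisms $(P_{i,j},p_i)$ (for instance one mechanism per nonzero Pauli pattern with a common probability), which produces the claimed family.

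I expect the main obstacle to be the conditioning step. Conditioning the naive per-location Bernoulli variables on the event ``an erasure occurred somewhere in the uncertain interval'' is an OR-event that destroys their independence, so the argument cannot simply condition the $M_k$; instead it must show that the \emph{error} distribution --- which by the absorbing property depends only on the first-erasure location --- re-factorizes into independent mechanisms with new probabilities. Carrying the bookkeeping of these conditional probabilities (and checking they remain valid probabilities in $[0,1]$, which follows from the monotonicity of $\Pr[\text{first}\le k]$) is the delicate part. The erasure-depolarization-spread and no-coherence assumptions are exactly what make the propagated Paulis perfectly uniform, and hence the absorbing argument exact rather than merely approximate.
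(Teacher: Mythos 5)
Your proposal is correct and follows essentially the same route as the paper: the paper likewise converts the disjoint ``first erased between $G_{i-1}$ and $G_i$'' events into independent events via the telescoping recursion $b_i = a_i\prod_{j<i}(1-b_j)^{-1}$, using exactly your absorbing observation that composing full depolarizations on the nested supports $\overline{\mc F_i}\supseteq\overline{\mc F_j}$ reproduces the larger one, and then splits each depolarizing event into $4^{|\overline{\mc F_i}|}-1$ fixed-Pauli mechanisms of equal probability (citing Claim~1 of Ref.~\cite{Chao2020surfacecode}). The only organizational difference is that the paper conditions on $\vec d$ at the outset, defining the posteriors $a_i=\Pr(A_i\mid\vec d)$ before any factorization, which dissolves the conditioning difficulty you flag as the main obstacle.
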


\begin{proof}
The proof proceeds in three steps.
First, we find the distribution of Pauli errors caused by erasures in the segment.
This distribution can be described by disjoint events which are correlated depolarizing channels applied at different spacetime locations in $C_E$, caused by the erasure-depolarization spread.
Second, we show that this distribution can also be described by a product of \emph{independent} events which are spacetime correlated depolarizing channels.
Third, we decompose each of these spacetime correlated depolarizing channels into independent error mechanisms.

We start by introducing some notation.
Let $G_i$ be the $i$-th entangling operation in $s$ and $\mc F_i$ be the spacetime location associated with $s$ placed after $G_i$ defined via $\supp{\mc F_i} = \supp{G_i}\setminus \{ q\}$, where $i\in \{1,\ldots, r\}$ and $q$ labels the qubit identified with $s$; see Fig.~\ref{fig:circuits_decoder}.
For convenience, we use $G_0$ and $G_{r+1}$ to denote the first and second reset operations in $s$, and define $\mc F_{r+1}$ to be the location at $q$ after the second reset.
In the case when $G_i$ is a 2Q projective measurement, we imagine the classical bit containing the outcome to be a qubit and $\mc F_i$ to include that qubit.
We also define
\begin{equation}
\overline{\mc F_i} = \bigcup_{j=i}^{r+1} \mc F_j\, .
\end{equation}

We want to find the distribution of Pauli errors $\mc P$ caused by erasures in $s$.
Let $A_i$ be the event that it was first erased at any location in between $G_{i-1}$ and $G_i$, where $i\in\{1,\ldots,r+1\}$.
When $A_i$ occurs, it causes all qubits connected to $q$ through subsequent entangling operations to be fully depolarized, i.e., fully depolarizing channels are added at spacetime locations $\overline{\mc F_i}$.
Note that $\{A_i\}$ are disjoint events.
Given the erasure check outcomes $\vec{d}$ (whose probability distribution depends on the erasure probabilities and the false positive and negative detection rates of the erasure checks in the segment $s$), we can calculate the posterior probabilities
\begin{equation}
    a_i = \Pr\left(A_i\middle|\vec{d}\right)\, .
\end{equation}
We then obtain the distribution of Pauli errors $\mc P$ by sampling disjoint events with probability $a_i$ and inserting fully depolarizing channels at $\overline{\mc F_i}$ whenever the corresponding event is sampled.

In the description of $\mc P$, we have disjoint events
rather than independent ones.
To obtain the desired description, we show that $\mc P$ can also be obtained by sampling $r+1$ independent events $\{B_i\}$, where $B_i$ is defined as a binary random variable with probability 
\begin{equation}
    b_{i} = a_{i}\prod_{j=1}^{i-1}(1-b_j)^{-1}\, ,
\end{equation}
and placing fully depolarizing channels at spacetime locations $\overline{\mc F_i}$ whenever $B_i$ is sampled.
To do that, observe that for $i<j$ a composition of fully depolarizing channels at $\overline{\mc F_i}$ and $\overline{\mc F_j}$ is equivalent to the fully depolarizing channels at $\overline{\mc F_i}$, since $\overline{\mc F_i}\supseteq\overline{\mc F_j}$.
Therefore, fully depolarizing channels are placed exclusively at $\overline{\mc F_i}$ iff we sample $B_i$ but no other $B_j$ for $j<i$, which happens with probability
\begin{equation}
b_i \prod_{j=1}^{i-1} (1-b_j) = a_{i} = \Pr\left(A_{i}\middle|\Vec{d}\right)\, .
\end{equation}

Finally, we further decompose the depolarizing channels resulting from the events $\{B_i\}$ into independent error mechanisms.
Namely, for each $B_i$ we introduce $4^{\left|\overline{\mc F_i}\right|}-1$ error mechanisms, each corresponding to a different nontrivial Pauli error $P_{i,j}$ that can be placed at spacetime locations $\overline{\mc F_i}$  with probability
\begin{equation}
    p_i = \tfrac 1 2 - \tfrac 1 2 \left(1 - b_i\right)^{2^{1-2\left|\overline{\mc F_i}\right|}}\, .
\end{equation}
The resulting product distribution of independent error mechanisms $\{(P_{i,j},p_i)\}$ is equivalent to $\mc P$.
To show that, we use the same reasoning as in Claim 1 of Ref.~\cite{Chao2020surfacecode}.
The analysis there considered Pauli errors on $m$ qubits which occur at the same time, but the only necessary ingredient is that the set of possible errors under composition forms a group isomorphic to $\bb Z_2^{2m}$ which is true in our case with $m=|\overline{\mc F_i}|$.
This concludes the proof.
\end{proof}

For the reader's convenience, we recap the conversion of an erasure circuit into a stabilizer circuit with independent error mechanisms in Algorithm~\ref{alg:erasuretostabilizercircuit}.

\linespread{1.12}\selectfont
\begin{algorithm}[H]
    \caption{Conversion of an erasure circuit to a stabilizer circuit with independent error mechanisms}
    \label{alg:erasuretostabilizercircuit}
    \textbf{Input:}\\
    erasure circuit $C_E$, erasure check outcomes $\Vec{d}$\\
    \textbf{Output:} \\stabilizer circuit $C$, error mechanisms $\{(P_{i,j},p_i)\}$
    \begin{algorithmic}[1]
        \State $S \gets \{\text{segments in }C_E\}$
        \ForEach{$s\in S$}
            \State $\{\mc F_i\} \gets $ spacetime locations associated with $s$
            \ForEach{$i$}
                \State $a_i \gets \Pr\left(A_i \middle| \vec{d}\right)$
                \State $b_i \gets a_{i}\prod_{j=1}^{i-1}(1-b_j)^{-1}$
                \State $p_i \gets \frac 1 2 - \frac 1 2 \left(1 - b_i\right)^{2^{1-2\left|\overline{\mc F_i}\right|}}$
                \ForEach{nontrivial Pauli error $P_{i,j}$ at $\overline{\mc F_i}$}
                    \State include error mechanism $(P_{i,j},p_i)$
                \EndForEach
            \EndForEach
        \EndForEach
        \State $C \gets \text{$C_E$ with deleted erasure checks and reset}$
        \State \Return $C$, $\{(P_{i,j},p_i)\}$
    \end{algorithmic}
\end{algorithm}
\linespread{1}\selectfont

We finish this subsection with a few remarks.
\begin{itemize}
    \item Depolarizing noise can be decomposed into independent error mechanisms~\cite{Chao2020surfacecode}.
    For arbitrary Pauli channels an exact decomposition may not exist, but one may use an approximate one~\cite{delfosse2023splitting}.
    \item In many scenarios, e.g., the surface code or Floquet codes, the distribution of Pauli errors can be further approximated by a product of independent error mechanisms that trigger at most two detectors.
    Consequently, the decoding problem reduces to the graph matching problem, which is efficiently solvable~\cite{Edmonds1965}.
    \item The number of error mechanisms added for each segment $s$ in Algorithm~\ref{alg:erasuretostabilizercircuit} is exponential in the length of $s$.
    If reset operations occur at constant time intervals in the erasure circuit $C_E$, then the total number of added error mechanisms is proportional to the size of $C_E$.
    \item If reset operations occur between every entangling operation in the erasure circuit (as is the case in our numerical simulations), then the resulting error mechanisms are not time correlated and can simply be described by depolarizing channels; see Appendix~\ref{app:example}.
    \item For simplicity, we focused on the scenario of the erasure-depolarization spread.
    The analysis for the deterministic erasure-erasure spread is similar. However, the probability of erasure between entangling operations may need to be conditioned on many erasure check outcomes, not just within one segment.
\end{itemize}

\subsection{Sampling erasure circuits}
\label{sec:sampling}

We envision two ways of sampling from an erasure circuit for simulation purposes. In the first method, we use one bit of information per qubit to represent its erasure state. This allows us to efficiently simulate erasures, erasure checks and reset operations by sampling from the correct probabilities and updating this classical data. For the qubits in the computational subspace, we keep track of the stabilizers and update them after standard stabilizer circuit operations using the Gottesman-Knill theorem~\cite{Gottesman1998,Aaronson2004}. When a qubit is erased, we randomize measurement outcomes involving it and depolarize qubits that interact with it through entangling operations.

Alternatively, we may simulate the circuit by first sampling all erasure check detection events.
Conditioned on the results, we then use the results of Sec.~\ref{sec:decode} to obtain stabilizer circuits that we can sample from.
Each independent error mechanism in the resulting circuit can be simulated by adding a classical bit that determines whether or not the corresponding error is applied.
However, this is often unnecessary because if the Pauli errors occur at the same time (as is the case for us; see Appendix~\ref{app:example}), it can be simulated by depolarizing channels.
The advantage of this approach is that after obtaining the erasure detection events, the same circuit can be used for sampling and decoding.

\section{Scalable architecture with erasure qubits}

In this section, we describe how a realization of erasure qubits via the dual-rail encoding naturally leads to a fault-tolerant architecture based on Floquet codes (which we briefly overview).
Our approach is particularly well-suited for superconducting circuits.
We also discuss possible hardware implementations of physical operations needed to realize Floquet codes.

\subsection{From erasure qubits to Floquet codes}

\begin{figure}[ht]
\centering
\includegraphics[width=.9\columnwidth]{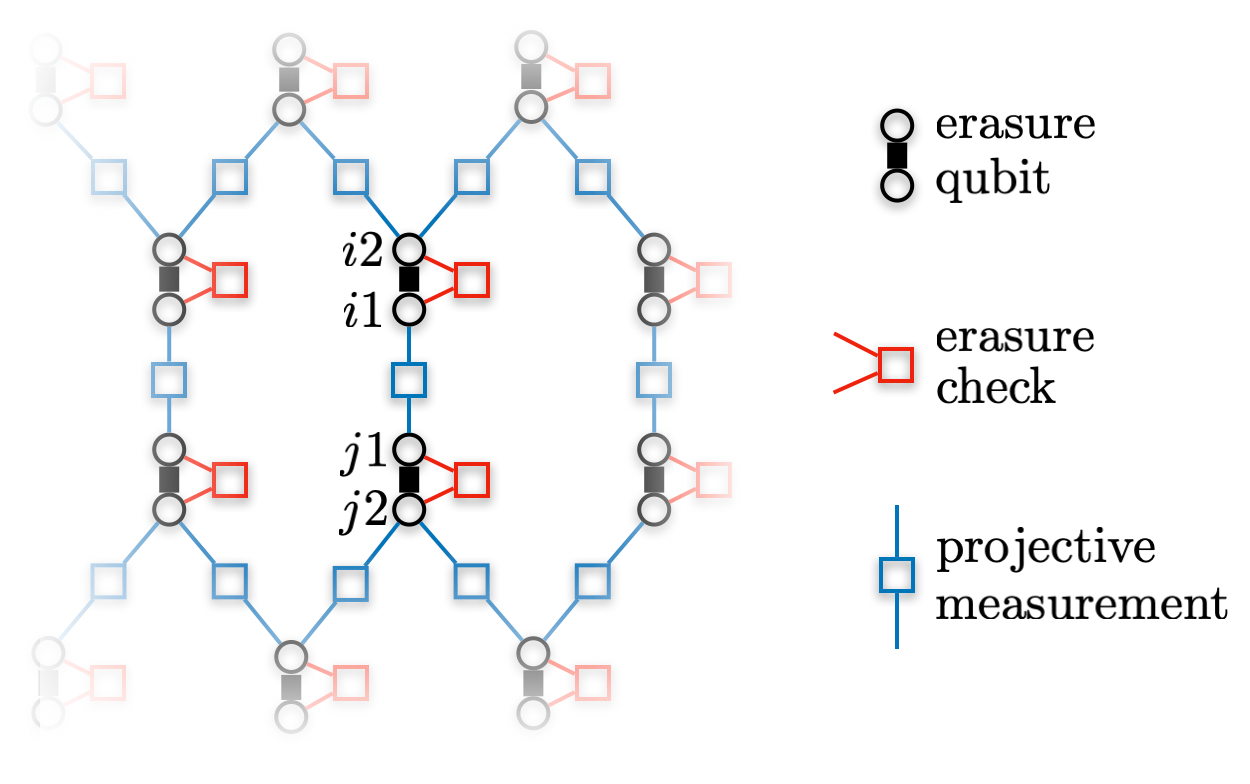}
\caption{
A planar layout of erasure qubits realized via the dual-rail encoding.
Projective measurements of Pauli $Z_{i1}Z_{i2}$ and $Z_{i1}Z_{j1}$ operators implement, respectively, an erasure check for the erasure qubit $i$ and a Pauli $\overline{Z_i Z_j}$ measurement on the computational subspace of two erasure qubits $i$ and $j$.
This layout is well-suited for Floquet codes, benefiting from their low qubit connectivity.
The erasure qubit can be realized by two coupled transmons, the erasure check via an LC element~\cite{Kubica2023} and the projective measurement via a single transmon; see Appendix~\ref{Implementation}.
}
\label{fig_physical}
\end{figure}

One of the simplest ways to realize the erasure qubit is via the dual-rail encoding
\begin{equation}
\ket{\overline 0} \mapsto \ket{01},\quad \ket{\overline 1} \mapsto \ket{10},
\end{equation}
which is particularly well-suited for superconducting circuits.
Namely, the dominant noise for this quantum computing platform is the amplitude damping noise~\cite{Yan2018,Burnett2019} that describes the energy relaxation from the excited state $\ket 1$ to the ground state $\ket 0$.
A single amplitude damping event can be detected as it maps any state of the erasure qubit to the state $\ket{00}$ which is orthogonal to the computational subspace $\mathrm{span}\{\ket{01},\ket{10}\}$.
Consequently, the effective noise afflicting the qubit is dominated by detectable erasures.
We remark that a few recent experiments demonstrated the erasure qubit via the dual-rail encoding using either two transmons~\cite{Levine2023} or two 3D cavities~\cite{Chou2023,Koottandavida2023}.

Observe that a projective measurement of a Pauli $Z_{i1}Z_{i2}$ operator, where $i1$ and $i2$ label two qubits forming the erasure qubit $i$ via the dual-rail encoding, is sufficient to implement an erasure check; see Fig.~\ref{fig_physical}.
Namely, a $+1$ measurement outcome implies that the state is outside the computational subspace $\mathrm{span}\{\ket{01},\ket{10}\}$, and the erasure qubit has suffered from an erasure.
However, a projective measurement of a Pauli $Z_{i1} Z_{j1}$ (or $Z_{i2} Z_{j2}$) operator supported on qubits from two different erasure qubits $i$ and $j$ implements a Pauli $\overline{Z_i Z_j}$ measurement on their computational subspace.
Therefore, the ability to perform projective measurements of Pauli $ZZ$ operators, together with single-qubit Hadamard $\overline H$ and phase $\overline S$ gates on the computational subspace, is sufficient to implement erasure checks and Pauli $\overline{XX}$, $\overline{YY}$, $\overline{ZZ}$ measurements on the computational subspace.
This, in turn, allows us to implement Floquet codes with erasure qubits (where we implicitly assume the capability of single-qubit state preparation and readout in the computational basis); see Fig.~\ref{fig_physical} for an illustration.

\subsection{Floquet codes}
\label{sec_floquet}

\begin{figure}[ht!]
\centering
\hspace*{-4mm}\begin{minipage}{0.6\columnwidth}
\includegraphics[width=\linewidth]{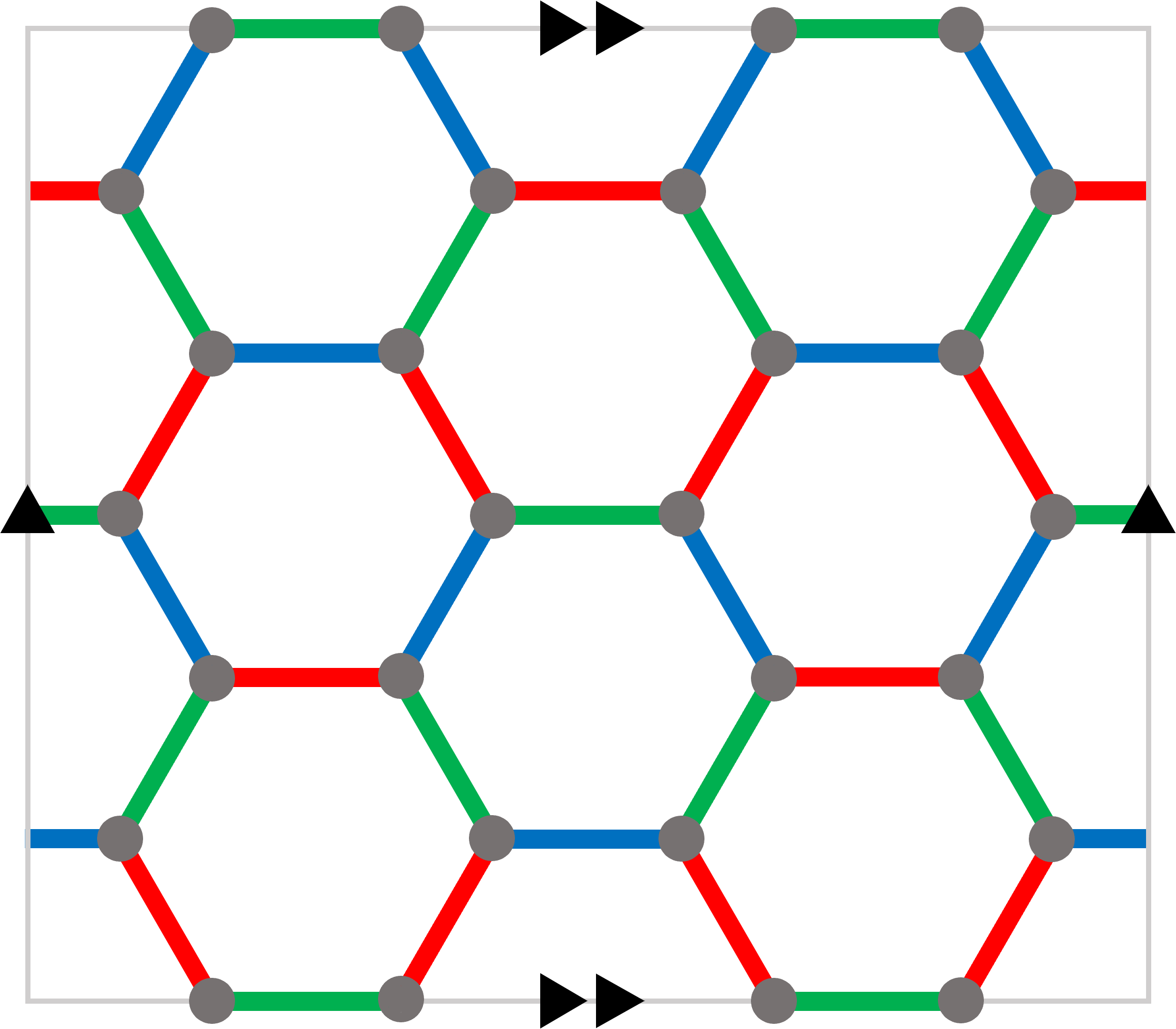}
\end{minipage}\quad
\begin{minipage}{0.3\columnwidth}
\begin{tabular}{|c|c|}
    \hline
    {\bf round} & {\bf measure}\\
    \hline\hline
    $t\equiv 0$ & $X$\includegraphics[width=0.35\columnwidth]{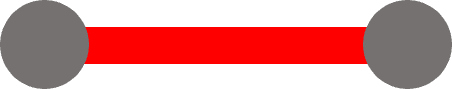}$X$\\
    \hline
    $t\equiv 1$ & $Z$\includegraphics[width=0.35\columnwidth]{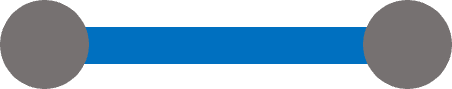}$Z$\\
    \hline
    $t\equiv 2$ & $X$\includegraphics[width=0.35\columnwidth]{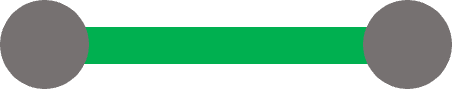}$X$\\
    \hline
    $t\equiv 3$ & $Z$\includegraphics[width=0.35\columnwidth]{figs/rededge.png}$Z$\\
    \hline
    $t\equiv 4$ & $X$\includegraphics[width=0.35\columnwidth]{figs/blueedge.png}$X$\\
    \hline
    $t\equiv 5$ & $Z$\includegraphics[width=0.35\columnwidth]{figs/greenedge.png}$Z$\\
    \hline
\end{tabular}
\end{minipage}
\caption{A (graph-based) Floquet code can be defined on a hexagonal lattice with periodic boundary conditions.
The code is realized via a sequence of measurements of two-qubit Pauli operators (depicted as red, blue and green edges) on neighboring qubits (gray dots).
A sequence of measurements described in the table gives rise to the CSS honeycomb code.
}
\label{fig_layout_st}
\end{figure}

The first and arguably simplest example of Floquet codes is the honeycomb code~\cite{Hastings2021,Haah2022}, which is defined on a hexagonal lattice with either periodic or open boundary conditions.
The honeycomb code is realized by placing qubits on the vertices $V$ and measuring two-qubit Pauli operators associated with the edges $E$ in a specified sequence.
Namely, Pauli $XX$, $YY$ and $ZZ$ operators are associated with red, blue and green edges, respectively, and are measured at a round $t \mod 3 = 0,1,2$.

One way to generalize the honeycomb code, which we refer to as a \emph{graph-based} Floquet code, is by defining a QEC code based on a connected graph $G = (V,E)$.
We require that the vertices $V$ are three-valent and the edges $E$ are three-colorable, i.e., the edges split into three sets, $E = E_0\sqcup E_1\sqcup E_2$, and no two different edges from $E_i$ are incident.
We place qubits on the vertices $V$ and consider a measurement sequence of period three, where at a round $t\mod 3 = 0,1,2$ we measure Pauli $XX$, $YY$ and $ZZ$ operators associated with edges in $E_0$, $E_1$ and $E_2$, respectively.
The definition of graph-based Floquet codes is motivated by the possibility of having a native implementation of two-qubit Pauli measurements with erasure qubits; it also guarantees low qubit connectivity.
We remark that our exhaustive search in Sec.~\ref{sec_smallest} finds the the smallest graph-based Floquet codes with distance two and four.

We can also consider a CSS version of graph-based Floquet codes, defined using a period-six measurement sequence; see Fig.~\ref{fig_layout_st} for an illustration of the CSS honeycomb code~\cite{Davydova2023}.
In what follows, we mostly focus on CSS Floquet codes, as they outperform the non-CSS counterparts; see Sec.~\ref{sec:simulationresults}.

So far, we have only discussed examples of Floquet codes without defining them.
The foundational idea behind Floquet codes is that logical information is encoded in a dynamically evolving codespace.
Consequently, a Floquet code $\mc C$ can be defined by a sequence of measurements rounds $\mc M_0, \mc M_1,\dots$, where each round $\mc M_i$ consists of a set of commuting Pauli operators.
From that perspective, Floquet codes are synonymous with a sequence of code switchings~\cite{Bombin2009,Horsman2012,Paetznick2013,Bombin2015,Kubica2015,Vuillot2019} or dynamic automorphism codes~\cite{davydova2023DAcodes}.
Note that the operators from $\mc M_i$ and $\mc M_j$ may not commute for $i\ne j$.
In each round, the codespace is stabilized by an instantaneous stabilizer group (ISG) $\mc S_i$, which is an abelian subgroup of the Pauli group not containing $-I$.
Measuring new operators in $\mc M_i$ takes the previous codespace with ISG $\mc S_{i-1}$  into a new codespace stabilized by $\mc S_i$.
The new ISG $\mc S_i$ is generated by $\mc M_i$ along with all elements of $\mc S_{i-1}$ that commute with the new measurements.
We remark that stabilizer~\cite{Gottesman1997} and subsystem~\cite{Poulin2005} codes correspond to Floquet codes with a measurement sequence of period one and two, respectively.

We can specify the code parameters of a Floquet code $\mc C$ as follows.
The ISG $\mc S_i$ can be viewed as a stabilizer code with $k_i\geq 0$ logical qubits.
The sequence $k_0,k_1,\ldots$ is nonincreasing, and therefore becomes a constant after some number of measurement rounds.
We thus define the number of logical qubits of $\mc C$ to be $k_\mc C = \lim_{i\rightarrow\infty} k_i$. 
The distance of $\mc C$ should be defined as the circuit distance, i.e., the smallest number of spacetime faults that are undetectable yet cause a logical operator to be applied, which depends on the details of the syndrome extraction circuit.
For simplicity, we instead consider the distance to be the minimum distance of the stabilizer code from any ISG (which provides an upper bound on the circuit distance)~\footnote{Note that the families of Floquet codes considered in Sec.~\ref{sec:simulationresults} have a growing circuit distance which is proportional to the distance.
Such scaling of the circuit distance does not hold in general.}.

\subsection{Implementation of erasure checks and Pauli $\overline{ZZ}$ measurements}
\label{sec:ZZmeasurements}

\begin{figure}[h!]
    \centering
    \includegraphics[width=0.9\columnwidth]{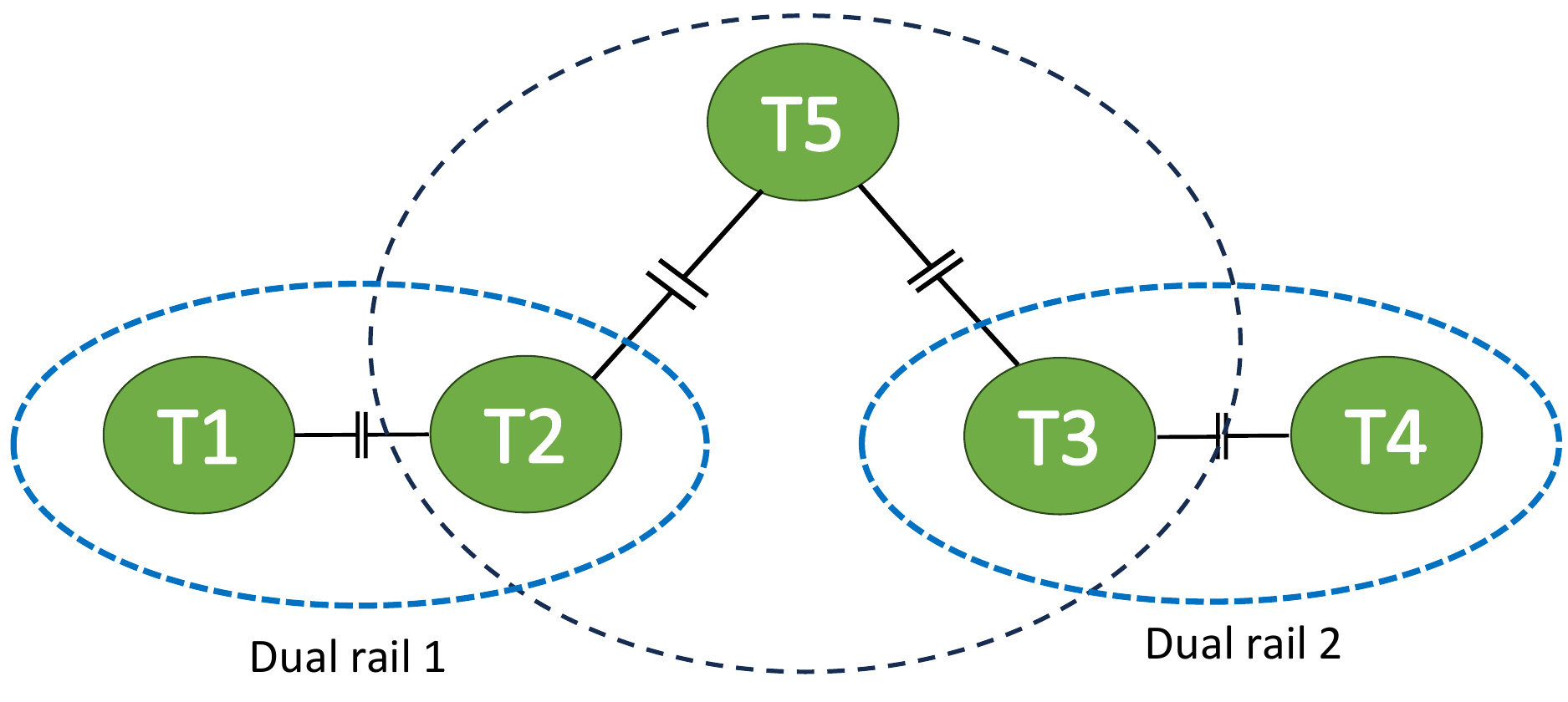}
    \caption{
    Scheme for a parity measurement of two dual-rail qubits, which are composed of transmons T1, T2 and T3, T4, respectively.
    The ancilla transmon T5 is coupled to T2 and T3.
    In this hardware-efficient construction, the utilization of a coupler is avoided by using T5 for measurement without reducing the efficiency of the procedure.
    A parity measurement could be realized by modulating the coupler energy gap at the frequency of the dual-rail qubit's gap.}
    \label{fig:DR_Yale1}
\end{figure}

Floquet codes with erasure qubits crucially rely on three operations, erasure checks, single-qubit gates on the computational subspace and projective measurements of Pauli $\overline{ZZ}$ operators.
Since an erasure check is an extra operation that is not typically considered (on top of state preparation, entangling gates and Pauli measurements), it constitutes an additional hurdle to overcome.
We mentioned that for erasure qubits via the dual-rail superconducting encoding erasure checks may, in principle, be realized by a projective measurement of the Pauli operator $ZZ$, but this is simplistic.
Instead, there are efforts to design erasure checks in an optimized way, for instance, by symmetrically coupling a readout resonator to two transmons~\cite{Kubica2023}.

Surprisingly, for dual-rail qubits projective measurements of Pauli $\overline{ZZ}$ operators might be efficiently and swiftly realized using a single transmon almost without paying the price for the transmon's low (compared to the dual-rail qubit's) coherence and amplitude damping time $T_1$.
We propose to do so by incorporating the ideas from the cavity dual-rail architecture~\cite{Teoh2023}.

\begin{figure*}[ht!]
\centering    
\includegraphics[trim={1cm 1cm 2cm 1cm},clip,width=.35\textwidth]{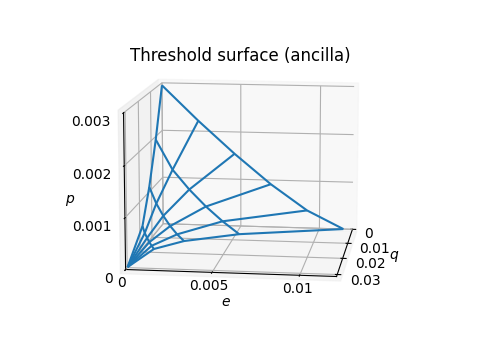}
\hspace*{-.32\textwidth}(a)\hspace*{.28\textwidth}
\includegraphics[width=.32\textwidth]{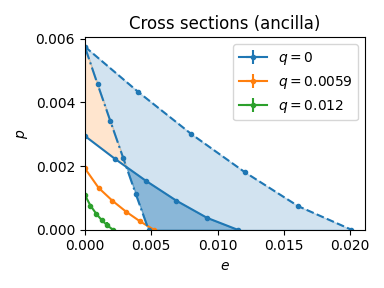}
\hspace*{-.33\textwidth}(b)\hspace*{.31\textwidth}
\includegraphics[width=.32\textwidth]{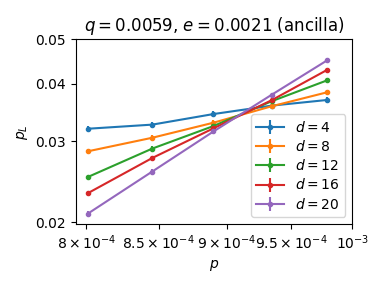}
\hspace*{-.33\textwidth}(c)\hspace*{.29\textwidth}\ \\
\includegraphics[trim={1cm 1cm 2cm 1cm},clip,width=.35\textwidth]{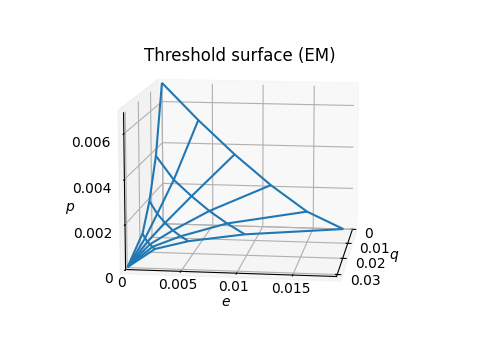}
\hspace*{-.32\textwidth}(d)\hspace*{.28\textwidth}
\includegraphics[width=.32\textwidth]{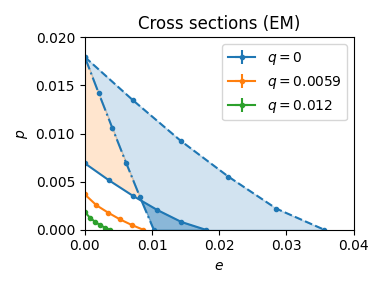}
\hspace*{-.33\textwidth}(e)\hspace*{.31\textwidth}
\includegraphics[width=.32\textwidth]{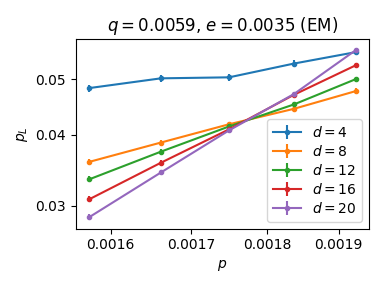}
\hspace*{-.33\textwidth}(f)\hspace*{.29\textwidth}
\caption{
Simulations of the CSS honeycomb code realized via the (a)-(c) ancilla and (d)-(f) EM schemes.
(a)(d) The threshold surface in the $(e, p, q)$ phase space, where $e$, $p$ and $q$ are the erasure, Pauli and measurement error rates, respectively.
(b)(e) Cross sections of the threshold surface for different values of $q$ (solid lines).
The dashed lines correspond to the scheme with erasure checks and reset that cause no additional errors, bounding the performance of any erasure scheme.
The dashed-dotted lines correspond to the standard scheme with no erasure checks
and ideal reset (also interpreted as the code's performance under leakage).
Erasure schemes can operate in a region (blue) where the standard scheme cannot.
Since erasure checks and reset cause additional errors, for a low erasure bias there is a region (orange), where the standard scheme may be better.
(c)(f) We find the thresholds by plotting the logical error rate $p_L$ for distance-$d$ codes as a function of $p$ or $e$, and fitting a finite-size scaling ansatz; see Appendix~\ref{app:simulationdetails}.
}
\label{fig:mainresults}
\end{figure*}

Concretely, the parity measurement could be realized by coupling a single transmon (which will be used as an ancilla) to two dual-rail qubits and modulating the flux on the transmon parametrically in resonance with the gaps of the dual-rail qubits; see Fig.~\ref{fig:DR_Yale1}.
Such a modulation realizes the following effective interaction $\frac{g_m}{2}a_5^\dagger a_5 \left(\overline{Z_1} + \overline{Z_2}\right)$, where 
$g_m$ is the interaction strength, $a_5$ is the ladder operator for the ancilla transmon and $\overline{Z_i}$ denotes Pauli $Z$ operator on the computational subspace of the $i$-th dual-rail qubit.
Since the transmon is only coupled to $\overline{Z_1} + \overline{Z_2}$, manipulating the ground and second excited states of the ancilla transmon allows for a robust parity measurement.
This method is resilient not only to the phase noise of the ancilla transmon but also to the amplitude damping noise.
Assuming that the coherence of the dual rail reaches a few milliseconds~\cite{Levine2023}, the main source of noise in this scheme is expected to be measurement idling dephasing, which should be less than $10^{-4}$; see Appendix~\ref{Implementation} for details.
By employing this scheme, we can directly implement the projective measurement of Pauli $\overline{ZZ}$ operators required for Floquet codes.

\section{Numerical simulations for Floquet codes}
\label{sec:simulationresults}

\begin{figure}[t]
\centering
(a)\includegraphics[width=.85\columnwidth]{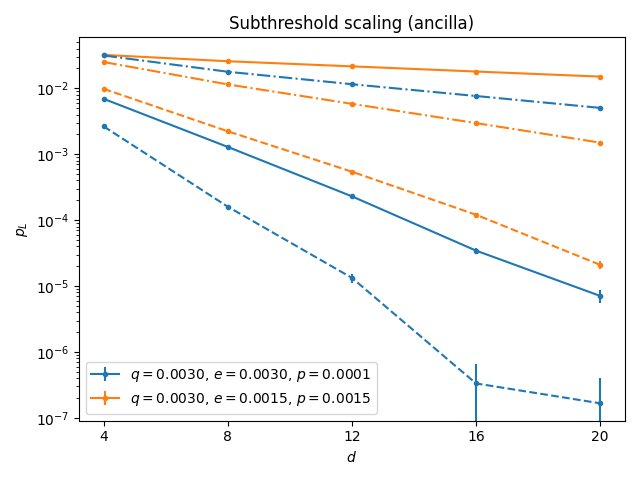}\\
(b)\includegraphics[width=.85\columnwidth]{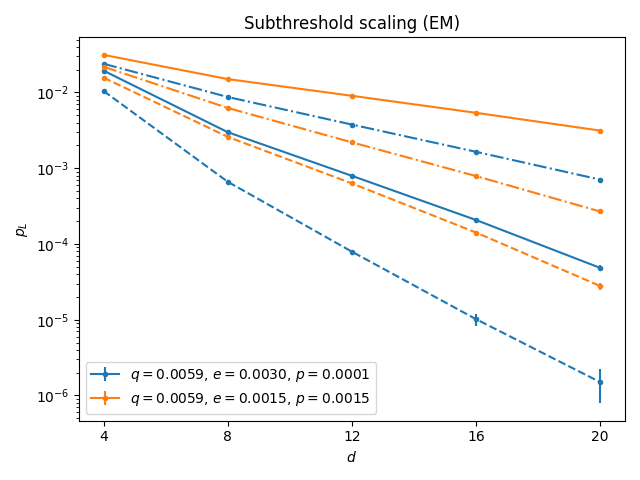}
\caption{
Subthreshold scaling of the logical error rate $p_L$ with distance $d$ for the (a) ancilla and (b) EM schemes.
We compare the results when checking for erasures after every entangling operation (solid), without performing any erasure checks (dashed-dotted), and in an ideal case where erasure checks introduce no errors (dashed), which gives an lower bound on the achievable $p_L$.
For a high erasure bias (blue), we obtain better suppression and scaling of $p_L$ by performing the erasure scheme; for a low erasure bias (orange), the standard scheme is better.}
\label{fig:subthreshold}
\end{figure}

We now describe the results of our numerical simulations of Floquet codes with erasure qubits.
Our simulations were performed using the second method of sampling described in Sec.~\ref{sec:sampling}.
After sampling erasure check detection events, we used the Python package Stim~\cite{gidney2021stim} to sample from the resulting stabilizer circuits.
For each sample, Stim outputs detectors that are violated along with the final value of a given logical operator and decomposes noise into error mechanisms that set off at most two detectors.
Thus, we decode using the method outlined in Sec.~\ref{sec:decode} by inputting this decoding graph along with the samples to the minimum-weight perfect matching decoder PyMatching~\cite{higgott2023pymatching}.
The decoder reports an error if after decoding, the value of the logical operator is different than when initialized. For a distance $d$ code, we calculate $p_L$, the logical error rate per $3d$ rounds. For more details of the simulation, see Appendix~\ref{app:simulationdetails}.

Our main numerical results are presented in Fig.~\ref{fig:mainresults}.
We simulate two ways of implementing the measurements of the CSS honeycomb code:
(i) the ancilla scheme using an ancilla qubit and two-qubit entangling gates as depicted in Fig.~\ref{fig_circuits} and
(ii) the 2Q entangling measurement (EM) scheme as described Sec.~\ref{sec:ZZmeasurements}.
In both scenarios we perform either erasure checks with reset or readout after each entangling operation.
We probe the $(e, p, q)$ phase space to determine the threshold surface
and find the correctable region where errors can be suppressed arbitrarily by increasing the code distance.

We remark that depending on the noise parameters it may be optimal to perform erasure checks less frequently than after every entangling operation.
Although we have not simulated all possible erasure check schedules, we find an upper bound for their thresholds by simulating the scheme with ideal erasure checks and reset.
In particular, the light blue and orange regions in Fig.~\ref{fig:mainresults}(b)(e) represent potential gains of the correctable region that may be achieved by optimizing the erasure check schedules.

In Fig.~\ref{fig:subthreshold}, we show how the logical error rate $p_L$ is suppressed by increasing code distance for error rates below threshold in the ancilla and EM schemes. We choose $e$, $p$ and $q$ to be in the correctable region for the erasure scheme from Fig.~\ref{fig:mainresults}.
These values are also comparable with the experimentally measured erasure and residual error rates of $0.4\%$ and $0.01\%$ per single-qubit gate, and the false positive and negative erasure detection rates of around $1\%$~\cite{Levine2023}.

\begin{figure}[t]
\centering
(a)\includegraphics[width=0.85\columnwidth]{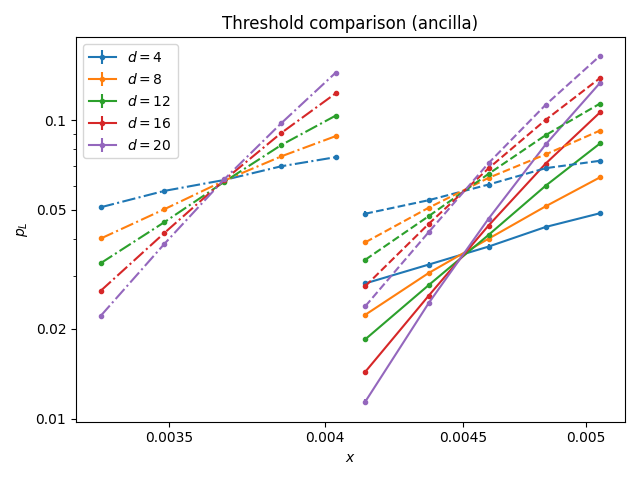}
\\
(b)\includegraphics[width=0.85\columnwidth]{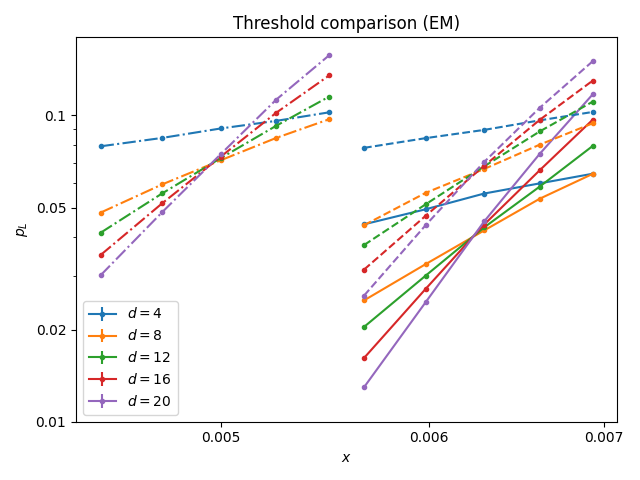}
\\
\caption{
Comparison of the thresholds for the (a) ancilla and (b) EM schemes.
We assume erasure-biased noise with a single parameter $x=q=e=10p$.
For the CSS honeycomb code, the standard (solid) and compact (dashed) layouts give the same threshold, with the latter having higher logical error rate $p_L$ for the same distance $d$.
The original honeycomb code (dashed-dotted) with the compact layout exhibits a lower threshold.
}
\label{fig:twistedlayout_numerics}
\end{figure}

In Fig.~\ref{fig:twistedlayout_numerics}, we present several optimizations where we find the threshold for the ancilla and EM schemes under erasure-biased noise characterized by a single parameter $x=q=e=10p$.
We consider two layouts: (i) the standard embedding of the hexagonal lattice on a torus as in Fig.~\ref{fig_layout_st}(a) and (ii) the qubit-efficient layout achieving the same distance by ``twisting'' the torus as in Fig.~\ref{fig_Floq_small}(f).
This qubit-efficient layout, suggested in Refs.~\cite{Gidney2021faulttolerant, Haah2022}, is the optimal layout on a torus for a given distance and uses 25\% fewer qubits than the standard layout~\cite{Bombin2007}.
Although the logical error rate $p_L$ at the threshold is lower for the standard layout, at low physical error rates, where the scaling of $p_L$ is determined by the distance, it is preferential to use the compact layout as it achieves a higher distance for a given number of physical qubits.
We also simulate the performance of the original honeycomb code with the compact layout and find that its threshold is lower than that of the CSS honeycomb code.
This can be explained by the fact that the detectors are products of 6 measurement outcomes in the CSS honeycomb code compared to products of 12 measurement outcomes in the original honeycomb code.
Therefore, the CSS version is more robust against measurement errors.

\begin{figure*}[t!]
\centering
\hspace*{-4mm}\begin{minipage}{0.8\linewidth}
(a)\includegraphics[width=0.2\textwidth]{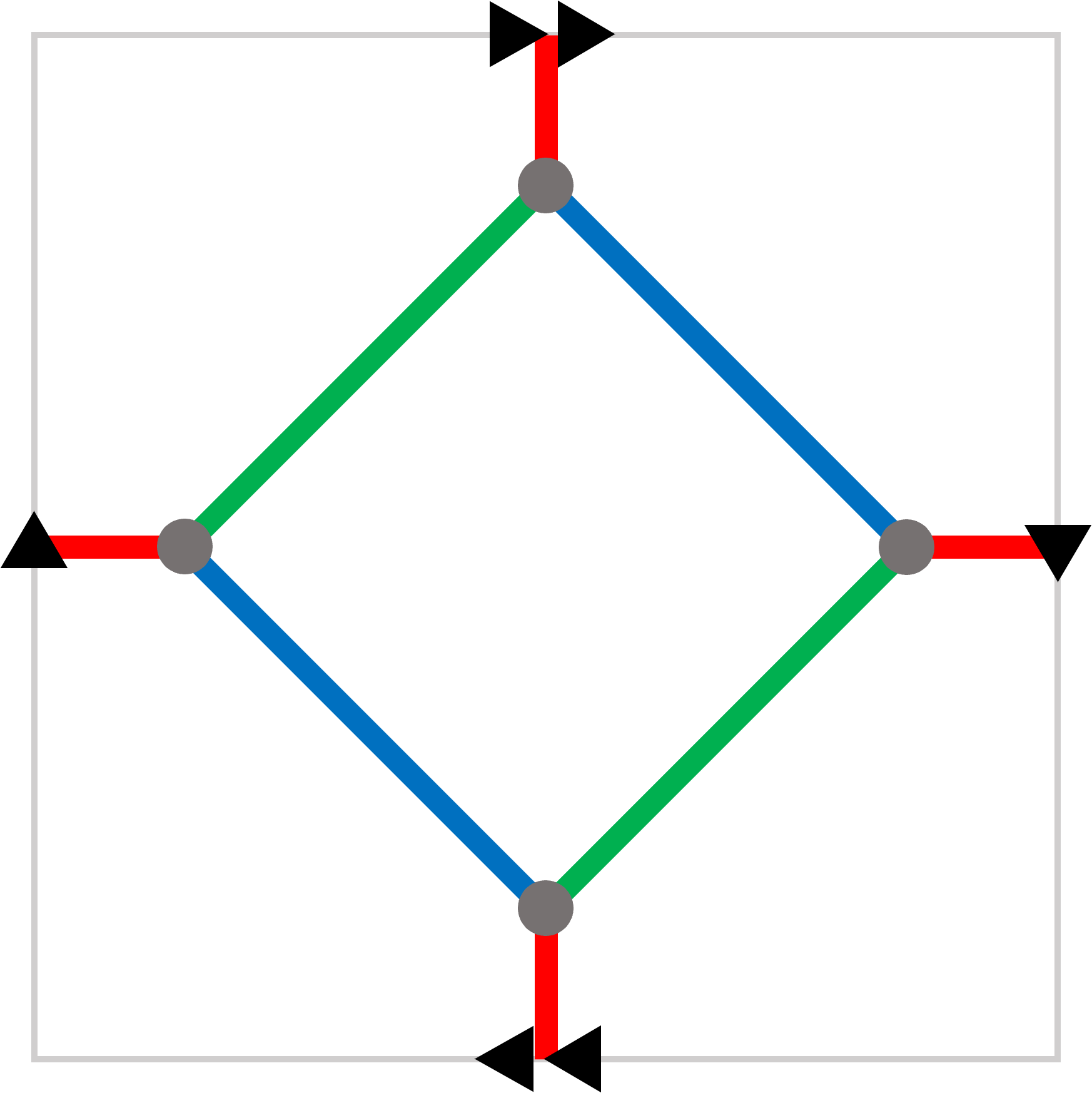}\qquad\quad
(b)\includegraphics[width=0.22\textwidth]{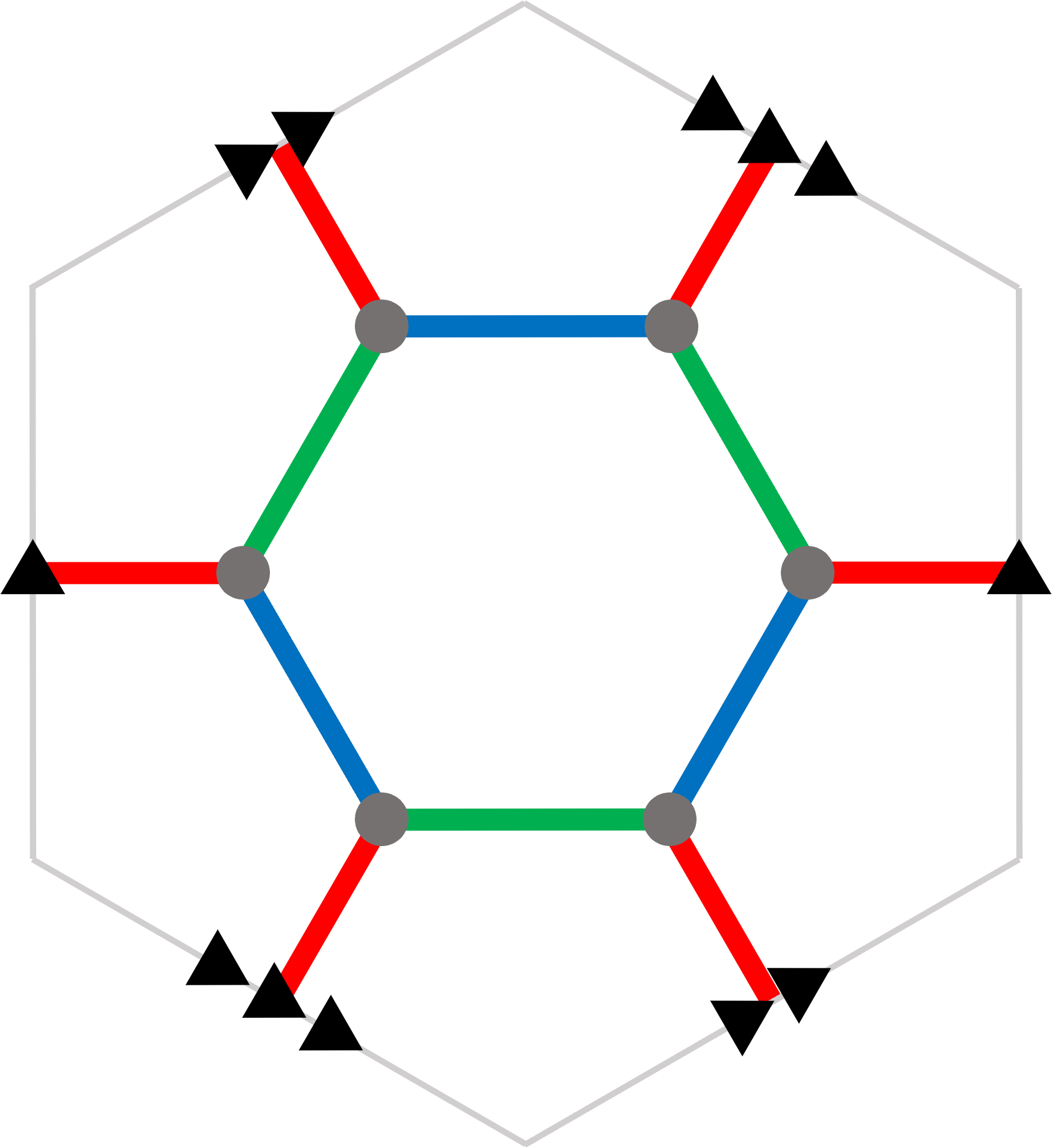}\qquad
(c)\includegraphics[width=0.22\textwidth]{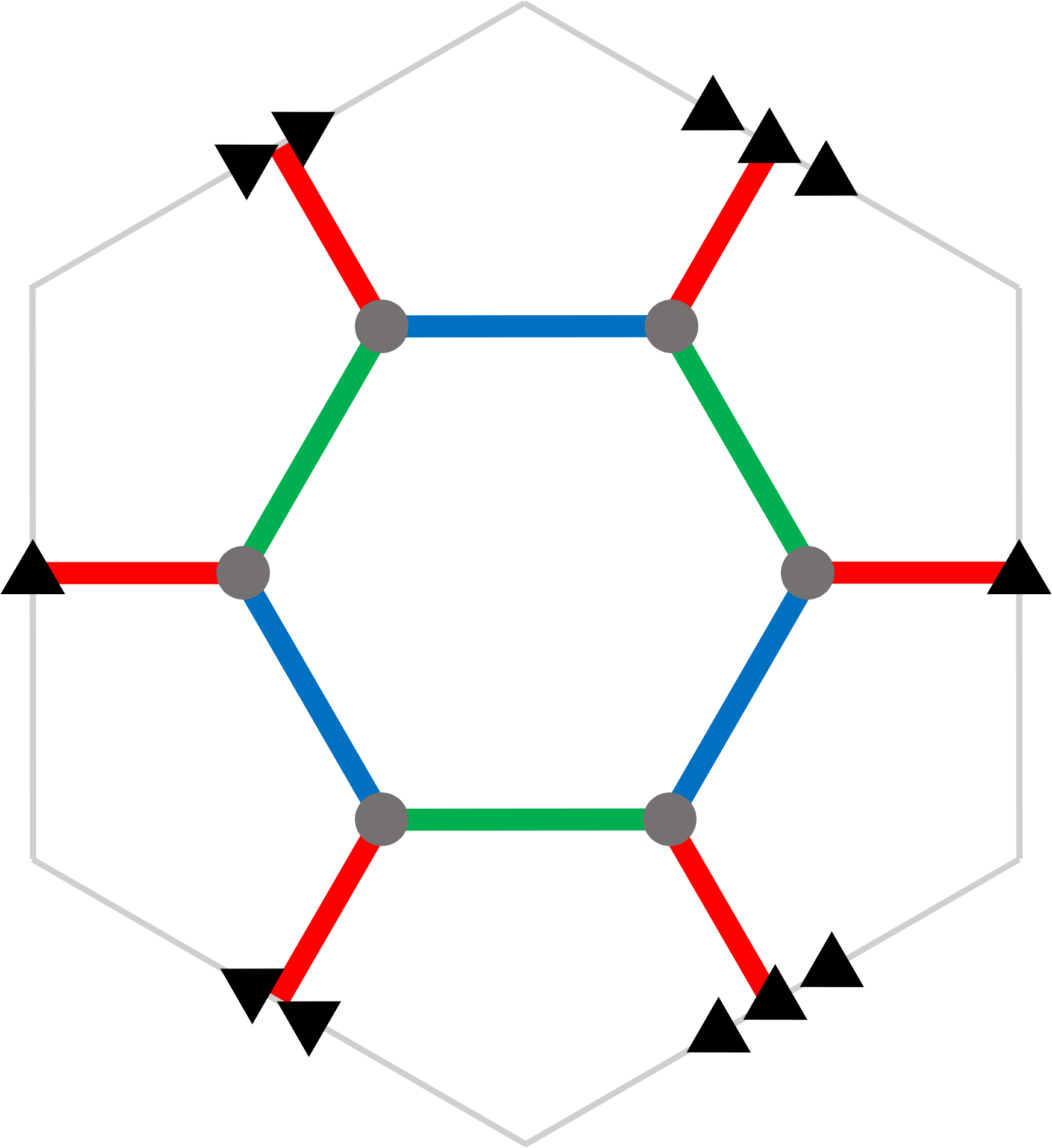}\\
(d)\includegraphics[width=0.25\textwidth]{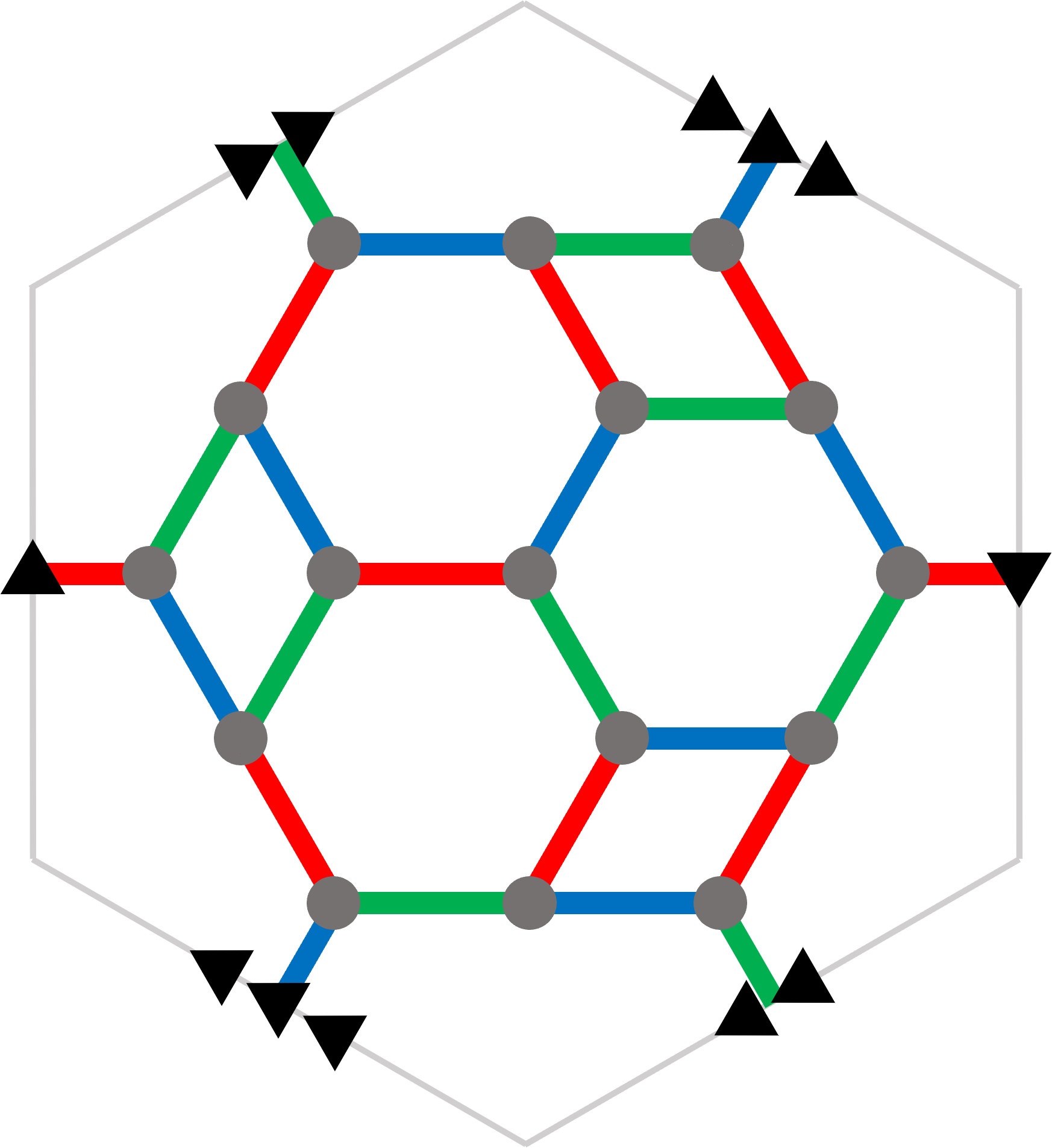}
(e)\includegraphics[width=0.28\textwidth]{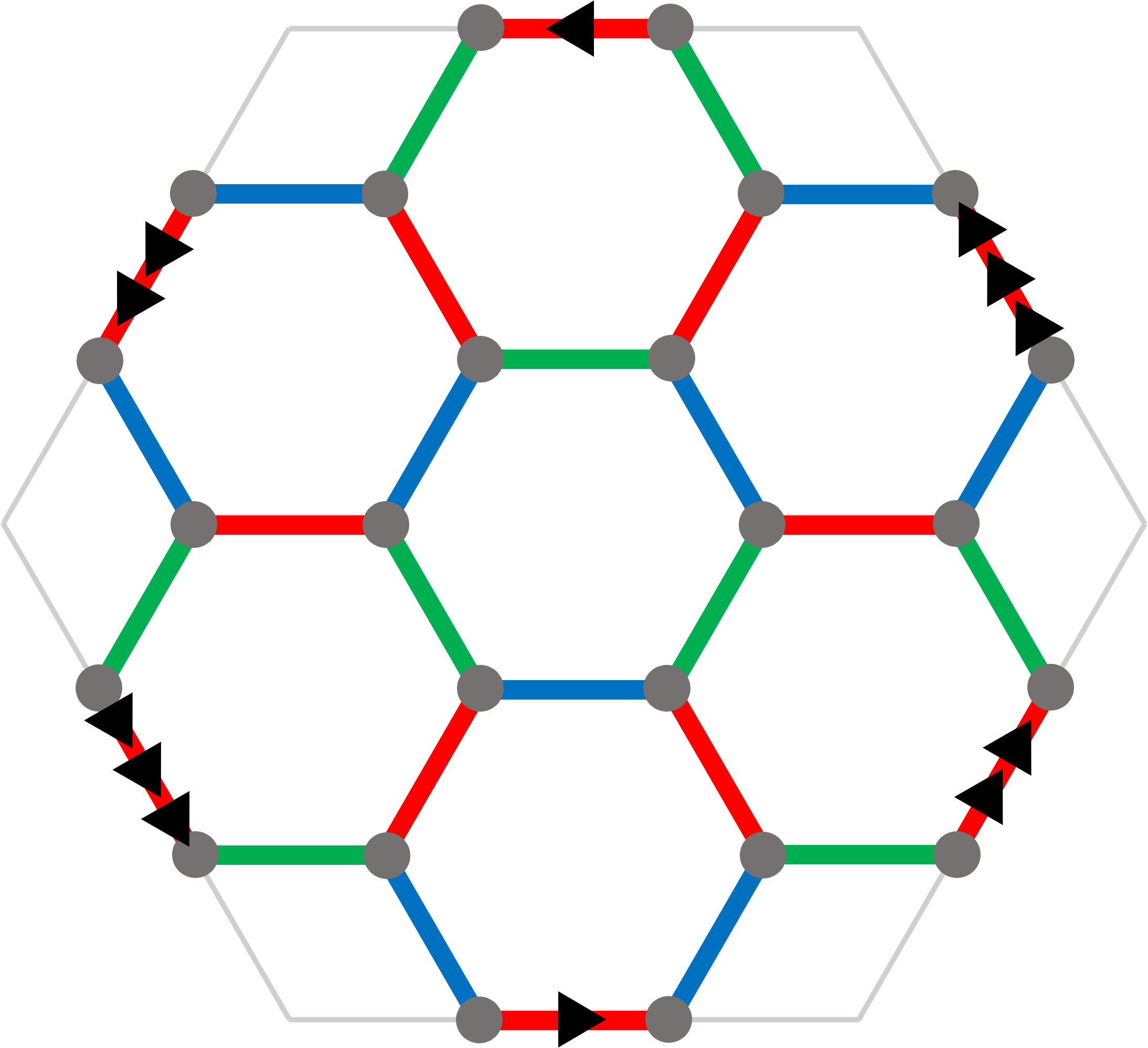}
(f)\includegraphics[width=0.21\textwidth]{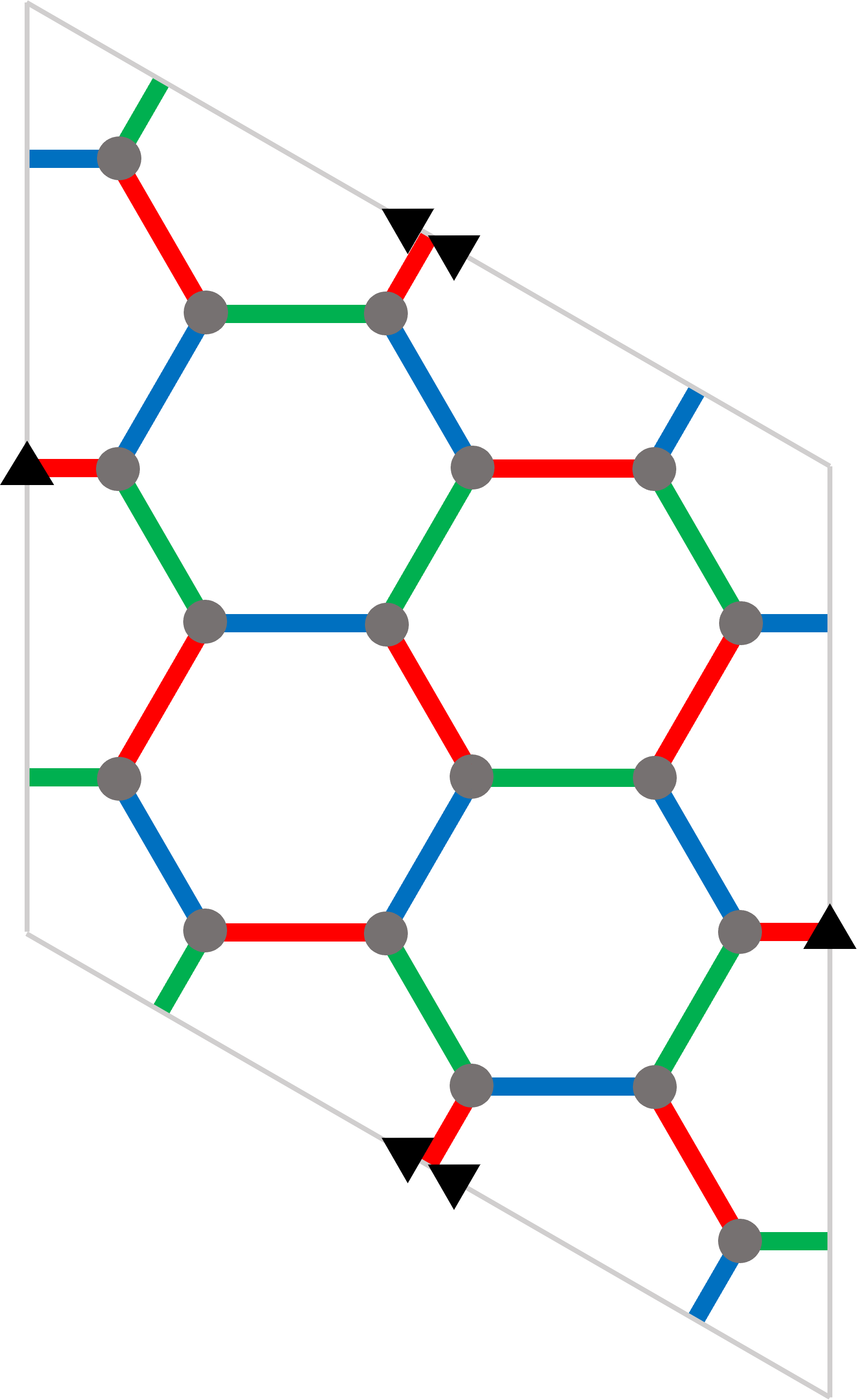}
\end{minipage}\quad
\begin{minipage}{0.2\linewidth}
\begin{tabular}{|c|c|c|c|c|}
    \hline
     {\bf code} & $\Sigma$ & $n$ & $k$ & $d$ \\
     \hline\hline 
     (a) & $\bb R\bb P^2$ & 4 & 1 & 2\\
     (b) & $\bb T^2$ & 6 & 2 & 2\\
     (c) & $\bb K$ & 6 & 2 & 2\\
     (d) & $\bb R\bb P^2$ & 16 & 1 & 4\\
     (e) & $\bb R\bb P^2$ & 18 & 1 & 4\\
     (f) & $\bb T^2$ & 18 & 2 & 4\\
     \hline
\end{tabular}
\end{minipage}
\caption{
The smallest (graph-based) Floquet codes with distance two and four.
Qubits are depicted as gray dots.
For each code illustrated in (a)-(f), we specify a manifold $\Sigma$ used to embed its associated graph, as well as its code parameters $n$, $k$ and $d$.
Here, $\mathbb T^2$, $\mathbb{RP}^2$, and $\mathbb K$ denote a torus, a real projective plane, and a Klein bottle, respectively.
}
\label{fig_Floq_small}
\end{figure*}

\section{Smallest Floquet codes}

Having analyzed families of Floquet codes on the torus, one may ask what the smallest possible (graph-based) Floquet codes are.
In this section, we find the previously unknown codes with distance two and four and analyze their performance in terms of the logical error rate.
We also describe a connection between Floquet codes and two-manifolds.

\subsection{Searching for smallest Floquet codes}
\label{sec_smallest}

Because we are considering erasures, distance-two codes may allow us to correct up to one erasure.
The smallest 3-regular graphs are the complete graph $K_4$, the complete bipartite graph $K_{3, 3}$, and the prism graph $Y_3$.
Each of these graphs has exactly one 3-edge-coloring (up to isomorphism), so they define valid Floquet codes.
The codes all have distance two, and they encode either one or two logical qubits.
We depict them in Fig.~\ref{fig_Floq_small}(a)-(c).
We remark that compared to the $\code{16, 4, 2}$ hyperbolic code defined on the Bolza surface~\cite{higgott2023hyperbolic}, the $\code{6, 2, 2}$ codes defined on $K_{3, 3}$ and $Y_3$ have better encoding rates at the same distance.

For Floquet codes that can correct one unknown error, we consider distance-four codes.
Previously, the smallest known Floquet code with distance four was the $\code{18, 2, 4}$ code using the twisted embedding of the hexagonal lattice on a torus~\cite{Gidney2021faulttolerant, Haah2022}; see Fig.~\ref{fig_Floq_small}(f).
We ran an exhaustive search through all 3-edge-colorings of 3-regular graphs up to 18 vertices and found two additional distance-four codes with 16 and 18 qubits.
These codes both encode one logical qubit; see Fig.~\ref{fig_Floq_small}(d)(e).

We also simulate the performance of the $\code{16,1,4}$ code, presenting the results for the ancilla and EM schemes in Fig.~\ref{fig:smallcode_numerics}.
To find the pseudothresholds, we compare the logical error rate $p_L$ of the code against an unprotected qubit that undergoes the same noise and is affected by four depolarizing channels, two with error rate $p$ and two with error rate $3e/4$, at every step.
In the EM scheme, there are single error mechanisms that can corrupt two qubits along a logical operator, which halves the circuit distance compared to the distance of the stabilizer code of any ISG.
This can be seen from the subthreshold scaling, as the the slopes of the solid and dashed lines are the same for low error rates.
This phenomenon does not occur for the ancilla scheme.

\begin{figure}[ht]
\centering
(a)\includegraphics[width=.85\columnwidth]{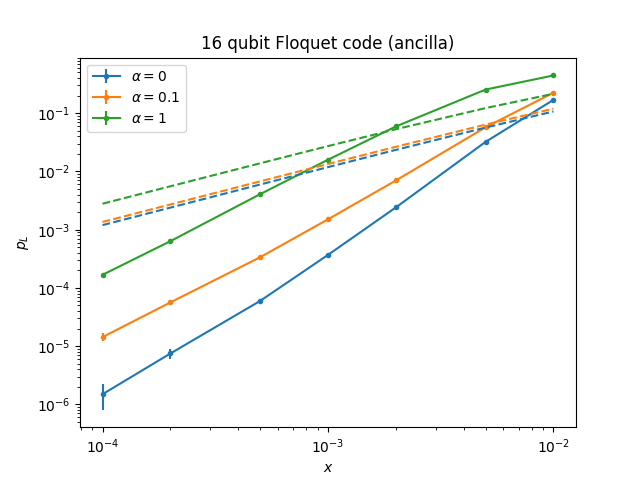}\\
(b)\includegraphics[width=.85\columnwidth]{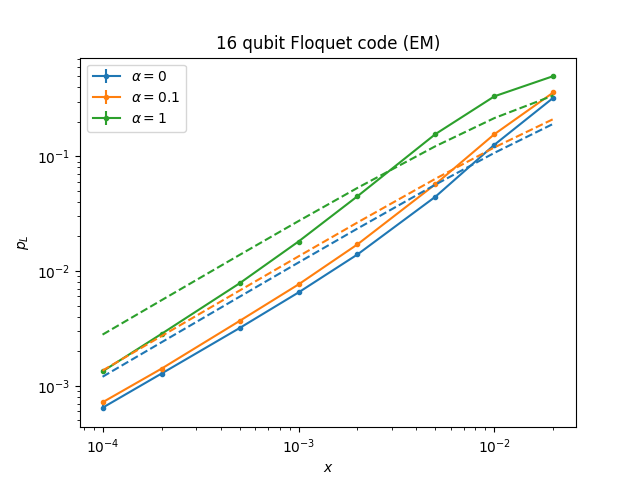}
\caption{Finding the pseudothreshold of the $\code{16, 1, 4}$ code in the (a) ancilla and (b) EM schemes. We assume erasure-biased noise with $q=e=x$ and $p=\alpha x$ where $x$ and $\alpha$ are parameters. The dashed lines represent error rates of an unprotected qubit experiencing the same noise while idling for the same amount of time.
}
\label{fig:smallcode_numerics}
\end{figure}

\subsection{Interpretation through manifolds}

It turns out that one can interpret any graph-based Floquet code as arising from a tessellation of some closed two-manifold (with the tessellation forming a two-dimensional color code lattice~\cite{Bombin2006,Kubicathesis}).
By definition, a two-dimensional color code lattice is $3$-valent and its faces are $3$-colorable, i.e., faces are colored with three colors and any two neighboring faces sharing an edge have different colors.
The following lemma guarantees the relation between Floquet codes and two-manifolds.

\begin{lemma}
    \label{prop:manifold}
    Any finite connected 3-regular graph $G = (V,E)$ with a $3$-edge-coloring $E = E_0\sqcup E_1\sqcup E_2$ can be embedded in a closed two-manifold $\Sigma$ with $3$-colorable faces, whose coloring is induced by the edge coloring.
\end{lemma}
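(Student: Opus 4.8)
The plan is to build the surface $\Sigma$ explicitly as a polygonal $2$-complex whose $1$-skeleton is $G$, letting the $3$-edge-coloring dictate both the faces and their colors. The starting observation is that, because $G$ is $3$-regular and $E = E_0\sqcup E_1\sqcup E_2$ is proper, every vertex is incident to exactly one edge of each color. In particular the coloring forbids loops, so no degeneracies arise at the level of individual vertices.

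First I would form, for each pair of colors $\{i,j\}\subset\{0,1,2\}$, the subgraph $G_{ij} = (V, E_i\cup E_j)$. Each vertex has degree exactly two in $G_{ij}$ (one $i$-edge and one $j$-edge), so $G_{ij}$ is a disjoint union of cycles along which the colors alternate $i,j,i,j,\dots$. I then attach a polygonal $2$-cell (a disk) to each such cycle and color it with the complementary color $c$, where $\{c\} = \{0,1,2\}\setminus\{i,j\}$: cycles of $G_{12}$ become color-$0$ faces, cycles of $G_{02}$ become color-$1$ faces, and cycles of $G_{01}$ become color-$2$ faces. Gluing all of these disks along their shared edges produces a $2$-complex $\Sigma$ with $1$-skeleton $G$ and a $3$-colored face set.

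Two verifications then promote this complex to a valid closed surface with the claimed properties. Each edge $e$ of color $c$ lies in exactly the two subgraphs $G_{ck}$ for the two $k\neq c$, hence on the boundary of exactly two attached cells, one of each of the two non-$c$ colors, so every edge is shared by precisely two faces. For the manifold condition I would examine the link of a vertex $v$: its three incident edges $e_0,e_1,e_2$ divide the neighborhood of $v$ into three corners, and the construction places exactly one face in each corner (the color-$c$ face occupying the corner between the two non-$c$ edges). These corners assemble cyclically as $e_0\to\text{(color $2$)}\to e_1\to\text{(color $0$)}\to e_2\to\text{(color $1$)}\to e_0$ into a single circle, so $v$ has a disk neighborhood and $\Sigma$ is a closed $2$-manifold, possibly non-orientable as the $\mathbb{RP}^2$ and Klein-bottle examples in Fig.~\ref{fig_Floq_small} require.

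Finally the coloring statement follows at once: two faces sharing an edge of color $c$ carry the two distinct non-$c$ colors, so the face $3$-coloring is proper, and the color of each edge is recovered as the complement of the colors of its two adjacent faces, so the face coloring is induced by the edge coloring. The one step demanding genuine care, and the main obstacle, is the vertex-link computation: I must confirm that the three corners at each vertex glue into a \emph{single} circle rather than several, since only then is the neighborhood a disk. This is precisely where the hypothesis that the coloring is proper (one edge of each color per vertex) is used, and it is what excludes the pinch points that would otherwise spoil the manifold structure.
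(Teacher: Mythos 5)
Your construction is the same as the paper's: you form the bicolored subgraphs $G_{ij}=(V,E_i\cup E_j)$, observe they are disjoint unions of cycles, attach a disk of the complementary color to each cycle, and glue along shared edges --- exactly the paper's decomposition (the paper phrases it as ``remove $E_i$'' rather than ``keep $E_j\cup E_k$''). Your explicit checks that each edge bounds exactly two faces and that each vertex link is a single circle are correct and in fact supply detail the paper leaves implicit, but they do not change the route.
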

\begin{proof}
    By removing all edges in $E_i$, we obtain a disjoint union of cycles.
    Let $F_i$ denote the collection of these cycles.
    Consider filling in these cycles so that they are homeomorphic to disks.
    The boundaries of the disks are edges of the graph, and each edge in $G$ is part of exactly two cycles.
    By gluing disks together along an edge when they share the same edge in $G$, we obtain a closed manifold $\Sigma$ on which $G$ has a natural embedding.
    The faces of $\Sigma$ are $F = F_0\sqcup F_1\sqcup F_2$ and their coloring is induced by the coloring of the edges of $G$, i.e., any face in $F_i$ has color $i$.
    Note that since each $i$-colored edge is part of faces colored $j$ and $k$, with $i,j,k$ all distinct, any two neighboring faces of $\Sigma$ have different colors (both distinct from $i$).
\end{proof}

The parameters of the Floquet code associated with the graph $G$ can be related to the $\Sigma$-embedding of $G$.
Let us define the shrunk lattice of color $i$ to be the graph $G_i = (F_i, E_i)$, where an edge $e\in E_i$ connecting $v$ to $w$ in the original graph $G$ now connects the two $i$-colored faces that $v$ and $w$ are on.
The graph $G_i$ also has an embedding in $\Sigma$, which is obtained from the embedding of $G$ by ``shrinking'' the $i$-colored faces to a point and extending the $i$-colored edges.
Similarly to Ref.~\cite{Hastings2021}, we find that at every round $i$ of the evolution of the Floquet code, the ISG is equivalent to the toric code on the $\Sigma$-embedding of $G_i$.
Thus, the number of encoded qubits is
\begin{align}
    k &= \dim  H_1(\Sigma; \bb Z_2) = 2-\chi\\
    &= \begin{cases}
		2g, & \text{orientable $\Sigma$ of genus $g$},\\
		g, & \text{nonorientable $\Sigma$ of demigenus $g$},
	\end{cases}
\end{align}
where $H_1(\Sigma; \bb Z_2)$ is the first homology group of the two-manifold $\Sigma$ with $\bb Z_2$ coefficients and $\chi$ is the Euler characteristic of $\Sigma$.
Furthermore, the distance of the Floquet code at round $i$ is the smaller of twice the length of the shortest noncontractible cycle of $G_i$ and the length of the shortest noncontractible cycle in the dual graph $G_i^*$.
Since the dual graph $G_i^*$ is bipartite, the distance is even.

\section{Discussion}

In our article, we designed and optimized fault-tolerant quantum architectures based on erasure qubits.
While our analysis has focused on Floquet codes, we also envision making use of other QEC codes, such as the surface code and quantum low-density parity-check codes~\cite{Breuckmann2021}.
The surface code, similarly to graph-based Floquet codes, can be realized with planar layouts of qubits and projective measurements of Pauli $XX$ and $ZZ$ operators between neighboring qubits~\cite{Chao2020surfacecode, Gidney2023pairmeasurement};
quantum low-density parity-check codes are generally incompatible with planar layouts, but, in principle, can be realized with, e.g., superconducting circuits~\cite{Bravyi2023} and neutral atoms~\cite{Xu2023ldpc}.
Irrespective of the QEC codes used, we expect the corresponding quantum architectures to benefit from erasure qubits and significantly outperform standard approaches.

Our analysis and numerical simulations relied on certain simplifying assumptions, including the erasure-depolarization spread, noise rates that are uniform through the circuit, and frequent erasure checks followed by unconditional reset.
However, similar analysis can be fine-tuned for specific architectures, making it more realistic and potentially further improving the performance of QEC protocols.
For instance, if erasures spread to Pauli $Z$ errors, then one may be able to design clever syndrome extraction circuits that suppress the error propagation.
One may adjust the noise rate at each spacetime location depending on the execution time of quantum circuit operations; see Appendix~\ref{app:example} for an illustrative example.
Also, one may choose to perform less frequent erasure checks (to reduce the time overhead associated with their implementation) and conditional reset operations (to reduce the effect of false negative erasure detections).

Lastly, our formalism for QEC protocols with erasure qubits and phrasing the corresponding decoding problem as the hypergraph matching problem 
constitute the first step toward systematic development and optimization of decoding algorithms.
Such efforts, in turn, will further solidify the claim that erasure qubits are an attractive building block for fault-tolerant quantum architectures.

\acknowledgements

We thank A.~Grimsmo, A.~Haim, C.~Hann, J.~Iverson and H.~Levine for many inspiring discussions.
We acknowledge C.~Pattison for his help with finding small Floquet codes.

\appendix

\section{Formal description of QEC protocols with erasure qubits}
\label{sec_formal}

We can make the discussion about QEC protocols with erasure qubits more precise. 
Formally, each wire represents an erasure qubit and it suffices to model it as a three-level system with an orthonormal basis $\ket 0$, $\ket 1$ and $\ket 2$, 
where the states $\ket 0$ and $\ket 1$ span the computational subspace $\mathcal H_c \simeq \mathbb C^{\otimes 2}$ and the state $\ket 2$ spans the erasure subspace $\mathcal H_e \simeq \mathbb C$.
Let $\Pi_a$ and $\Pi_{a,b}$, where $a,b\in \{c,e\}$, denote the projectors onto $\mc H_a$ and $\mc H_a\otimes \mc H_b$,  respectively.
Similarly, we define $\Pi^\pm_P$ and $\Pi^\pm_{PP}$, where $P\in\{X,Y,Z\}$, to be the projectors onto the $(\pm 1)$-eigenspaces of the Pauli $P$ and $PP$ operators, respectively.
We write $G_{a,b}$ to capture that the operator $G$ acts on $\mc H_a \otimes \mc H_b$.

In Sec.~\ref{sec_formalism} we assumed that none of the operations (i)-(vii) can create a superposition of states in the computational and erasure subspaces of erasure qubits.
Given our assumption of the erasure-depolarization spread, i.e., an erasure causes full depolarization of other qubit that is involved in the same 2Q operation, we obtain that the operations (i)-(vii) have a block-diagonal structure and act on the Hilbert spaces associated with erasure qubits as follows.
\begin{itemize}
\item[(i)] 1Q state preparation of a state $\ket\psi\in\mathcal H_c$ in the computational subspace of the erasure qubit.
\item[(ii)] 1Q readout measures a Pauli $P$ operator, but if the state is erased, then it gives a random outcome, i.e., it performs the two-outcome positive operator-valued measure (POVM) with $\Pi_P^+ + \frac 1 2\Pi_e$ and $\Pi_P^- + \frac 1 2\Pi_e$.
\item[(iii)] 1Q gate $G$ acts on the computational subspace of the erasure qubit, i.e., $G_{c} \oplus I_{e}$.
\item[(iv)] 2Q gate $G$ acts on the computational subspace of the two erasure qubits and fully depolarizes the other qubit if one qubit is erased, i.e., it applies a quantum channel with Kraus operators
$K_{P,Q} = \frac 1 4 G_{c,c} \oplus P_{c,e} \oplus Q_{e,c} \oplus I_{e,e}$
for all $P, Q\in \{I,X,Y,Z\}$.
\item[(v)] 1Q erasure check performs the two-outcome measurement with projectors $\Pi_c$ and $ \Pi_e$.
\item[(vi)] 1Q reset acts trivially on the computational subspace and reinitializes an erased state as the maximally mixed state in the computational subspace, i.e., it applies a quantum channel with Kraus operators $K_0 = \Pi_c$, $K_1 = \tfrac{1}{\sqrt{2}}\ketbra{0}{2}$ and $K_2 = \tfrac{1}{\sqrt{2}}\ketbra{1}{2}$.
\item[(vii)] 2Q projective measurement measures a Pauli $PP$ operator, but if either qubit is erased, then it gives a random outcome and fully depolarizes the other qubit, i.e., it performs the two-outcome POVM with
$\Pi_{PP}^+ + \frac 1 2(\Pi_{e,c} + \Pi_{c,e} + \Pi_{e,e})$ and  $\Pi_{PP}^- + \frac 1 2(\Pi_{e,c} + \Pi_{c,e} + \Pi_{e,e})$ followed by an application of a quantum channel with Kraus operators
$K_{P,Q}= \frac 1 4 I_{c,c} \oplus P_{c,e} \oplus Q_{e,c} \oplus I_{e,e}$
for all $P, Q\in \{I,X,Y,Z\}$.
\end{itemize}

\section{Examples of the edge-weight calculation and erasure rate adjustment}
\label{app:example}

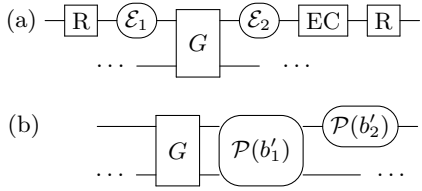
\begin{figure}[ht!]
    \centering
    (a) $\Qcircuit @C=.8em @R=.7em {
        & \gate{\mathrm{R}} & \measure{\mc E_1} & \multigate{1}{G} & \measure{\mc E_2} & \gate{\mathrm{EC}} & \gate{\mathrm R} & \qw\\
        & & \lstick{\ldots} & \ghost{G} & \qw & \lstick{\ldots}}$\\
    \vspace*{10pt}
    (b) \qquad $\Qcircuit @C=.8em @R=.7em {
        & \qw & \qw & \multigate{1}{G} &  \multimeasure{1}{\mc P(b'_1)} & \measure{\mc P(b'_2)} & \qw\\
        & & \lstick{\ldots} & \ghost{G} & \ghost{\mc P(b_1)} & \qw & \lstick{\ldots}}$ 
    \caption{
    (a) An example of a segment of an erasure circuit.
    (b) An equivalent stabilizer circuit.
    }
    \label{fig:exampleerasurecircuit}
\end{figure}

We present an example of how to decode erasures by converting a segment of an erasure circuit into a stabilizer circuit.
Consider the segment in Fig.~\ref{fig:exampleerasurecircuit}(a), where each erasure location has probability $e$ and the erasure check outcome is flipped with probability $q$.
The erasure detection event $\mathrm{EC}$ has distribution
\begin{align}
    \Pr(\mathrm{EC}=1) &= [1 - (1 - e)^2](1 - q) + (1 - e)^2q\, ,\\
    \Pr(\mathrm{EC}=0) &= 1 - \Pr(\mathrm{EC}=1)\, .
\end{align}
Conditioned on $\mathrm{EC}$, the probabilities that the qubit was first erased at $\mc E_1$ or $\mc E_2$ are respectively
\begin{align}
        a_1 = \frac{e(1-q)}{\Pr(\mathrm{EC}=1)},\qquad a_2 = \frac{(1-e)e(1-q)}{\Pr(\mathrm{EC}=1)}\, ,
\end{align}
if $D=1$, and
\begin{align}
    a_1 &= \frac{eq}{\Pr(\mathrm{EC}=0)},\qquad a_2 = \frac{(1-e)eq}{\Pr(\mathrm{EC}=0)}\, ,
\end{align}
if $\mathrm{EC}=0$. If the qubit was first erased at $\mc E_1$, it would depolarize the second qubit after the entangling gate. Furthermore, the qubit itself would become maximally mixed after the reset operation. Thus, both qubits become fully depolarized. If the qubit was first erased at $\mc E_2$, only that qubit would become depolarized from the reset.
By the proof of Lemma~\ref{thm:segmentmapping}, the segment is equivalent to the stabilizer circuit in Fig.~\ref{fig:exampleerasurecircuit}(b) with error probabilities
\begin{align}
    b'_1 = \frac{15}{16} a_1,\qquad b'_2 = \frac{3}{4} \frac{a_2}{1-a_1}\, .
\end{align}

To adjust erasure rates at different spacetime locations depending on the execution time of quantum circuit operations (and thereby making numerical simulations more realistic) we can use the following simple heuristic.
Let $T_E$ be the erasure time (which for the erasure qubit via the dual-rail encoding corresponds to the amplitude damping time $T_1$). 
Let $\mc E$ be an erasure location in between two consecutive quantum operations $A$ and $B$ with the execution time $T_A$ and $T_B$, respectively.
We can then set the erasure rate associated with $\mc E$ to be
\begin{equation}
e = (a T_A + b T_B)/T_E,
\end{equation}
where $a,b\in[0,1]$ are appropriately chosen.
In particular, in the middle of the segment $s$ we may set $a=b=0.5$; if $A$ or $B$ correspond to one of the endpoint of $s$, then we set $a$ or $b$ to be 1.
We also remark that adjusting $a$ and $b$ for erasure locations adjacent to erasure checks allows us to effectively adjust the false positive and negative erasure detection rates.

\section{Parity measurement of two dual-rail qubits}
\label{Implementation}

Here, we outline our scheme for parity measurement of two dual-rail qubits that utilizes a single transmon for measurement.
The proposed scheme is based on Fig.~\ref{fig:DR_Yale1}.
The dual-rail qubits are encoded in transmons T1, T2 and T3, T4, respectively, while the interaction is generated by the coupler T5 within the T2, T5, T3 system.

Each dual-rail qubit consists of two tunable transmons, brought to resonance as in Refs.~\cite{Kubica2023,Levine2023}, while the single tunable transmon T5 facilitates the parity measurement.
The reason why we can employ such a hybrid construction combining high-coherence dual-rail qubits and a low-coherence transmon is that most of the noise on the ancilla transmon commutes with the interaction, and thereby does not propagate in leading order to the dual-rail qubits as in Ref.~\cite{zuk2023robust}.
Part of the noise that does propagate is addressed by the dual-rail qubit's built-in decoupling mechanism.

Furthermore, since this construction is similar in nature to the cavity setup~\cite{Teoh2023}, the interaction can be made much faster as it is not limited by the Purcell effect, which is the main limitation of the rate for high-coherence cavities in a hybrid construction.

The interaction is generated by the second order $ZZ$ coupling between T5 and T1 and T5 and T3 in the following way.
Starting with the Hamiltonian
\begin{eqnarray}
H &=& \sum_{i=1}^5 \omega_i a_i^\dagger a_i + \frac{\alpha}{2} a_i^\dagger a_i^\dagger a_i a_i +\\ \nonumber
&&g_{DR1} \left(a_1^\dagger a_2  + h.c. \right)+
g_{DR2} \left( a_3^\dagger a_4 + h.c. \right)+\\\nonumber
&&g \left( a_2^\dagger a_5 +a_5^\dagger a_3  + h.c. \right),
\end{eqnarray}
where $\omega_i$ is the frequency of the transmon T$i$, $\alpha$ is the nonlinearity, $g_{DRi}$ is the capacitive coupling between two transmons of the $i$-th dual-rail qubit and $g$ is the capacitive coupling between the ancilla transmon T5 and either T2 or T3.

In the limit of large detunings, $\Delta_1=\omega_5-\omega_2$, $\Delta_2=\omega_5-\omega_3$, when $\omega_2=\omega_1$, $\omega_4=\omega_3$ and $\Delta_{i} \gg \alpha,g,$ the coupling between T5 and the rest of the system reduces to 
\begin{equation}
g_{zz}^{(1)} a_5^\dagger a_5 a_2^\dagger a_2 + g_{zz}^{(2)} a_5^\dagger a_5 a_3^\dagger a_3, 
\end{equation}
when $g_{zz}^{i} = \frac{g^2}{\Delta_{i}^2}\alpha.$ 
This Hamiltonian becomes
\begin{equation}
H = \frac{g_{DR1}}{2}  \sigma_z^1  + \frac{g_{DR2}}{2} \sigma_z^2   +  a_5^\dagger a_5 \left(g_{zz
}^{(1)} \sigma_x^1 + g_{zz
}^{(2)} \sigma_x^2 \right).
\end{equation}
Here, $\sigma_z^{i}$ and $\sigma_x^{i}$ are Pauli operators of the $i$-th dual-rail qubit defined in the standard way for the computational basis states $\ket 0$ and $\ket 1$, where
$\ket{b} = \ket{ g e - (-1)^b e g}$, with $\ket g$ and $\ket e$ denoting the transmon's ground and first excited states, respectively (note the change of basis compared to the main text).
In principle the interaction terms are off resonance and thus could be neglected unless $g_{zz
}^{(i)}$ is modulated at the dual-rail qubit frequency $g_{DRi}$.

The $ZZ$ coupling term $g_{zz}^{(i)}$ could be modulated by a parametric drive of the detuning resulting in $g_{zz}^{i} = \frac{g^2}{(\Delta_{i}+\delta \cos \Omega t)^2}\alpha \approx  \frac{g^2}{\Delta_{i}^2}\alpha  -2 \frac{g^2}{\Delta_{i}^2}\alpha \frac{\delta}{\Delta_{i}}\cos \Omega t$.
Thus, by modulating the gap detuning for one of the dual-rail qubits, it is possible to realize an effective Hamiltonian $\frac{g_m}{2} a_5^\dagger a_5 \sigma_x^1$ or $\frac{g_m}{2} a_5^\dagger a_5 \sigma_x^2$. 
Moreover, by modulating at both frequencies it is possible to realize a Hamiltonian $\frac{g_m}{2} a_5^\dagger a_5 (\sigma_x^1 + \sigma_x^2)$. 
By adding local terms this Hamiltonian could be written as $H_p = \frac{g_m}{2} \vert f \rangle \langle f \vert (\sigma_x^1 + \sigma_x^2)$, where $\ket f$ denotes the transmon's second excited state.
This is exactly what is needed to implement the $ZZ$ gate proposed in Ref.~\cite{Teoh2023}.
Here, however, we only aim to use this term to implement a parity measurement, simplifying the scheme and resulting in higher fidelity.

\subsection{Parity measurement scheme}

We propose to conduct the parity measurement via the ground state $\ket g$ and the second excited state $\ket f$ manifold, enabling us to detect amplitude damping of the transmon by measuring the first excited state $\ket e$.
The protocol starts with the state
\begin{eqnarray}
\vert g + f\rangle \vert \alpha 00  + \beta 10 + \gamma 01 + \delta 11\rangle,    
\end{eqnarray}
where $\ket{b} = \ket{ g e - (-1)^b e g}$ denotes the computational basis state of the dual-rail qubit. 
By setting the total time so that $\frac{g}{2} t = \pi$ the unitary $e^{i H_p t}$ propagates the state to
\begin{equation}
   \vert g + f\rangle \vert \alpha 00   + \delta 11\rangle + \vert g - f\rangle \vert  \beta 10 + \gamma 01 \rangle, 
\end{equation}
which, followed by measuring the operator $\ket{g+f}\bra{g+f}-\ket {g-f}\bra{g-f}$, realizes the parity measurement.
The main advantage of this scheme is that amplitude damping is heralded, and the phase noise of the transmon does not propagate to the dual-rail qubits because the coupling term $a_5^\dagger a_5$ commutes with the noise term.
Thus, the transmon's phase noise only affects the measurement error, benefiting from a large threshold, as discussed in Ref.~\cite{zuk2023robust}.
This property of the measurement scheme allows for the utilization of a low-coherence single transmon and does not necessitate the use of a dual-rail qubit as an ancilla or a coupler, which would have slowed down the protocol considerably.

\begin{figure}[t!]
    \centering
    \includegraphics[width=0.85\columnwidth]{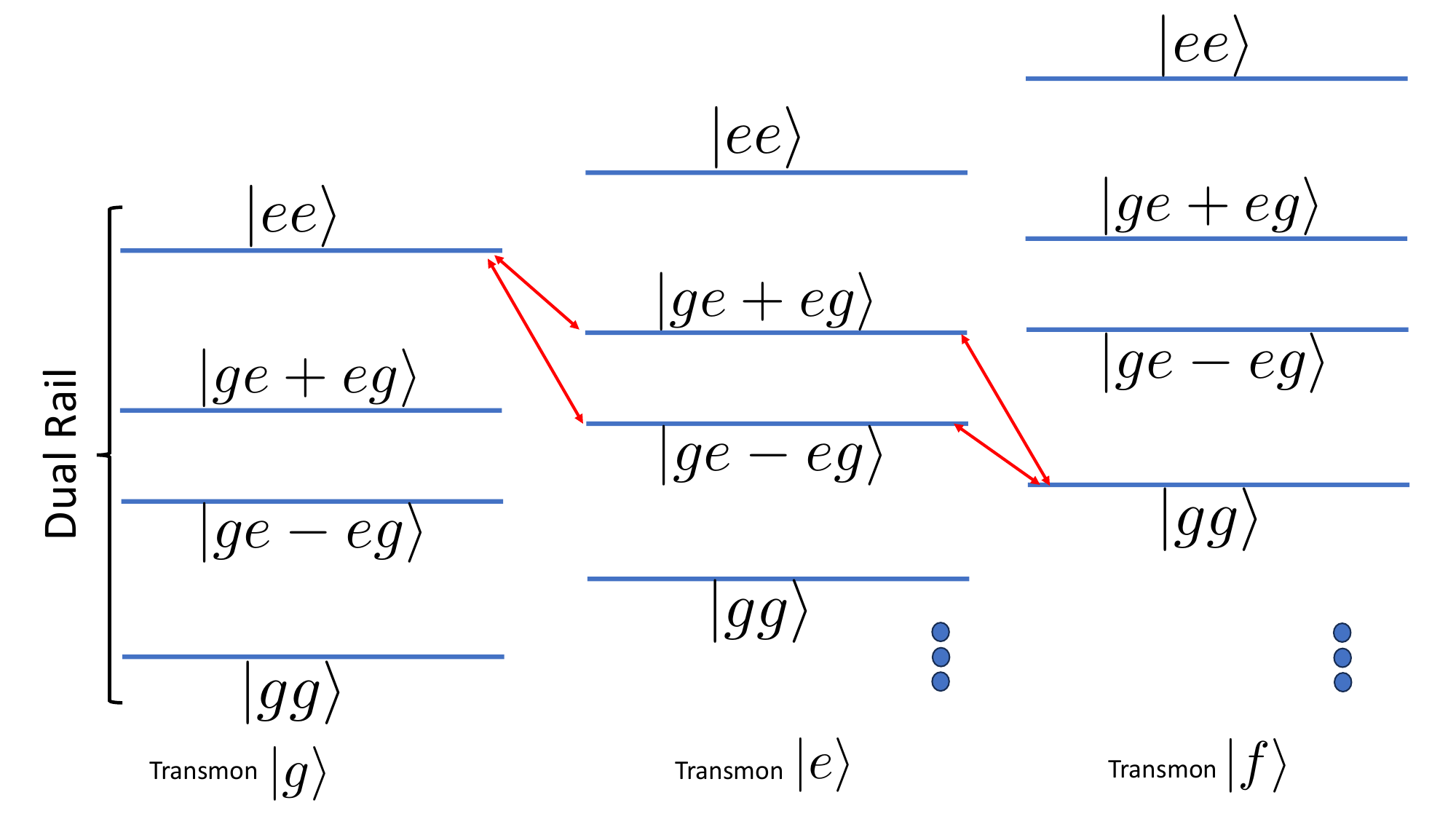}
    \caption{Error terms that limit the fidelity of the parity measurements are shown in red.}
    \label{fig:DR_Yale6}
\end{figure}

The parity measurement itself is slower by a factor of 2 than the regular $ZZ$ term between two transmons and taking into account that the gate should be realized twice, once for each dual rail, the gate could be achieved at around $50\unit{ns}$.
The dephasing due to the dual-rail qubit would be of the order of $\sim \frac{50\unit{ns}}{2.5\unit{ms}} = 2 \times 10^{-5}$ for the Markovian case and down to the $10^{-10}$ level for the non Markovian one, where the estimation of the  $2.5 \unit{ms}$ coherence time is based on results in Ref.~\cite{Levine2023}.
Thus, the main source of dephasing would be the erasure measurements.
The erasure measurement takes a similar amount of time as a regular measurement, which could be performed in less than $100\unit{ns}$~\cite{heinsoo2018rapid} or even below $50\unit{ns}$~\cite{sunada2023photon}. 
The measurement idiling dephasing will be the main source of noise in this scheme and should be less than $10^{-4}$ assuming that the coherence of the dual-rail qubit would reach a few miliseconds~\cite{Levine2023}.

The low transmon's coherence will only limit the measurement fidelity to the level of $\sim \frac{50\unit{ns}}{T_2} \sim 0.2 \%$, where $50\unit{ns}$ is the parity measurement time, which is not a real limitation as measurement fidelities are already limited to that level due to other reasons.

The gate speed will be limited by the off-resonant error terms which will cause leakage at the end of the gate.
These transitions are shown in Fig.~\ref{fig:DR_Yale6}.
As these transitions are detuned by $\frac{\omega_T - \omega_{DR}}{2}$  the error terms are approximately  $\frac{8}{((\omega_T - \omega_{DR}) T_g)^4}$~\cite{boradjiev2013control}.
Assuming we target a fidelity of around $10^{-4}$ the gate time should be of the order of $40\unit{ns}$.

\begin{figure}[t!]
    \centering
    (a)\includegraphics[width=.85\columnwidth]{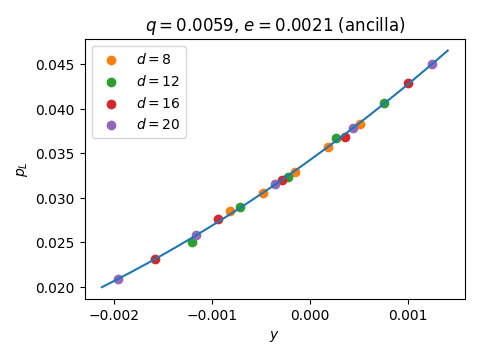}\\
    (b)\includegraphics[width=.85\columnwidth]{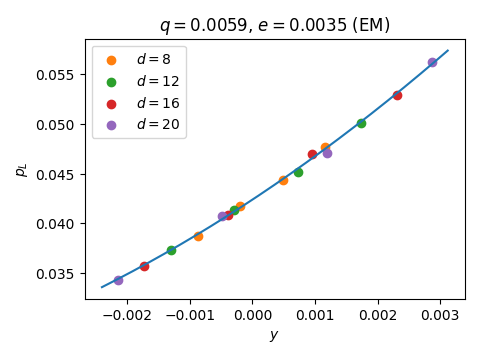}
    \caption{Rescaled data for the sample threshold calculations in Fig.~\ref{fig:mainresults} based on Eqs.~\eqref{eq:pLquadratic}~and~\eqref{eq:rescaledvariable}, where $y=p$ is the swept variable. (a) Sample calculation for the ancilla scheme threshold, giving $y^*=9.1\times 10^{-4}$, $\alpha=0.97$, and the quadratic $p_L = 570y^2 + 7.9y + 0.034$. (b) Sample calculation for the EM scheme threshold, giving $y^*=1.8\times 10^{-3}$, $\alpha=0.99$, and the quadratic $p_L = 210y^2 + 4.2y + 0.042$}
    \label{fig:universalansatz}
\end{figure}

\begin{figure}[t!]
\centering
(a)\includegraphics[width=.85\columnwidth]{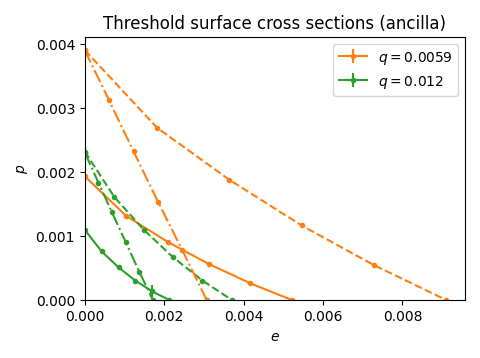}\\
(b)\includegraphics[width=.85\columnwidth]{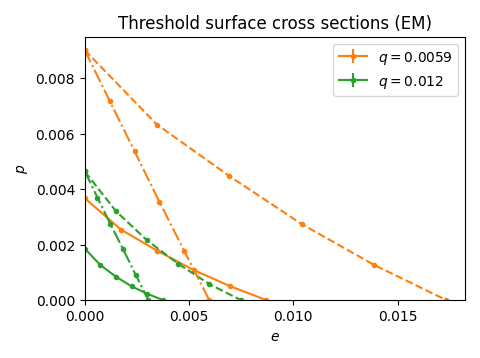}
\caption{Cross sections of the threshold surfaces from Fig.~\ref{fig:mainresults} for different values of $q$ (solid lines) for the (a) ancilla and (b) EM schemes.
The dashed lines correspond to the scheme with erasure checks and reset that do not introduce additional errors, bounding the performance of any erasure scheme.
The dashed-dotted lines correspond to the standard scheme with no erasure checks
and ideal reset (which can also be interpreted as the code's performance under leakage).
}
\label{fig:crosssectionssupplementary}
\end{figure}

\section{Details of numerical simulations}
\label{app:simulationdetails}

We present more details on how the simulations were performed. For a given circuit, the state is initialized as the eigenstate of a chosen logical operator, and an error is reported if the logical operator is decoded to the wrong value at the end of the simulation~\footnote{Because Floquet codes encode two logical qubits, the word error rate is four times the values presented if we assume independent $X$ and $Z$ failure probabilities. The thresholds will remain the same.}.
Because we are interested in threshold values, we assume perfect initialization and a noiseless final measurement. We run the simulation for $9d$ noisy measurement rounds for $d=4, 8, 12, 16, 20$ to obtain $p_L'$, and then report the normalized error rate per $3d$ rounds calculated via $p_L = \frac 1 2(1 - (1 - 2p_L')^{1/3})\approx p_L'/3$.
The logical error rate $p_L$ is calculated as the average over at least 1000 circuit realizations (from a given pattern of erasure check detection events), where each circuit realization is sampled 200 times.

The threshold surfaces in Fig.~\ref{fig:mainresults} are obtained by sweeping an error parameter (usually $p$, but sometimes $q$ or $e$ for points where $p=0$) in the neighborhood of a suspected threshold point in the $(e, p, q)$ phase space. The threshold value is estimated by fitting the universal scaling ansatz for critical points of phase transitions~\cite{Wang2003,Harrington2004}. That is, around the threshold, we assume the form
\begin{equation}
    \label{eq:pLquadratic}
    p_L = ax^2 + bx + c
\end{equation}
for the scaled variable
\begin{equation}
    \label{eq:rescaledvariable}
    x = (y - y^*)d^{\alpha}\, ,
\end{equation}
where $p_L$ is the logical error rate, $y\in\{e, p, q\}$ is the swept error variable, and $a, b, c, y^*, \alpha$ are fitting parameters; see Fig.~\ref{fig:universalansatz} for example calculations.
In Fig.~\ref{fig:crosssectionssupplementary}, we present additional cross sections of the threshold surfaces from Fig.~\ref{fig:mainresults}.

\bibliography{bib_dualfloq}

\begin{thebibliography}{72}%
\makeatletter
\providecommand \@ifxundefined [1]{%
 \@ifx{#1\undefined}
}%
\providecommand \@ifnum [1]{%
 \ifnum #1\expandafter \@firstoftwo
 \else \expandafter \@secondoftwo
 \fi
}%
\providecommand \@ifx [1]{%
 \ifx #1\expandafter \@firstoftwo
 \else \expandafter \@secondoftwo
 \fi
}%
\providecommand \natexlab [1]{#1}%
\providecommand \enquote  [1]{``#1''}%
\providecommand \bibnamefont  [1]{#1}%
\providecommand \bibfnamefont [1]{#1}%
\providecommand \citenamefont [1]{#1}%
\providecommand \href@noop [0]{\@secondoftwo}%
\providecommand \href [0]{\begingroup \@sanitize@url \@href}%
\providecommand \@href[1]{\@@startlink{#1}\@@href}%
\providecommand \@@href[1]{\endgroup#1\@@endlink}%
\providecommand \@sanitize@url [0]{\catcode `\\12\catcode `\$12\catcode
  `\&12\catcode `\#12\catcode `\^12\catcode `\_12\catcode `\%12\relax}%
\providecommand \@@startlink[1]{}%
\providecommand \@@endlink[0]{}%
\providecommand \url  [0]{\begingroup\@sanitize@url \@url }%
\providecommand \@url [1]{\endgroup\@href {#1}{\urlprefix }}%
\providecommand \urlprefix  [0]{URL }%
\providecommand \Eprint [0]{\href }%
\providecommand \doibase [0]{https://doi.org/}%
\providecommand \selectlanguage [0]{\@gobble}%
\providecommand \bibinfo  [0]{\@secondoftwo}%
\providecommand \bibfield  [0]{\@secondoftwo}%
\providecommand \translation [1]{[#1]}%
\providecommand \BibitemOpen [0]{}%
\providecommand \bibitemStop [0]{}%
\providecommand \bibitemNoStop [0]{.\EOS\space}%
\providecommand \EOS [0]{\spacefactor3000\relax}%
\providecommand \BibitemShut  [1]{\csname bibitem#1\endcsname}%
\let\auto@bib@innerbib\@empty
\bibitem [{\citenamefont {Shor}(1995)}]{Shor1995}%
  \BibitemOpen
  \bibfield  {author} {\bibinfo {author} {\bibfnamefont {P.~W.}\ \bibnamefont
  {Shor}},\ }\bibfield  {title} {\bibinfo {title} {{Scheme for reducing
  decoherence in quantum computer memory}},\ }\href
  {https://doi.org/10.1103/PhysRevA.52.R2493} {\bibfield  {journal} {\bibinfo
  {journal} {Physical Review A}\ }\textbf {\bibinfo {volume} {52}},\ \bibinfo
  {pages} {R2493} (\bibinfo {year} {1995})}\BibitemShut {NoStop}%
\bibitem [{\citenamefont {Steane}(1996)}]{Steane1996}%
  \BibitemOpen
  \bibfield  {author} {\bibinfo {author} {\bibfnamefont {A.~M.}\ \bibnamefont
  {Steane}},\ }\bibfield  {title} {\bibinfo {title} {Error correcting codes in
  quantum theory},\ }\href {https://doi.org/10.1103/PhysRevLett.77.793}
  {\bibfield  {journal} {\bibinfo  {journal} {Physical Review Letters}\
  }\textbf {\bibinfo {volume} {77}},\ \bibinfo {pages} {793} (\bibinfo {year}
  {1996})}\BibitemShut {NoStop}%
\bibitem [{\citenamefont {Shor}(1996)}]{Shor1996}%
  \BibitemOpen
  \bibfield  {author} {\bibinfo {author} {\bibfnamefont {P.}~\bibnamefont
  {Shor}},\ }\bibfield  {title} {\bibinfo {title} {{Fault-tolerant quantum
  computation}},\ }in\ \href {https://doi.org/10.1109/SFCS.1996.548464} {\emph
  {\bibinfo {booktitle} {Proceedings of 37th Conference on Foundations of
  Computer Science}}}\ (\bibinfo  {publisher} {IEEE Comput. Soc. Press},\
  \bibinfo {year} {1996})\ pp.\ \bibinfo {pages} {56--65}\BibitemShut {NoStop}%
\bibitem [{\citenamefont {Cochrane}\ \emph {et~al.}(1999)\citenamefont
  {Cochrane}, \citenamefont {Milburn},\ and\ \citenamefont
  {Munro}}]{Cochrane1999}%
  \BibitemOpen
  \bibfield  {author} {\bibinfo {author} {\bibfnamefont {P.~T.}\ \bibnamefont
  {Cochrane}}, \bibinfo {author} {\bibfnamefont {G.~J.}\ \bibnamefont
  {Milburn}},\ and\ \bibinfo {author} {\bibfnamefont {W.~J.}\ \bibnamefont
  {Munro}},\ }\bibfield  {title} {\bibinfo {title} {Macroscopically distinct
  quantum-superposition states as a bosonic code for amplitude damping},\
  }\href {https://doi.org/10.1103/PhysRevA.59.2631} {\bibfield  {journal}
  {\bibinfo  {journal} {Physical Review A}\ }\textbf {\bibinfo {volume} {59}},\
  \bibinfo {pages} {2631} (\bibinfo {year} {1999})}\BibitemShut {NoStop}%
\bibitem [{\citenamefont {Mirrahimi}\ \emph {et~al.}(2014)\citenamefont
  {Mirrahimi}, \citenamefont {Leghtas}, \citenamefont {Albert}, \citenamefont
  {Touzard}, \citenamefont {Schoelkopf}, \citenamefont {Jiang},\ and\
  \citenamefont {Devoret}}]{Mirrahimi2014}%
  \BibitemOpen
  \bibfield  {author} {\bibinfo {author} {\bibfnamefont {M.}~\bibnamefont
  {Mirrahimi}}, \bibinfo {author} {\bibfnamefont {Z.}~\bibnamefont {Leghtas}},
  \bibinfo {author} {\bibfnamefont {V.~V.}\ \bibnamefont {Albert}}, \bibinfo
  {author} {\bibfnamefont {S.}~\bibnamefont {Touzard}}, \bibinfo {author}
  {\bibfnamefont {R.~J.}\ \bibnamefont {Schoelkopf}}, \bibinfo {author}
  {\bibfnamefont {L.}~\bibnamefont {Jiang}},\ and\ \bibinfo {author}
  {\bibfnamefont {M.~H.}\ \bibnamefont {Devoret}},\ }\bibfield  {title}
  {\bibinfo {title} {Dynamically protected cat-qubits: a new paradigm for
  universal quantum computation},\ }\href
  {https://doi.org/10.1088/1367-2630/16/4/045014} {\bibfield  {journal}
  {\bibinfo  {journal} {New Journal of Physics}\ }\textbf {\bibinfo {volume}
  {16}},\ \bibinfo {pages} {045014} (\bibinfo {year} {2014})}\BibitemShut
  {NoStop}%
\bibitem [{\citenamefont {Ofek}\ \emph {et~al.}(2016)\citenamefont {Ofek},
  \citenamefont {Petrenko}, \citenamefont {Heeres}, \citenamefont {Reinhold},
  \citenamefont {Leghtas}, \citenamefont {Vlastakis}, \citenamefont {Liu},
  \citenamefont {Frunzio}, \citenamefont {Girvin}, \citenamefont {Jiang} \emph
  {et~al.}}]{Ofek2016}%
  \BibitemOpen
  \bibfield  {author} {\bibinfo {author} {\bibfnamefont {N.}~\bibnamefont
  {Ofek}}, \bibinfo {author} {\bibfnamefont {A.}~\bibnamefont {Petrenko}},
  \bibinfo {author} {\bibfnamefont {R.}~\bibnamefont {Heeres}}, \bibinfo
  {author} {\bibfnamefont {P.}~\bibnamefont {Reinhold}}, \bibinfo {author}
  {\bibfnamefont {Z.}~\bibnamefont {Leghtas}}, \bibinfo {author} {\bibfnamefont
  {B.}~\bibnamefont {Vlastakis}}, \bibinfo {author} {\bibfnamefont
  {Y.}~\bibnamefont {Liu}}, \bibinfo {author} {\bibfnamefont {L.}~\bibnamefont
  {Frunzio}}, \bibinfo {author} {\bibfnamefont {S.}~\bibnamefont {Girvin}},
  \bibinfo {author} {\bibfnamefont {L.}~\bibnamefont {Jiang}}, \emph {et~al.},\
  }\bibfield  {title} {\bibinfo {title} {Extending the lifetime of a quantum
  bit with error correction in superconducting circuits},\ }\href
  {https://doi.org/10.1038/nature18949} {\bibfield  {journal} {\bibinfo
  {journal} {Nature}\ }\textbf {\bibinfo {volume} {536}},\ \bibinfo {pages}
  {441} (\bibinfo {year} {2016})}\BibitemShut {NoStop}%
\bibitem [{\citenamefont {Puri}\ \emph {et~al.}(2019)\citenamefont {Puri},
  \citenamefont {Grimm}, \citenamefont {Campagne-Ibarcq}, \citenamefont
  {Eickbusch}, \citenamefont {Noh}, \citenamefont {Roberts}, \citenamefont
  {Jiang}, \citenamefont {Mirrahimi}, \citenamefont {Devoret},\ and\
  \citenamefont {Girvin}}]{Puri2019}%
  \BibitemOpen
  \bibfield  {author} {\bibinfo {author} {\bibfnamefont {S.}~\bibnamefont
  {Puri}}, \bibinfo {author} {\bibfnamefont {A.}~\bibnamefont {Grimm}},
  \bibinfo {author} {\bibfnamefont {P.}~\bibnamefont {Campagne-Ibarcq}},
  \bibinfo {author} {\bibfnamefont {A.}~\bibnamefont {Eickbusch}}, \bibinfo
  {author} {\bibfnamefont {K.}~\bibnamefont {Noh}}, \bibinfo {author}
  {\bibfnamefont {G.}~\bibnamefont {Roberts}}, \bibinfo {author} {\bibfnamefont
  {L.}~\bibnamefont {Jiang}}, \bibinfo {author} {\bibfnamefont
  {M.}~\bibnamefont {Mirrahimi}}, \bibinfo {author} {\bibfnamefont {M.~H.}\
  \bibnamefont {Devoret}},\ and\ \bibinfo {author} {\bibfnamefont {S.~M.}\
  \bibnamefont {Girvin}},\ }\bibfield  {title} {\bibinfo {title} {Stabilized
  cat in a driven nonlinear cavity: a fault-tolerant error syndrome detector},\
  }\href {https://doi.org/10.1103/PhysRevX.9.041009} {\bibfield  {journal}
  {\bibinfo  {journal} {Physical Review X}\ }\textbf {\bibinfo {volume} {9}},\
  \bibinfo {pages} {041009} (\bibinfo {year} {2019})}\BibitemShut {NoStop}%
\bibitem [{\citenamefont {Guillaud}\ and\ \citenamefont
  {Mirrahimi}(2019)}]{Guillaud2019}%
  \BibitemOpen
  \bibfield  {author} {\bibinfo {author} {\bibfnamefont {J.}~\bibnamefont
  {Guillaud}}\ and\ \bibinfo {author} {\bibfnamefont {M.}~\bibnamefont
  {Mirrahimi}},\ }\bibfield  {title} {\bibinfo {title} {Repetition cat qubits
  for fault-tolerant quantum computation},\ }\href
  {https://doi.org/10.1103/PhysRevX.9.041053} {\bibfield  {journal} {\bibinfo
  {journal} {Physical Review X}\ }\textbf {\bibinfo {volume} {9}},\ \bibinfo
  {pages} {041053} (\bibinfo {year} {2019})}\BibitemShut {NoStop}%
\bibitem [{\citenamefont {Tuckett}\ \emph {et~al.}(2018)\citenamefont
  {Tuckett}, \citenamefont {Bartlett},\ and\ \citenamefont
  {Flammia}}]{Tuckett2018}%
  \BibitemOpen
  \bibfield  {author} {\bibinfo {author} {\bibfnamefont {D.~K.}\ \bibnamefont
  {Tuckett}}, \bibinfo {author} {\bibfnamefont {S.~D.}\ \bibnamefont
  {Bartlett}},\ and\ \bibinfo {author} {\bibfnamefont {S.~T.}\ \bibnamefont
  {Flammia}},\ }\bibfield  {title} {\bibinfo {title} {Ultrahigh error threshold
  for surface codes with biased noise},\ }\href
  {https://doi.org/10.1103/PhysRevLett.120.050505} {\bibfield  {journal}
  {\bibinfo  {journal} {Physical Review Letters}\ }\textbf {\bibinfo {volume}
  {120}},\ \bibinfo {pages} {050505} (\bibinfo {year} {2018})}\BibitemShut
  {NoStop}%
\bibitem [{\citenamefont {Tuckett}\ \emph {et~al.}(2019)\citenamefont
  {Tuckett}, \citenamefont {Darmawan}, \citenamefont {Chubb}, \citenamefont
  {Bravyi}, \citenamefont {Bartlett},\ and\ \citenamefont
  {Flammia}}]{Tuckett2019}%
  \BibitemOpen
  \bibfield  {author} {\bibinfo {author} {\bibfnamefont {D.~K.}\ \bibnamefont
  {Tuckett}}, \bibinfo {author} {\bibfnamefont {A.~S.}\ \bibnamefont
  {Darmawan}}, \bibinfo {author} {\bibfnamefont {C.~T.}\ \bibnamefont {Chubb}},
  \bibinfo {author} {\bibfnamefont {S.}~\bibnamefont {Bravyi}}, \bibinfo
  {author} {\bibfnamefont {S.~D.}\ \bibnamefont {Bartlett}},\ and\ \bibinfo
  {author} {\bibfnamefont {S.~T.}\ \bibnamefont {Flammia}},\ }\bibfield
  {title} {\bibinfo {title} {Tailoring surface codes for highly biased noise},\
  }\href {https://doi.org/10.1103/PhysRevX.9.041031} {\bibfield  {journal}
  {\bibinfo  {journal} {Physical Review X}\ }\textbf {\bibinfo {volume} {9}},\
  \bibinfo {pages} {041031} (\bibinfo {year} {2019})}\BibitemShut {NoStop}%
\bibitem [{\citenamefont {Tuckett}\ \emph {et~al.}(2020)\citenamefont
  {Tuckett}, \citenamefont {Bartlett}, \citenamefont {Flammia},\ and\
  \citenamefont {Brown}}]{Tuckett2020}%
  \BibitemOpen
  \bibfield  {author} {\bibinfo {author} {\bibfnamefont {D.~K.}\ \bibnamefont
  {Tuckett}}, \bibinfo {author} {\bibfnamefont {S.~D.}\ \bibnamefont
  {Bartlett}}, \bibinfo {author} {\bibfnamefont {S.~T.}\ \bibnamefont
  {Flammia}},\ and\ \bibinfo {author} {\bibfnamefont {B.~J.}\ \bibnamefont
  {Brown}},\ }\bibfield  {title} {\bibinfo {title} {Fault-tolerant thresholds
  for the surface code in excess of 5\% under biased noise},\ }\href
  {https://doi.org/10.1103/PhysRevLett.124.130501} {\bibfield  {journal}
  {\bibinfo  {journal} {Physical Review Letters}\ }\textbf {\bibinfo {volume}
  {124}},\ \bibinfo {pages} {130501} (\bibinfo {year} {2020})}\BibitemShut
  {NoStop}%
\bibitem [{\citenamefont {Bonilla~Ataides}\ \emph {et~al.}(2021)\citenamefont
  {Bonilla~Ataides}, \citenamefont {Tuckett}, \citenamefont {Bartlett},
  \citenamefont {Flammia},\ and\ \citenamefont {Brown}}]{Bonilla2021}%
  \BibitemOpen
  \bibfield  {author} {\bibinfo {author} {\bibfnamefont {J.~P.}\ \bibnamefont
  {Bonilla~Ataides}}, \bibinfo {author} {\bibfnamefont {D.~K.}\ \bibnamefont
  {Tuckett}}, \bibinfo {author} {\bibfnamefont {S.~D.}\ \bibnamefont
  {Bartlett}}, \bibinfo {author} {\bibfnamefont {S.~T.}\ \bibnamefont
  {Flammia}},\ and\ \bibinfo {author} {\bibfnamefont {B.~J.}\ \bibnamefont
  {Brown}},\ }\bibfield  {title} {\bibinfo {title} {The xzzx surface code},\
  }\href {https://doi.org/10.1038/s41467-021-22274-1} {\bibfield  {journal}
  {\bibinfo  {journal} {Nature Communications}\ }\textbf {\bibinfo {volume}
  {12}},\ \bibinfo {pages} {1} (\bibinfo {year} {2021})}\BibitemShut {NoStop}%
\bibitem [{\citenamefont {Dua}\ \emph {et~al.}(2022)\citenamefont {Dua},
  \citenamefont {Kubica}, \citenamefont {Jiang}, \citenamefont {Flammia},\ and\
  \citenamefont {Gullans}}]{Dua2022}%
  \BibitemOpen
  \bibfield  {author} {\bibinfo {author} {\bibfnamefont {A.}~\bibnamefont
  {Dua}}, \bibinfo {author} {\bibfnamefont {A.}~\bibnamefont {Kubica}},
  \bibinfo {author} {\bibfnamefont {L.}~\bibnamefont {Jiang}}, \bibinfo
  {author} {\bibfnamefont {S.~T.}\ \bibnamefont {Flammia}},\ and\ \bibinfo
  {author} {\bibfnamefont {M.~J.}\ \bibnamefont {Gullans}},\ }\href
  {https://doi.org/10.48550/ARXIV.2201.07802} {\bibinfo {title}
  {Clifford-deformed surface codes}},\ \bibinfo {howpublished}
  {arXiv:2201.07802} (\bibinfo {year} {2022})\BibitemShut {NoStop}%
\bibitem [{\citenamefont {Xu}\ \emph {et~al.}(2023{\natexlab{a}})\citenamefont
  {Xu}, \citenamefont {Mannucci}, \citenamefont {Seif}, \citenamefont {Kubica},
  \citenamefont {Flammia},\ and\ \citenamefont {Jiang}}]{Xu2023}%
  \BibitemOpen
  \bibfield  {author} {\bibinfo {author} {\bibfnamefont {Q.}~\bibnamefont
  {Xu}}, \bibinfo {author} {\bibfnamefont {N.}~\bibnamefont {Mannucci}},
  \bibinfo {author} {\bibfnamefont {A.}~\bibnamefont {Seif}}, \bibinfo {author}
  {\bibfnamefont {A.}~\bibnamefont {Kubica}}, \bibinfo {author} {\bibfnamefont
  {S.~T.}\ \bibnamefont {Flammia}},\ and\ \bibinfo {author} {\bibfnamefont
  {L.}~\bibnamefont {Jiang}},\ }\bibfield  {title} {\bibinfo {title} {Tailored
  xzzx codes for biased noise},\ }\href
  {https://doi.org/10.1103/PhysRevResearch.5.013035} {\bibfield  {journal}
  {\bibinfo  {journal} {Physical Review Research}\ }\textbf {\bibinfo {volume}
  {5}},\ \bibinfo {pages} {013035} (\bibinfo {year}
  {2023}{\natexlab{a}})}\BibitemShut {NoStop}%
\bibitem [{\citenamefont {Higgott}\ \emph {et~al.}(2023)\citenamefont
  {Higgott}, \citenamefont {Bohdanowicz}, \citenamefont {Kubica}, \citenamefont
  {Flammia},\ and\ \citenamefont {Campbell}}]{Higgott2023}%
  \BibitemOpen
  \bibfield  {author} {\bibinfo {author} {\bibfnamefont {O.}~\bibnamefont
  {Higgott}}, \bibinfo {author} {\bibfnamefont {T.~C.}\ \bibnamefont
  {Bohdanowicz}}, \bibinfo {author} {\bibfnamefont {A.}~\bibnamefont {Kubica}},
  \bibinfo {author} {\bibfnamefont {S.~T.}\ \bibnamefont {Flammia}},\ and\
  \bibinfo {author} {\bibfnamefont {E.~T.}\ \bibnamefont {Campbell}},\
  }\bibfield  {title} {\bibinfo {title} {Improved decoding of circuit noise and
  fragile boundaries of tailored surface codes},\ }\href
  {https://doi.org/10.1103/PhysRevX.13.031007} {\bibfield  {journal} {\bibinfo
  {journal} {Physical Review X}\ }\textbf {\bibinfo {volume} {13}},\ \bibinfo
  {pages} {031007} (\bibinfo {year} {2023})}\BibitemShut {NoStop}%
\bibitem [{\citenamefont {Wu}\ \emph {et~al.}(2022)\citenamefont {Wu},
  \citenamefont {Kolkowitz}, \citenamefont {Puri},\ and\ \citenamefont
  {Thompson}}]{Wu2022}%
  \BibitemOpen
  \bibfield  {author} {\bibinfo {author} {\bibfnamefont {Y.}~\bibnamefont
  {Wu}}, \bibinfo {author} {\bibfnamefont {S.}~\bibnamefont {Kolkowitz}},
  \bibinfo {author} {\bibfnamefont {S.}~\bibnamefont {Puri}},\ and\ \bibinfo
  {author} {\bibfnamefont {J.~D.}\ \bibnamefont {Thompson}},\ }\bibfield
  {title} {\bibinfo {title} {Erasure conversion for fault-tolerant quantum
  computing in alkaline earth rydberg atom arrays},\ }\href
  {https://doi.org/10.1038/s41467-022-32094-6} {\bibfield  {journal} {\bibinfo
  {journal} {Nature Communications}\ }\textbf {\bibinfo {volume} {13}},\
  \bibinfo {pages} {4657} (\bibinfo {year} {2022})}\BibitemShut {NoStop}%
\bibitem [{\citenamefont {Kang}\ \emph {et~al.}(2023)\citenamefont {Kang},
  \citenamefont {Campbell},\ and\ \citenamefont {Brown}}]{Kang2023}%
  \BibitemOpen
  \bibfield  {author} {\bibinfo {author} {\bibfnamefont {M.}~\bibnamefont
  {Kang}}, \bibinfo {author} {\bibfnamefont {W.~C.}\ \bibnamefont {Campbell}},\
  and\ \bibinfo {author} {\bibfnamefont {K.~R.}\ \bibnamefont {Brown}},\
  }\bibfield  {title} {\bibinfo {title} {Quantum error correction with
  metastable states of trapped ions using erasure conversion},\ }\href
  {https://doi.org/10.1103/PRXQuantum.4.020358} {\bibfield  {journal} {\bibinfo
   {journal} {PRX Quantum}\ }\textbf {\bibinfo {volume} {4}},\ \bibinfo {pages}
  {020358} (\bibinfo {year} {2023})}\BibitemShut {NoStop}%
\bibitem [{\citenamefont {Kubica}\ \emph {et~al.}(2023)\citenamefont {Kubica},
  \citenamefont {Haim}, \citenamefont {Vaknin}, \citenamefont {Levine},
  \citenamefont {Brand\~ao},\ and\ \citenamefont {Retzker}}]{Kubica2023}%
  \BibitemOpen
  \bibfield  {author} {\bibinfo {author} {\bibfnamefont {A.}~\bibnamefont
  {Kubica}}, \bibinfo {author} {\bibfnamefont {A.}~\bibnamefont {Haim}},
  \bibinfo {author} {\bibfnamefont {Y.}~\bibnamefont {Vaknin}}, \bibinfo
  {author} {\bibfnamefont {H.}~\bibnamefont {Levine}}, \bibinfo {author}
  {\bibfnamefont {F.}~\bibnamefont {Brand\~ao}},\ and\ \bibinfo {author}
  {\bibfnamefont {A.}~\bibnamefont {Retzker}},\ }\bibfield  {title} {\bibinfo
  {title} {Erasure qubits: Overcoming the ${T}_{1}$ limit in superconducting
  circuits},\ }\href {https://doi.org/10.1103/PhysRevX.13.041022} {\bibfield
  {journal} {\bibinfo  {journal} {Physical Review X}\ }\textbf {\bibinfo
  {volume} {13}},\ \bibinfo {pages} {041022} (\bibinfo {year}
  {2023})}\BibitemShut {NoStop}%
\bibitem [{\citenamefont {Teoh}\ \emph {et~al.}(2023)\citenamefont {Teoh},
  \citenamefont {Winkel}, \citenamefont {Babla}, \citenamefont {Chapman},
  \citenamefont {Claes}, \citenamefont {de~Graaf}, \citenamefont {Garmon},
  \citenamefont {Kalfus}, \citenamefont {Lu}, \citenamefont {Maiti},
  \citenamefont {Sahay}, \citenamefont {Thakur}, \citenamefont {Tsunoda},
  \citenamefont {Xue}, \citenamefont {Frunzio}, \citenamefont {Girvin},
  \citenamefont {Puri},\ and\ \citenamefont {Schoelkopf}}]{Teoh2023}%
  \BibitemOpen
  \bibfield  {author} {\bibinfo {author} {\bibfnamefont {J.~D.}\ \bibnamefont
  {Teoh}}, \bibinfo {author} {\bibfnamefont {P.}~\bibnamefont {Winkel}},
  \bibinfo {author} {\bibfnamefont {H.~K.}\ \bibnamefont {Babla}}, \bibinfo
  {author} {\bibfnamefont {B.~J.}\ \bibnamefont {Chapman}}, \bibinfo {author}
  {\bibfnamefont {J.}~\bibnamefont {Claes}}, \bibinfo {author} {\bibfnamefont
  {S.~J.}\ \bibnamefont {de~Graaf}}, \bibinfo {author} {\bibfnamefont
  {J.~W.~O.}\ \bibnamefont {Garmon}}, \bibinfo {author} {\bibfnamefont {W.~D.}\
  \bibnamefont {Kalfus}}, \bibinfo {author} {\bibfnamefont {Y.}~\bibnamefont
  {Lu}}, \bibinfo {author} {\bibfnamefont {A.}~\bibnamefont {Maiti}}, \bibinfo
  {author} {\bibfnamefont {K.}~\bibnamefont {Sahay}}, \bibinfo {author}
  {\bibfnamefont {N.}~\bibnamefont {Thakur}}, \bibinfo {author} {\bibfnamefont
  {T.}~\bibnamefont {Tsunoda}}, \bibinfo {author} {\bibfnamefont {S.~H.}\
  \bibnamefont {Xue}}, \bibinfo {author} {\bibfnamefont {L.}~\bibnamefont
  {Frunzio}}, \bibinfo {author} {\bibfnamefont {S.~M.}\ \bibnamefont {Girvin}},
  \bibinfo {author} {\bibfnamefont {S.}~\bibnamefont {Puri}},\ and\ \bibinfo
  {author} {\bibfnamefont {R.~J.}\ \bibnamefont {Schoelkopf}},\ }\bibfield
  {title} {\bibinfo {title} {Dual-rail encoding with superconducting
  cavities},\ }\href {https://doi.org/10.1073/pnas.2221736120} {\bibfield
  {journal} {\bibinfo  {journal} {Proceedings of the National Academy of
  Sciences}\ }\textbf {\bibinfo {volume} {120}},\ \bibinfo {pages}
  {e2221736120} (\bibinfo {year} {2023})}\BibitemShut {NoStop}%
\bibitem [{\citenamefont {Ma}\ \emph {et~al.}(2023)\citenamefont {Ma},
  \citenamefont {Liu}, \citenamefont {Peng}, \citenamefont {Zhang},
  \citenamefont {Jandura}, \citenamefont {Claes}, \citenamefont {Burgers},
  \citenamefont {Pupillo}, \citenamefont {Puri},\ and\ \citenamefont
  {Thompson}}]{Ma2023}%
  \BibitemOpen
  \bibfield  {author} {\bibinfo {author} {\bibfnamefont {S.}~\bibnamefont
  {Ma}}, \bibinfo {author} {\bibfnamefont {G.}~\bibnamefont {Liu}}, \bibinfo
  {author} {\bibfnamefont {P.}~\bibnamefont {Peng}}, \bibinfo {author}
  {\bibfnamefont {B.}~\bibnamefont {Zhang}}, \bibinfo {author} {\bibfnamefont
  {S.}~\bibnamefont {Jandura}}, \bibinfo {author} {\bibfnamefont
  {J.}~\bibnamefont {Claes}}, \bibinfo {author} {\bibfnamefont {A.~P.}\
  \bibnamefont {Burgers}}, \bibinfo {author} {\bibfnamefont {G.}~\bibnamefont
  {Pupillo}}, \bibinfo {author} {\bibfnamefont {S.}~\bibnamefont {Puri}},\ and\
  \bibinfo {author} {\bibfnamefont {J.~D.}\ \bibnamefont {Thompson}},\
  }\bibfield  {title} {\bibinfo {title} {High-fidelity gates and mid-circuit
  erasure conversion in an atomic qubit},\ }\href
  {https://doi.org/10.1038/s41586-023-06438-1} {\bibfield  {journal} {\bibinfo
  {journal} {Nature}\ }\textbf {\bibinfo {volume} {622}},\ \bibinfo {pages}
  {279–284} (\bibinfo {year} {2023})}\BibitemShut {NoStop}%
\bibitem [{\citenamefont {Scholl}\ \emph {et~al.}(2023)\citenamefont {Scholl},
  \citenamefont {Shaw}, \citenamefont {Tsai}, \citenamefont {Finkelstein},
  \citenamefont {Choi},\ and\ \citenamefont {Endres}}]{Scholl2023}%
  \BibitemOpen
  \bibfield  {author} {\bibinfo {author} {\bibfnamefont {P.}~\bibnamefont
  {Scholl}}, \bibinfo {author} {\bibfnamefont {A.~L.}\ \bibnamefont {Shaw}},
  \bibinfo {author} {\bibfnamefont {R.~B.-S.}\ \bibnamefont {Tsai}}, \bibinfo
  {author} {\bibfnamefont {R.}~\bibnamefont {Finkelstein}}, \bibinfo {author}
  {\bibfnamefont {J.}~\bibnamefont {Choi}},\ and\ \bibinfo {author}
  {\bibfnamefont {M.}~\bibnamefont {Endres}},\ }\bibfield  {title} {\bibinfo
  {title} {Erasure conversion in a high-fidelity rydberg quantum simulator},\
  }\href {https://doi.org/10.1038/s41586-023-06516-4} {\bibfield  {journal}
  {\bibinfo  {journal} {Nature}\ }\textbf {\bibinfo {volume} {622}},\ \bibinfo
  {pages} {273–278} (\bibinfo {year} {2023})}\BibitemShut {NoStop}%
\bibitem [{\citenamefont {Chou}\ \emph {et~al.}(2023)\citenamefont {Chou},
  \citenamefont {Shemma}, \citenamefont {McCarrick}, \citenamefont {Chien},
  \citenamefont {Teoh}, \citenamefont {Winkel}, \citenamefont {Anderson},
  \citenamefont {Chen}, \citenamefont {Curtis},\ and\ \citenamefont
  {de~Graaf}}]{Chou2023}%
  \BibitemOpen
  \bibfield  {author} {\bibinfo {author} {\bibfnamefont {K.~S.}\ \bibnamefont
  {Chou}}, \bibinfo {author} {\bibfnamefont {T.}~\bibnamefont {Shemma}},
  \bibinfo {author} {\bibfnamefont {H.}~\bibnamefont {McCarrick}}, \bibinfo
  {author} {\bibfnamefont {T.-C.}\ \bibnamefont {Chien}}, \bibinfo {author}
  {\bibfnamefont {J.~D.}\ \bibnamefont {Teoh}}, \bibinfo {author}
  {\bibfnamefont {P.}~\bibnamefont {Winkel}}, \bibinfo {author} {\bibfnamefont
  {A.}~\bibnamefont {Anderson}}, \bibinfo {author} {\bibfnamefont
  {J.}~\bibnamefont {Chen}}, \bibinfo {author} {\bibfnamefont {J.}~\bibnamefont
  {Curtis}},\ and\ \bibinfo {author} {\bibfnamefont {S.~J. e.~a.}\ \bibnamefont
  {de~Graaf}},\ }\href {https://doi.org/10.48550/arXiv.2307.03169} {\bibinfo
  {title} {Demonstrating a superconducting dual-rail cavity qubit with
  erasure-detected logical measurements}},\ \bibinfo {howpublished}
  {arXiv:2307.03169} (\bibinfo {year} {2023})\BibitemShut {NoStop}%
\bibitem [{\citenamefont {Levine}\ \emph {et~al.}(2023)\citenamefont {Levine},
  \citenamefont {Haim}, \citenamefont {Hung}, \citenamefont {Alidoust},
  \citenamefont {Kalaee}, \citenamefont {DeLorenzo}, \citenamefont {Wollack},
  \citenamefont {Arriola}, \citenamefont {Khalajhedayati}, \citenamefont
  {Sanil}, \citenamefont {Vaknin}, \citenamefont {Kubica}, \citenamefont
  {Clerk}, \citenamefont {Hover}, \citenamefont {{Brand\~{a}o}}, \citenamefont
  {Retzker},\ and\ \citenamefont {Painter}}]{Levine2023}%
  \BibitemOpen
  \bibfield  {author} {\bibinfo {author} {\bibfnamefont {H.}~\bibnamefont
  {Levine}}, \bibinfo {author} {\bibfnamefont {A.}~\bibnamefont {Haim}},
  \bibinfo {author} {\bibfnamefont {J.~S.~C.}\ \bibnamefont {Hung}}, \bibinfo
  {author} {\bibfnamefont {N.}~\bibnamefont {Alidoust}}, \bibinfo {author}
  {\bibfnamefont {M.}~\bibnamefont {Kalaee}}, \bibinfo {author} {\bibfnamefont
  {L.}~\bibnamefont {DeLorenzo}}, \bibinfo {author} {\bibfnamefont {E.~A.}\
  \bibnamefont {Wollack}}, \bibinfo {author} {\bibfnamefont {P.~A.}\
  \bibnamefont {Arriola}}, \bibinfo {author} {\bibfnamefont {A.}~\bibnamefont
  {Khalajhedayati}}, \bibinfo {author} {\bibfnamefont {R.}~\bibnamefont
  {Sanil}}, \bibinfo {author} {\bibfnamefont {Y.}~\bibnamefont {Vaknin}},
  \bibinfo {author} {\bibfnamefont {A.}~\bibnamefont {Kubica}}, \bibinfo
  {author} {\bibfnamefont {A.~A.}\ \bibnamefont {Clerk}}, \bibinfo {author}
  {\bibfnamefont {D.}~\bibnamefont {Hover}}, \bibinfo {author} {\bibfnamefont
  {F.}~\bibnamefont {{Brand\~{a}o}}}, \bibinfo {author} {\bibfnamefont
  {A.}~\bibnamefont {Retzker}},\ and\ \bibinfo {author} {\bibfnamefont
  {O.}~\bibnamefont {Painter}},\ }\href
  {https://doi.org/https://doi.org/10.48550/arXiv.2307.08737} {\bibinfo {title}
  {Demonstrating a long-coherence dual-rail erasure qubit using tunable
  transmons}},\ \bibinfo {howpublished} {arXiv:2307.08737} (\bibinfo {year}
  {2023})\BibitemShut {NoStop}%
\bibitem [{\citenamefont {Koottandavida}\ \emph {et~al.}(2023)\citenamefont
  {Koottandavida}, \citenamefont {Tsioutsios}, \citenamefont {Kargioti},
  \citenamefont {Smith}, \citenamefont {Joshi}, \citenamefont {Dai},
  \citenamefont {Teoh}, \citenamefont {Curtis}, \citenamefont {Frunzio},
  \citenamefont {Schoelkopf},\ and\ \citenamefont
  {Devoret}}]{Koottandavida2023}%
  \BibitemOpen
  \bibfield  {author} {\bibinfo {author} {\bibfnamefont {A.}~\bibnamefont
  {Koottandavida}}, \bibinfo {author} {\bibfnamefont {I.}~\bibnamefont
  {Tsioutsios}}, \bibinfo {author} {\bibfnamefont {A.}~\bibnamefont
  {Kargioti}}, \bibinfo {author} {\bibfnamefont {C.~R.}\ \bibnamefont {Smith}},
  \bibinfo {author} {\bibfnamefont {V.~R.}\ \bibnamefont {Joshi}}, \bibinfo
  {author} {\bibfnamefont {W.}~\bibnamefont {Dai}}, \bibinfo {author}
  {\bibfnamefont {J.~D.}\ \bibnamefont {Teoh}}, \bibinfo {author}
  {\bibfnamefont {J.~C.}\ \bibnamefont {Curtis}}, \bibinfo {author}
  {\bibfnamefont {L.}~\bibnamefont {Frunzio}}, \bibinfo {author} {\bibfnamefont
  {R.~J.}\ \bibnamefont {Schoelkopf}},\ and\ \bibinfo {author} {\bibfnamefont
  {M.~H.}\ \bibnamefont {Devoret}},\ }\href
  {https://doi.org/10.48550/arXiv.2311.04423} {\bibinfo {title} {Erasure
  detection of a dual-rail qubit encoded in a double-post superconducting
  cavity}},\ \bibinfo {howpublished} {arXiv:2311.04423} (\bibinfo {year}
  {2023})\BibitemShut {NoStop}%
\bibitem [{\citenamefont {Grassl}\ \emph {et~al.}(1997)\citenamefont {Grassl},
  \citenamefont {Beth},\ and\ \citenamefont {Pellizzari}}]{Grassl1997}%
  \BibitemOpen
  \bibfield  {author} {\bibinfo {author} {\bibfnamefont {M.}~\bibnamefont
  {Grassl}}, \bibinfo {author} {\bibfnamefont {T.}~\bibnamefont {Beth}},\ and\
  \bibinfo {author} {\bibfnamefont {T.}~\bibnamefont {Pellizzari}},\ }\bibfield
   {title} {\bibinfo {title} {Codes for the quantum erasure channel},\ }\href
  {https://doi.org/10.1103/PhysRevA.56.33} {\bibfield  {journal} {\bibinfo
  {journal} {Physical Review A}\ }\textbf {\bibinfo {volume} {56}},\ \bibinfo
  {pages} {33} (\bibinfo {year} {1997})}\BibitemShut {NoStop}%
\bibitem [{\citenamefont {Stace}\ \emph {et~al.}(2009)\citenamefont {Stace},
  \citenamefont {Barrett},\ and\ \citenamefont {Doherty}}]{Stace2009}%
  \BibitemOpen
  \bibfield  {author} {\bibinfo {author} {\bibfnamefont {T.~M.}\ \bibnamefont
  {Stace}}, \bibinfo {author} {\bibfnamefont {S.~D.}\ \bibnamefont {Barrett}},\
  and\ \bibinfo {author} {\bibfnamefont {A.~C.}\ \bibnamefont {Doherty}},\
  }\bibfield  {title} {\bibinfo {title} {Thresholds for topological codes in
  the presence of loss},\ }\href
  {https://doi.org/10.1103/PhysRevLett.102.200501} {\bibfield  {journal}
  {\bibinfo  {journal} {Physical Review Letters}\ }\textbf {\bibinfo {volume}
  {102}},\ \bibinfo {pages} {200501} (\bibinfo {year} {2009})}\BibitemShut
  {NoStop}%
\bibitem [{\citenamefont {Delfosse}\ and\ \citenamefont
  {Nickerson}(2021)}]{Delfosse2021}%
  \BibitemOpen
  \bibfield  {author} {\bibinfo {author} {\bibfnamefont {N.}~\bibnamefont
  {Delfosse}}\ and\ \bibinfo {author} {\bibfnamefont {N.~H.}\ \bibnamefont
  {Nickerson}},\ }\bibfield  {title} {\bibinfo {title} {Almost-linear time
  decoding algorithm for topological codes},\ }\href
  {https://doi.org/10.22331/q-2021-12-02-595} {\bibfield  {journal} {\bibinfo
  {journal} {Quantum}\ }\textbf {\bibinfo {volume} {5}},\ \bibinfo {pages}
  {595} (\bibinfo {year} {2021})}\BibitemShut {NoStop}%
\bibitem [{\citenamefont {Sahay}\ \emph {et~al.}(2023)\citenamefont {Sahay},
  \citenamefont {Jin}, \citenamefont {Claes}, \citenamefont {Thompson},\ and\
  \citenamefont {Puri}}]{Sahay2023}%
  \BibitemOpen
  \bibfield  {author} {\bibinfo {author} {\bibfnamefont {K.}~\bibnamefont
  {Sahay}}, \bibinfo {author} {\bibfnamefont {J.}~\bibnamefont {Jin}}, \bibinfo
  {author} {\bibfnamefont {J.}~\bibnamefont {Claes}}, \bibinfo {author}
  {\bibfnamefont {J.~D.}\ \bibnamefont {Thompson}},\ and\ \bibinfo {author}
  {\bibfnamefont {S.}~\bibnamefont {Puri}},\ }\bibfield  {title} {\bibinfo
  {title} {High-threshold codes for neutral-atom qubits with biased erasure
  errors},\ }\href {https://doi.org/10.1103/physrevx.13.041013} {\bibfield
  {journal} {\bibinfo  {journal} {Physical Review X}\ }\textbf {\bibinfo
  {volume} {13}},\ \bibinfo {pages} {041013} (\bibinfo {year}
  {2023})}\BibitemShut {NoStop}%
\bibitem [{\citenamefont {Dennis}\ \emph {et~al.}(2002)\citenamefont {Dennis},
  \citenamefont {Kitaev}, \citenamefont {Landahl},\ and\ \citenamefont
  {Preskill}}]{Dennis2002}%
  \BibitemOpen
  \bibfield  {author} {\bibinfo {author} {\bibfnamefont {E.}~\bibnamefont
  {Dennis}}, \bibinfo {author} {\bibfnamefont {A.}~\bibnamefont {Kitaev}},
  \bibinfo {author} {\bibfnamefont {A.}~\bibnamefont {Landahl}},\ and\ \bibinfo
  {author} {\bibfnamefont {J.}~\bibnamefont {Preskill}},\ }\bibfield  {title}
  {\bibinfo {title} {Topological quantum memory},\ }\href
  {https://doi.org/10.1063/1.1499754} {\bibfield  {journal} {\bibinfo
  {journal} {Journal of Mathematical Physics}\ }\textbf {\bibinfo {volume}
  {43}},\ \bibinfo {pages} {4452} (\bibinfo {year} {2002})}\BibitemShut
  {NoStop}%
\bibitem [{\citenamefont {Hastings}\ and\ \citenamefont
  {Haah}(2021)}]{Hastings2021}%
  \BibitemOpen
  \bibfield  {author} {\bibinfo {author} {\bibfnamefont {M.~B.}\ \bibnamefont
  {Hastings}}\ and\ \bibinfo {author} {\bibfnamefont {J.}~\bibnamefont
  {Haah}},\ }\bibfield  {title} {\bibinfo {title} {Dynamically generated
  logical qubits},\ }\href {https://doi.org/10.22331/q-2021-10-19-564}
  {\bibfield  {journal} {\bibinfo  {journal} {Quantum}\ }\textbf {\bibinfo
  {volume} {5}},\ \bibinfo {pages} {564} (\bibinfo {year} {2021})}\BibitemShut
  {NoStop}%
\bibitem [{\citenamefont {Haah}\ and\ \citenamefont
  {Hastings}(2022)}]{Haah2022}%
  \BibitemOpen
  \bibfield  {author} {\bibinfo {author} {\bibfnamefont {J.}~\bibnamefont
  {Haah}}\ and\ \bibinfo {author} {\bibfnamefont {M.~B.}\ \bibnamefont
  {Hastings}},\ }\bibfield  {title} {\bibinfo {title} {Boundaries for the
  honeycomb code},\ }\href {https://doi.org/10.22331/q-2022-04-21-693}
  {\bibfield  {journal} {\bibinfo  {journal} {Quantum}\ }\textbf {\bibinfo
  {volume} {6}},\ \bibinfo {pages} {693} (\bibinfo {year} {2022})}\BibitemShut
  {NoStop}%
\bibitem [{\citenamefont {Duan}\ \emph {et~al.}(2010)\citenamefont {Duan},
  \citenamefont {Grassl}, \citenamefont {Ji},\ and\ \citenamefont
  {Zeng}}]{Duan2010}%
  \BibitemOpen
  \bibfield  {author} {\bibinfo {author} {\bibfnamefont {R.}~\bibnamefont
  {Duan}}, \bibinfo {author} {\bibfnamefont {M.}~\bibnamefont {Grassl}},
  \bibinfo {author} {\bibfnamefont {Z.}~\bibnamefont {Ji}},\ and\ \bibinfo
  {author} {\bibfnamefont {B.}~\bibnamefont {Zeng}},\ }\bibfield  {title}
  {\bibinfo {title} {Multi-error-correcting amplitude damping codes},\ }in\
  \href {https://doi.org/10.1109/ISIT.2010.5513648} {\emph {\bibinfo
  {booktitle} {2010 IEEE International Symposium on Information Theory}}}\
  (\bibinfo {year} {2010})\ pp.\ \bibinfo {pages} {2672--2676}\BibitemShut
  {NoStop}%
\bibitem [{\citenamefont {Gottesman}(1998)}]{Gottesman1998}%
  \BibitemOpen
  \bibfield  {author} {\bibinfo {author} {\bibfnamefont {D.}~\bibnamefont
  {Gottesman}},\ }\bibfield  {title} {\bibinfo {title} {The heisenberg
  representation of quantum computers},\ }in\ \href@noop {} {\emph {\bibinfo
  {booktitle} {Proc. XXII International Colloquium on Group Theoretical Methods
  in Physics, 1998}}}\ (\bibinfo {year} {1998})\ pp.\ \bibinfo {pages}
  {32--43}\BibitemShut {NoStop}%
\bibitem [{\citenamefont {Aaronson}\ and\ \citenamefont
  {Gottesman}(2004)}]{Aaronson2004}%
  \BibitemOpen
  \bibfield  {author} {\bibinfo {author} {\bibfnamefont {S.}~\bibnamefont
  {Aaronson}}\ and\ \bibinfo {author} {\bibfnamefont {D.}~\bibnamefont
  {Gottesman}},\ }\bibfield  {title} {\bibinfo {title} {Improved simulation of
  stabilizer circuits},\ }\href {https://doi.org/10.1103/physreva.70.052328}
  {\bibfield  {journal} {\bibinfo  {journal} {Physical Review A}\ }\textbf
  {\bibinfo {volume} {70}},\ \bibinfo {pages} {052328} (\bibinfo {year}
  {2004})}\BibitemShut {NoStop}%
\bibitem [{\citenamefont {Aliferis}\ and\ \citenamefont
  {Terhal}(2007)}]{AliferisTerhalleakage}%
  \BibitemOpen
  \bibfield  {author} {\bibinfo {author} {\bibfnamefont {P.}~\bibnamefont
  {Aliferis}}\ and\ \bibinfo {author} {\bibfnamefont {B.~M.}\ \bibnamefont
  {Terhal}},\ }\bibfield  {title} {\bibinfo {title} {Fault-tolerant quantum
  computation for local leakage faults},\ }\href
  {https://doi.org/10.26421/QIC7.1-2-9} {\bibfield  {journal} {\bibinfo
  {journal} {Quantum Information and Computation}\ }\textbf {\bibinfo {volume}
  {7}},\ \bibinfo {pages} {139} (\bibinfo {year} {2007})}\BibitemShut {NoStop}%
\bibitem [{\citenamefont {Varbanov}\ \emph {et~al.}(2020)\citenamefont
  {Varbanov}, \citenamefont {Battistel}, \citenamefont {Tarasinski},
  \citenamefont {Ostroukh}, \citenamefont {O’Brien}, \citenamefont
  {DiCarlo},\ and\ \citenamefont {Terhal}}]{Varbanov2020}%
  \BibitemOpen
  \bibfield  {author} {\bibinfo {author} {\bibfnamefont {B.~M.}\ \bibnamefont
  {Varbanov}}, \bibinfo {author} {\bibfnamefont {F.}~\bibnamefont {Battistel}},
  \bibinfo {author} {\bibfnamefont {B.~M.}\ \bibnamefont {Tarasinski}},
  \bibinfo {author} {\bibfnamefont {V.~P.}\ \bibnamefont {Ostroukh}}, \bibinfo
  {author} {\bibfnamefont {T.~E.}\ \bibnamefont {O’Brien}}, \bibinfo {author}
  {\bibfnamefont {L.}~\bibnamefont {DiCarlo}},\ and\ \bibinfo {author}
  {\bibfnamefont {B.~M.}\ \bibnamefont {Terhal}},\ }\bibfield  {title}
  {\bibinfo {title} {Leakage detection for a transmon-based surface code},\
  }\href {https://doi.org/10.1038/s41534-020-00330-w} {\bibfield  {journal}
  {\bibinfo  {journal} {npj Quantum Information}\ }\textbf {\bibinfo {volume}
  {6}},\ \bibinfo {pages} {102} (\bibinfo {year} {2020})}\BibitemShut {NoStop}%
\bibitem [{\citenamefont {Miao}\ \emph {et~al.}(2023)\citenamefont {Miao},
  \citenamefont {McEwen}, \citenamefont {Atalaya}, \citenamefont {Kafri},
  \citenamefont {Pryadko}, \citenamefont {Bengtsson}, \citenamefont {Opremcak},
  \citenamefont {Satzinger}, \citenamefont {Chen}, \citenamefont {Klimov},
  \citenamefont {Quintana}, \citenamefont {Acharya}, \citenamefont {Anderson},
  \citenamefont {Ansmann}, \citenamefont {Arute}, \citenamefont {Arya},
  \citenamefont {Asfaw}, \citenamefont {Bardin}, \citenamefont {Bourassa},
  \citenamefont {Bovaird}, \citenamefont {Brill}, \citenamefont {Buckley},
  \citenamefont {Buell}, \citenamefont {Burger}, \citenamefont {Burkett},
  \citenamefont {Bushnell}, \citenamefont {Campero}, \citenamefont {Chiaro},
  \citenamefont {Collins}, \citenamefont {Conner}, \citenamefont {Crook},
  \citenamefont {Curtin}, \citenamefont {Debroy}, \citenamefont {Demura},
  \citenamefont {Dunsworth}, \citenamefont {Erickson}, \citenamefont {Fatemi},
  \citenamefont {Ferreira}, \citenamefont {Burgos}, \citenamefont {Forati},
  \citenamefont {Fowler}, \citenamefont {Foxen}, \citenamefont {Garcia},
  \citenamefont {Giang}, \citenamefont {Gidney}, \citenamefont {Giustina},
  \citenamefont {Gosula}, \citenamefont {Dau}, \citenamefont {Gross},
  \citenamefont {Hamilton}, \citenamefont {Harrington}, \citenamefont {Heu},
  \citenamefont {Hilton}, \citenamefont {Hoffmann}, \citenamefont {Hong},
  \citenamefont {Huang}, \citenamefont {Huff}, \citenamefont {Iveland},
  \citenamefont {Jeffrey}, \citenamefont {Jiang}, \citenamefont {Jones},
  \citenamefont {Kelly}, \citenamefont {Kim}, \citenamefont {Kostritsa},
  \citenamefont {Kreikebaum}, \citenamefont {Landhuis}, \citenamefont {Laptev},
  \citenamefont {Laws}, \citenamefont {Lee}, \citenamefont {Lester},
  \citenamefont {Lill}, \citenamefont {Liu}, \citenamefont {Locharla},
  \citenamefont {Lucero}, \citenamefont {Martin}, \citenamefont {Megrant},
  \citenamefont {Mi}, \citenamefont {Montazeri}, \citenamefont {Morvan},
  \citenamefont {Naaman}, \citenamefont {Neeley}, \citenamefont {Neill},
  \citenamefont {Nersisyan}, \citenamefont {Newman}, \citenamefont {Ng},
  \citenamefont {Nguyen}, \citenamefont {Nguyen}, \citenamefont {Potter},
  \citenamefont {Rocque}, \citenamefont {Roushan}, \citenamefont
  {Sankaragomathi}, \citenamefont {Schurkus}, \citenamefont {Schuster},
  \citenamefont {Shearn}, \citenamefont {Shorter}, \citenamefont {Shutty},
  \citenamefont {Shvarts}, \citenamefont {Skruzny}, \citenamefont {Smith},
  \citenamefont {Sterling}, \citenamefont {Szalay}, \citenamefont {Thor},
  \citenamefont {Torres}, \citenamefont {White}, \citenamefont {Woo},
  \citenamefont {Yao}, \citenamefont {Yeh}, \citenamefont {Yoo}, \citenamefont
  {Young}, \citenamefont {Zalcman}, \citenamefont {Zhu}, \citenamefont
  {Zobrist}, \citenamefont {Neven}, \citenamefont {Smelyanskiy}, \citenamefont
  {Petukhov}, \citenamefont {Korotkov}, \citenamefont {Sank},\ and\
  \citenamefont {Chen}}]{Miao2023}%
  \BibitemOpen
  \bibfield  {author} {\bibinfo {author} {\bibfnamefont {K.~C.}\ \bibnamefont
  {Miao}}, \bibinfo {author} {\bibfnamefont {M.}~\bibnamefont {McEwen}},
  \bibinfo {author} {\bibfnamefont {J.}~\bibnamefont {Atalaya}}, \bibinfo
  {author} {\bibfnamefont {D.}~\bibnamefont {Kafri}}, \bibinfo {author}
  {\bibfnamefont {L.~P.}\ \bibnamefont {Pryadko}}, \bibinfo {author}
  {\bibfnamefont {A.}~\bibnamefont {Bengtsson}}, \bibinfo {author}
  {\bibfnamefont {A.}~\bibnamefont {Opremcak}}, \bibinfo {author}
  {\bibfnamefont {K.~J.}\ \bibnamefont {Satzinger}}, \bibinfo {author}
  {\bibfnamefont {Z.}~\bibnamefont {Chen}}, \bibinfo {author} {\bibfnamefont
  {P.~V.}\ \bibnamefont {Klimov}}, \bibinfo {author} {\bibfnamefont
  {C.}~\bibnamefont {Quintana}}, \bibinfo {author} {\bibfnamefont
  {R.}~\bibnamefont {Acharya}}, \bibinfo {author} {\bibfnamefont
  {K.}~\bibnamefont {Anderson}}, \bibinfo {author} {\bibfnamefont
  {M.}~\bibnamefont {Ansmann}}, \bibinfo {author} {\bibfnamefont
  {F.}~\bibnamefont {Arute}}, \bibinfo {author} {\bibfnamefont
  {K.}~\bibnamefont {Arya}}, \bibinfo {author} {\bibfnamefont {A.}~\bibnamefont
  {Asfaw}}, \bibinfo {author} {\bibfnamefont {J.~C.}\ \bibnamefont {Bardin}},
  \bibinfo {author} {\bibfnamefont {A.}~\bibnamefont {Bourassa}}, \bibinfo
  {author} {\bibfnamefont {J.}~\bibnamefont {Bovaird}}, \bibinfo {author}
  {\bibfnamefont {L.}~\bibnamefont {Brill}}, \bibinfo {author} {\bibfnamefont
  {B.~B.}\ \bibnamefont {Buckley}}, \bibinfo {author} {\bibfnamefont {D.~A.}\
  \bibnamefont {Buell}}, \bibinfo {author} {\bibfnamefont {T.}~\bibnamefont
  {Burger}}, \bibinfo {author} {\bibfnamefont {B.}~\bibnamefont {Burkett}},
  \bibinfo {author} {\bibfnamefont {N.}~\bibnamefont {Bushnell}}, \bibinfo
  {author} {\bibfnamefont {J.}~\bibnamefont {Campero}}, \bibinfo {author}
  {\bibfnamefont {B.}~\bibnamefont {Chiaro}}, \bibinfo {author} {\bibfnamefont
  {R.}~\bibnamefont {Collins}}, \bibinfo {author} {\bibfnamefont
  {P.}~\bibnamefont {Conner}}, \bibinfo {author} {\bibfnamefont {A.~L.}\
  \bibnamefont {Crook}}, \bibinfo {author} {\bibfnamefont {B.}~\bibnamefont
  {Curtin}}, \bibinfo {author} {\bibfnamefont {D.~M.}\ \bibnamefont {Debroy}},
  \bibinfo {author} {\bibfnamefont {S.}~\bibnamefont {Demura}}, \bibinfo
  {author} {\bibfnamefont {A.}~\bibnamefont {Dunsworth}}, \bibinfo {author}
  {\bibfnamefont {C.}~\bibnamefont {Erickson}}, \bibinfo {author}
  {\bibfnamefont {R.}~\bibnamefont {Fatemi}}, \bibinfo {author} {\bibfnamefont
  {V.~S.}\ \bibnamefont {Ferreira}}, \bibinfo {author} {\bibfnamefont {L.~F.}\
  \bibnamefont {Burgos}}, \bibinfo {author} {\bibfnamefont {E.}~\bibnamefont
  {Forati}}, \bibinfo {author} {\bibfnamefont {A.~G.}\ \bibnamefont {Fowler}},
  \bibinfo {author} {\bibfnamefont {B.}~\bibnamefont {Foxen}}, \bibinfo
  {author} {\bibfnamefont {G.}~\bibnamefont {Garcia}}, \bibinfo {author}
  {\bibfnamefont {W.}~\bibnamefont {Giang}}, \bibinfo {author} {\bibfnamefont
  {C.}~\bibnamefont {Gidney}}, \bibinfo {author} {\bibfnamefont
  {M.}~\bibnamefont {Giustina}}, \bibinfo {author} {\bibfnamefont
  {R.}~\bibnamefont {Gosula}}, \bibinfo {author} {\bibfnamefont {A.~G.}\
  \bibnamefont {Dau}}, \bibinfo {author} {\bibfnamefont {J.~A.}\ \bibnamefont
  {Gross}}, \bibinfo {author} {\bibfnamefont {M.~C.}\ \bibnamefont {Hamilton}},
  \bibinfo {author} {\bibfnamefont {S.~D.}\ \bibnamefont {Harrington}},
  \bibinfo {author} {\bibfnamefont {P.}~\bibnamefont {Heu}}, \bibinfo {author}
  {\bibfnamefont {J.}~\bibnamefont {Hilton}}, \bibinfo {author} {\bibfnamefont
  {M.~R.}\ \bibnamefont {Hoffmann}}, \bibinfo {author} {\bibfnamefont
  {S.}~\bibnamefont {Hong}}, \bibinfo {author} {\bibfnamefont {T.}~\bibnamefont
  {Huang}}, \bibinfo {author} {\bibfnamefont {A.}~\bibnamefont {Huff}},
  \bibinfo {author} {\bibfnamefont {J.}~\bibnamefont {Iveland}}, \bibinfo
  {author} {\bibfnamefont {E.}~\bibnamefont {Jeffrey}}, \bibinfo {author}
  {\bibfnamefont {Z.}~\bibnamefont {Jiang}}, \bibinfo {author} {\bibfnamefont
  {C.}~\bibnamefont {Jones}}, \bibinfo {author} {\bibfnamefont
  {J.}~\bibnamefont {Kelly}}, \bibinfo {author} {\bibfnamefont
  {S.}~\bibnamefont {Kim}}, \bibinfo {author} {\bibfnamefont {F.}~\bibnamefont
  {Kostritsa}}, \bibinfo {author} {\bibfnamefont {J.~M.}\ \bibnamefont
  {Kreikebaum}}, \bibinfo {author} {\bibfnamefont {D.}~\bibnamefont
  {Landhuis}}, \bibinfo {author} {\bibfnamefont {P.}~\bibnamefont {Laptev}},
  \bibinfo {author} {\bibfnamefont {L.}~\bibnamefont {Laws}}, \bibinfo {author}
  {\bibfnamefont {K.}~\bibnamefont {Lee}}, \bibinfo {author} {\bibfnamefont
  {B.~J.}\ \bibnamefont {Lester}}, \bibinfo {author} {\bibfnamefont {A.~T.}\
  \bibnamefont {Lill}}, \bibinfo {author} {\bibfnamefont {W.}~\bibnamefont
  {Liu}}, \bibinfo {author} {\bibfnamefont {A.}~\bibnamefont {Locharla}},
  \bibinfo {author} {\bibfnamefont {E.}~\bibnamefont {Lucero}}, \bibinfo
  {author} {\bibfnamefont {S.}~\bibnamefont {Martin}}, \bibinfo {author}
  {\bibfnamefont {A.}~\bibnamefont {Megrant}}, \bibinfo {author} {\bibfnamefont
  {X.}~\bibnamefont {Mi}}, \bibinfo {author} {\bibfnamefont {S.}~\bibnamefont
  {Montazeri}}, \bibinfo {author} {\bibfnamefont {A.}~\bibnamefont {Morvan}},
  \bibinfo {author} {\bibfnamefont {O.}~\bibnamefont {Naaman}}, \bibinfo
  {author} {\bibfnamefont {M.}~\bibnamefont {Neeley}}, \bibinfo {author}
  {\bibfnamefont {C.}~\bibnamefont {Neill}}, \bibinfo {author} {\bibfnamefont
  {A.}~\bibnamefont {Nersisyan}}, \bibinfo {author} {\bibfnamefont
  {M.}~\bibnamefont {Newman}}, \bibinfo {author} {\bibfnamefont {J.~H.}\
  \bibnamefont {Ng}}, \bibinfo {author} {\bibfnamefont {A.}~\bibnamefont
  {Nguyen}}, \bibinfo {author} {\bibfnamefont {M.}~\bibnamefont {Nguyen}},
  \bibinfo {author} {\bibfnamefont {R.}~\bibnamefont {Potter}}, \bibinfo
  {author} {\bibfnamefont {C.}~\bibnamefont {Rocque}}, \bibinfo {author}
  {\bibfnamefont {P.}~\bibnamefont {Roushan}}, \bibinfo {author} {\bibfnamefont
  {K.}~\bibnamefont {Sankaragomathi}}, \bibinfo {author} {\bibfnamefont
  {H.~F.}\ \bibnamefont {Schurkus}}, \bibinfo {author} {\bibfnamefont
  {C.}~\bibnamefont {Schuster}}, \bibinfo {author} {\bibfnamefont {M.~J.}\
  \bibnamefont {Shearn}}, \bibinfo {author} {\bibfnamefont {A.}~\bibnamefont
  {Shorter}}, \bibinfo {author} {\bibfnamefont {N.}~\bibnamefont {Shutty}},
  \bibinfo {author} {\bibfnamefont {V.}~\bibnamefont {Shvarts}}, \bibinfo
  {author} {\bibfnamefont {J.}~\bibnamefont {Skruzny}}, \bibinfo {author}
  {\bibfnamefont {W.~C.}\ \bibnamefont {Smith}}, \bibinfo {author}
  {\bibfnamefont {G.}~\bibnamefont {Sterling}}, \bibinfo {author}
  {\bibfnamefont {M.}~\bibnamefont {Szalay}}, \bibinfo {author} {\bibfnamefont
  {D.}~\bibnamefont {Thor}}, \bibinfo {author} {\bibfnamefont {A.}~\bibnamefont
  {Torres}}, \bibinfo {author} {\bibfnamefont {T.}~\bibnamefont {White}},
  \bibinfo {author} {\bibfnamefont {B.~W.~K.}\ \bibnamefont {Woo}}, \bibinfo
  {author} {\bibfnamefont {Z.~J.}\ \bibnamefont {Yao}}, \bibinfo {author}
  {\bibfnamefont {P.}~\bibnamefont {Yeh}}, \bibinfo {author} {\bibfnamefont
  {J.}~\bibnamefont {Yoo}}, \bibinfo {author} {\bibfnamefont {G.}~\bibnamefont
  {Young}}, \bibinfo {author} {\bibfnamefont {A.}~\bibnamefont {Zalcman}},
  \bibinfo {author} {\bibfnamefont {N.}~\bibnamefont {Zhu}}, \bibinfo {author}
  {\bibfnamefont {N.}~\bibnamefont {Zobrist}}, \bibinfo {author} {\bibfnamefont
  {H.}~\bibnamefont {Neven}}, \bibinfo {author} {\bibfnamefont
  {V.}~\bibnamefont {Smelyanskiy}}, \bibinfo {author} {\bibfnamefont
  {A.}~\bibnamefont {Petukhov}}, \bibinfo {author} {\bibfnamefont {A.~N.}\
  \bibnamefont {Korotkov}}, \bibinfo {author} {\bibfnamefont {D.}~\bibnamefont
  {Sank}},\ and\ \bibinfo {author} {\bibfnamefont {Y.}~\bibnamefont {Chen}},\
  }\bibfield  {title} {\bibinfo {title} {Overcoming leakage in quantum error
  correction},\ }\href {https://doi.org/10.1038/s41567-023-02226-w} {\bibfield
  {journal} {\bibinfo  {journal} {Nature Physics}\ }\textbf {\bibinfo {volume}
  {19}},\ \bibinfo {pages} {1780–1786} (\bibinfo {year} {2023})}\BibitemShut
  {NoStop}%
\bibitem [{Note1()}]{Note1}%
  \BibitemOpen
  \bibinfo {note} {For brevity, we also refer to state preparation and readout
  as reset operations.}\BibitemShut {Stop}%
\bibitem [{\citenamefont {Chao}\ \emph {et~al.}(2020)\citenamefont {Chao},
  \citenamefont {Beverland}, \citenamefont {Delfosse},\ and\ \citenamefont
  {Haah}}]{Chao2020surfacecode}%
  \BibitemOpen
  \bibfield  {author} {\bibinfo {author} {\bibfnamefont {R.}~\bibnamefont
  {Chao}}, \bibinfo {author} {\bibfnamefont {M.~E.}\ \bibnamefont {Beverland}},
  \bibinfo {author} {\bibfnamefont {N.}~\bibnamefont {Delfosse}},\ and\
  \bibinfo {author} {\bibfnamefont {J.}~\bibnamefont {Haah}},\ }\bibfield
  {title} {\bibinfo {title} {Optimization of the surface code design for
  {M}ajorana-based qubits},\ }\href {https://doi.org/10.22331/q-2020-10-28-352}
  {\bibfield  {journal} {\bibinfo  {journal} {{Quantum}}\ }\textbf {\bibinfo
  {volume} {4}},\ \bibinfo {pages} {352} (\bibinfo {year} {2020})}\BibitemShut
  {NoStop}%
\bibitem [{\citenamefont {Delfosse}\ \emph {et~al.}(2023)\citenamefont
  {Delfosse}, \citenamefont {Paetznick}, \citenamefont {Haah},\ and\
  \citenamefont {Hastings}}]{delfosse2023splitting}%
  \BibitemOpen
  \bibfield  {author} {\bibinfo {author} {\bibfnamefont {N.}~\bibnamefont
  {Delfosse}}, \bibinfo {author} {\bibfnamefont {A.}~\bibnamefont {Paetznick}},
  \bibinfo {author} {\bibfnamefont {J.}~\bibnamefont {Haah}},\ and\ \bibinfo
  {author} {\bibfnamefont {M.~B.}\ \bibnamefont {Hastings}},\ }\href@noop {}
  {\bibinfo {title} {Splitting decoders for correcting hypergraph faults}}
  (\bibinfo {year} {2023}),\ \Eprint {https://arxiv.org/abs/2309.15354}
  {arXiv:2309.15354 [quant-ph]} \BibitemShut {NoStop}%
\bibitem [{\citenamefont {Edmonds}(1965)}]{Edmonds1965}%
  \BibitemOpen
  \bibfield  {author} {\bibinfo {author} {\bibfnamefont {J.}~\bibnamefont
  {Edmonds}},\ }\bibfield  {title} {\bibinfo {title} {Paths, trees, and
  flowers},\ }\href {https://doi.org/10.4153/CJM-1965-045-4} {\bibfield
  {journal} {\bibinfo  {journal} {Canadian Journal of Mathematics}\ }\textbf
  {\bibinfo {volume} {17}},\ \bibinfo {pages} {449–467} (\bibinfo {year}
  {1965})}\BibitemShut {NoStop}%
\bibitem [{\citenamefont {Yan}\ \emph {et~al.}(2018)\citenamefont {Yan},
  \citenamefont {Campbell}, \citenamefont {Krantz}, \citenamefont {Kjaergaard},
  \citenamefont {Kim}, \citenamefont {Yoder}, \citenamefont {Hover},
  \citenamefont {Sears}, \citenamefont {Kerman}, \citenamefont {Orlando},
  \citenamefont {Gustavsson},\ and\ \citenamefont {Oliver}}]{Yan2018}%
  \BibitemOpen
  \bibfield  {author} {\bibinfo {author} {\bibfnamefont {F.}~\bibnamefont
  {Yan}}, \bibinfo {author} {\bibfnamefont {D.}~\bibnamefont {Campbell}},
  \bibinfo {author} {\bibfnamefont {P.}~\bibnamefont {Krantz}}, \bibinfo
  {author} {\bibfnamefont {M.}~\bibnamefont {Kjaergaard}}, \bibinfo {author}
  {\bibfnamefont {D.}~\bibnamefont {Kim}}, \bibinfo {author} {\bibfnamefont
  {J.~L.}\ \bibnamefont {Yoder}}, \bibinfo {author} {\bibfnamefont
  {D.}~\bibnamefont {Hover}}, \bibinfo {author} {\bibfnamefont
  {A.}~\bibnamefont {Sears}}, \bibinfo {author} {\bibfnamefont {A.~J.}\
  \bibnamefont {Kerman}}, \bibinfo {author} {\bibfnamefont {T.~P.}\
  \bibnamefont {Orlando}}, \bibinfo {author} {\bibfnamefont {S.}~\bibnamefont
  {Gustavsson}},\ and\ \bibinfo {author} {\bibfnamefont {W.~D.}\ \bibnamefont
  {Oliver}},\ }\bibfield  {title} {\bibinfo {title} {Distinguishing coherent
  and thermal photon noise in a circuit quantum electrodynamical system},\
  }\href {https://doi.org/10.1103/PhysRevLett.120.260504} {\bibfield  {journal}
  {\bibinfo  {journal} {Physical Review Letters}\ }\textbf {\bibinfo {volume}
  {120}},\ \bibinfo {pages} {260504} (\bibinfo {year} {2018})}\BibitemShut
  {NoStop}%
\bibitem [{\citenamefont {Burnett}\ \emph {et~al.}(2019)\citenamefont
  {Burnett}, \citenamefont {Bengtsson}, \citenamefont {Scigliuzzo},
  \citenamefont {Niepce}, \citenamefont {Kudra}, \citenamefont {Delsing},\ and\
  \citenamefont {Bylander}}]{Burnett2019}%
  \BibitemOpen
  \bibfield  {author} {\bibinfo {author} {\bibfnamefont {J.~J.}\ \bibnamefont
  {Burnett}}, \bibinfo {author} {\bibfnamefont {A.}~\bibnamefont {Bengtsson}},
  \bibinfo {author} {\bibfnamefont {M.}~\bibnamefont {Scigliuzzo}}, \bibinfo
  {author} {\bibfnamefont {D.}~\bibnamefont {Niepce}}, \bibinfo {author}
  {\bibfnamefont {M.}~\bibnamefont {Kudra}}, \bibinfo {author} {\bibfnamefont
  {P.}~\bibnamefont {Delsing}},\ and\ \bibinfo {author} {\bibfnamefont
  {J.}~\bibnamefont {Bylander}},\ }\bibfield  {title} {\bibinfo {title}
  {Decoherence benchmarking of superconducting qubits},\ }\href
  {https://doi.org/10.1038/s41534-019-0168-5} {\bibfield  {journal} {\bibinfo
  {journal} {npj Quantum Information}\ }\textbf {\bibinfo {volume} {5}},\
  \bibinfo {pages} {1} (\bibinfo {year} {2019})}\BibitemShut {NoStop}%
\bibitem [{\citenamefont {Davydova}\ \emph
  {et~al.}(2023{\natexlab{a}})\citenamefont {Davydova}, \citenamefont
  {Tantivasadakarn},\ and\ \citenamefont {Balasubramanian}}]{Davydova2023}%
  \BibitemOpen
  \bibfield  {author} {\bibinfo {author} {\bibfnamefont {M.}~\bibnamefont
  {Davydova}}, \bibinfo {author} {\bibfnamefont {N.}~\bibnamefont
  {Tantivasadakarn}},\ and\ \bibinfo {author} {\bibfnamefont {S.}~\bibnamefont
  {Balasubramanian}},\ }\bibfield  {title} {\bibinfo {title} {Floquet codes
  without parent subsystem codes},\ }\href
  {https://doi.org/10.1103/PRXQuantum.4.020341} {\bibfield  {journal} {\bibinfo
   {journal} {PRX Quantum}\ }\textbf {\bibinfo {volume} {4}},\ \bibinfo {pages}
  {020341} (\bibinfo {year} {2023}{\natexlab{a}})}\BibitemShut {NoStop}%
\bibitem [{\citenamefont {Bombin}\ and\ \citenamefont
  {Martin-Delgado}(2009)}]{Bombin2009}%
  \BibitemOpen
  \bibfield  {author} {\bibinfo {author} {\bibfnamefont {H.}~\bibnamefont
  {Bombin}}\ and\ \bibinfo {author} {\bibfnamefont {M.~A.}\ \bibnamefont
  {Martin-Delgado}},\ }\bibfield  {title} {\bibinfo {title} {Quantum
  measurements and gates by code deformation},\ }\href
  {https://doi.org/10.1088/1751-8113/42/9/095302} {\bibfield  {journal}
  {\bibinfo  {journal} {Journal of Physics A: Mathematical and Theoretical}\
  }\textbf {\bibinfo {volume} {42}},\ \bibinfo {pages} {095302} (\bibinfo
  {year} {2009})}\BibitemShut {NoStop}%
\bibitem [{\citenamefont {Horsman}\ \emph {et~al.}(2012)\citenamefont
  {Horsman}, \citenamefont {Fowler}, \citenamefont {Devitt},\ and\
  \citenamefont {Meter}}]{Horsman2012}%
  \BibitemOpen
  \bibfield  {author} {\bibinfo {author} {\bibfnamefont {D.}~\bibnamefont
  {Horsman}}, \bibinfo {author} {\bibfnamefont {A.~G.}\ \bibnamefont {Fowler}},
  \bibinfo {author} {\bibfnamefont {S.}~\bibnamefont {Devitt}},\ and\ \bibinfo
  {author} {\bibfnamefont {R.~V.}\ \bibnamefont {Meter}},\ }\bibfield  {title}
  {\bibinfo {title} {Surface code quantum computing by lattice surgery},\
  }\href {https://doi.org/10.1088/1367-2630/14/12/123011} {\bibfield  {journal}
  {\bibinfo  {journal} {New Journal of Physics}\ }\textbf {\bibinfo {volume}
  {14}},\ \bibinfo {pages} {123011} (\bibinfo {year} {2012})}\BibitemShut
  {NoStop}%
\bibitem [{\citenamefont {Paetznick}\ and\ \citenamefont
  {Reichardt}(2013)}]{Paetznick2013}%
  \BibitemOpen
  \bibfield  {author} {\bibinfo {author} {\bibfnamefont {A.}~\bibnamefont
  {Paetznick}}\ and\ \bibinfo {author} {\bibfnamefont {B.~W.}\ \bibnamefont
  {Reichardt}},\ }\bibfield  {title} {\bibinfo {title} {Universal
  fault-tolerant quantum computation with only transversal gates and error
  correction},\ }\href {https://doi.org/10.1103/physrevlett.111.090505}
  {\bibfield  {journal} {\bibinfo  {journal} {Physical Review Letters}\
  }\textbf {\bibinfo {volume} {111}},\ \bibinfo {pages} {090505} (\bibinfo
  {year} {2013})}\BibitemShut {NoStop}%
\bibitem [{\citenamefont {Bombín}(2015)}]{Bombin2015}%
  \BibitemOpen
  \bibfield  {author} {\bibinfo {author} {\bibfnamefont {H.}~\bibnamefont
  {Bombín}},\ }\bibfield  {title} {\bibinfo {title} {Gauge color codes:
  optimal transversal gates and gauge fixing in topological stabilizer codes},\
  }\href {https://doi.org/10.1088/1367-2630/17/8/083002} {\bibfield  {journal}
  {\bibinfo  {journal} {New Journal of Physics}\ }\textbf {\bibinfo {volume}
  {17}},\ \bibinfo {pages} {083002} (\bibinfo {year} {2015})}\BibitemShut
  {NoStop}%
\bibitem [{\citenamefont {Kubica}\ and\ \citenamefont
  {Beverland}(2015)}]{Kubica2015}%
  \BibitemOpen
  \bibfield  {author} {\bibinfo {author} {\bibfnamefont {A.}~\bibnamefont
  {Kubica}}\ and\ \bibinfo {author} {\bibfnamefont {M.~E.}\ \bibnamefont
  {Beverland}},\ }\bibfield  {title} {\bibinfo {title} {Universal transversal
  gates with color codes: A simplified approach},\ }\href
  {https://doi.org/10.1103/physreva.91.032330} {\bibfield  {journal} {\bibinfo
  {journal} {Physical Review A}\ }\textbf {\bibinfo {volume} {91}},\ \bibinfo
  {pages} {032330} (\bibinfo {year} {2015})}\BibitemShut {NoStop}%
\bibitem [{\citenamefont {Vuillot}\ \emph {et~al.}(2019)\citenamefont
  {Vuillot}, \citenamefont {Lao}, \citenamefont {Criger}, \citenamefont
  {García~Almudéver}, \citenamefont {Bertels},\ and\ \citenamefont
  {Terhal}}]{Vuillot2019}%
  \BibitemOpen
  \bibfield  {author} {\bibinfo {author} {\bibfnamefont {C.}~\bibnamefont
  {Vuillot}}, \bibinfo {author} {\bibfnamefont {L.}~\bibnamefont {Lao}},
  \bibinfo {author} {\bibfnamefont {B.}~\bibnamefont {Criger}}, \bibinfo
  {author} {\bibfnamefont {C.}~\bibnamefont {García~Almudéver}}, \bibinfo
  {author} {\bibfnamefont {K.}~\bibnamefont {Bertels}},\ and\ \bibinfo {author}
  {\bibfnamefont {B.~M.}\ \bibnamefont {Terhal}},\ }\bibfield  {title}
  {\bibinfo {title} {Code deformation and lattice surgery are gauge fixing},\
  }\href {https://doi.org/10.1088/1367-2630/ab0199} {\bibfield  {journal}
  {\bibinfo  {journal} {New Journal of Physics}\ }\textbf {\bibinfo {volume}
  {21}},\ \bibinfo {pages} {033028} (\bibinfo {year} {2019})}\BibitemShut
  {NoStop}%
\bibitem [{\citenamefont {Davydova}\ \emph
  {et~al.}(2023{\natexlab{b}})\citenamefont {Davydova}, \citenamefont
  {Tantivasadakarn}, \citenamefont {Balasubramanian},\ and\ \citenamefont
  {Aasen}}]{davydova2023DAcodes}%
  \BibitemOpen
  \bibfield  {author} {\bibinfo {author} {\bibfnamefont {M.}~\bibnamefont
  {Davydova}}, \bibinfo {author} {\bibfnamefont {N.}~\bibnamefont
  {Tantivasadakarn}}, \bibinfo {author} {\bibfnamefont {S.}~\bibnamefont
  {Balasubramanian}},\ and\ \bibinfo {author} {\bibfnamefont {D.}~\bibnamefont
  {Aasen}},\ }\href {https://doi.org/10.48550/arXiv.2307.10353} {\bibinfo
  {title} {Quantum computation from dynamic automorphism codes}},\ \bibinfo
  {howpublished} {arXiv:2307.10353} (\bibinfo {year}
  {2023}{\natexlab{b}})\BibitemShut {NoStop}%
\bibitem [{\citenamefont {Gottesman}(1997)}]{Gottesman1997}%
  \BibitemOpen
  \bibfield  {author} {\bibinfo {author} {\bibfnamefont {D.~E.}\ \bibnamefont
  {Gottesman}},\ }\emph {\bibinfo {title} {Stabilizer Codes and Quantum Error
  Correction}},\ \href {https://doi.org/10.7907/RZR7-DT72} {Ph.D. thesis}
  (\bibinfo {year} {1997})\BibitemShut {NoStop}%
\bibitem [{\citenamefont {Poulin}(2005)}]{Poulin2005}%
  \BibitemOpen
  \bibfield  {author} {\bibinfo {author} {\bibfnamefont {D.}~\bibnamefont
  {Poulin}},\ }\bibfield  {title} {\bibinfo {title} {Stabilizer formalism for
  operator quantum error correction},\ }\href
  {https://doi.org/10.1103/physrevlett.95.230504} {\bibfield  {journal}
  {\bibinfo  {journal} {Physical Review Letters}\ }\textbf {\bibinfo {volume}
  {95}},\ \bibinfo {pages} {230504} (\bibinfo {year} {2005})}\BibitemShut
  {NoStop}%
\bibitem [{Note2()}]{Note2}%
  \BibitemOpen
  \bibinfo {note} {Note that the families of Floquet codes considered in
  Sec.~\ref {sec:simulationresults} have a growing circuit distance which is
  proportional to the distance. Such scaling of the circuit distance does not
  hold in general.}\BibitemShut {Stop}%
\bibitem [{\citenamefont {Gidney}(2021)}]{gidney2021stim}%
  \BibitemOpen
  \bibfield  {author} {\bibinfo {author} {\bibfnamefont {C.}~\bibnamefont
  {Gidney}},\ }\bibfield  {title} {\bibinfo {title} {Stim: a fast stabilizer
  circuit simulator},\ }\href {https://doi.org/10.22331/q-2021-07-06-497}
  {\bibfield  {journal} {\bibinfo  {journal} {{Quantum}}\ }\textbf {\bibinfo
  {volume} {5}},\ \bibinfo {pages} {497} (\bibinfo {year} {2021})}\BibitemShut
  {NoStop}%
\bibitem [{\citenamefont {Higgott}\ and\ \citenamefont
  {Gidney}(2023)}]{higgott2023pymatching}%
  \BibitemOpen
  \bibfield  {author} {\bibinfo {author} {\bibfnamefont {O.}~\bibnamefont
  {Higgott}}\ and\ \bibinfo {author} {\bibfnamefont {C.}~\bibnamefont
  {Gidney}},\ }\href {https://doi.org/10.48550/arXiv.2303.15933} {\bibinfo
  {title} {Sparse blossom: correcting a million errors per core second with
  minimum-weight matching}},\ \bibinfo {howpublished} {arXiv:2303.15933}
  (\bibinfo {year} {2023})\BibitemShut {NoStop}%
\bibitem [{\citenamefont {Gidney}\ \emph {et~al.}(2021)\citenamefont {Gidney},
  \citenamefont {Newman}, \citenamefont {Fowler},\ and\ \citenamefont
  {Broughton}}]{Gidney2021faulttolerant}%
  \BibitemOpen
  \bibfield  {author} {\bibinfo {author} {\bibfnamefont {C.}~\bibnamefont
  {Gidney}}, \bibinfo {author} {\bibfnamefont {M.}~\bibnamefont {Newman}},
  \bibinfo {author} {\bibfnamefont {A.}~\bibnamefont {Fowler}},\ and\ \bibinfo
  {author} {\bibfnamefont {M.}~\bibnamefont {Broughton}},\ }\bibfield  {title}
  {\bibinfo {title} {A {F}ault-{T}olerant {H}oneycomb {M}emory},\ }\href
  {https://doi.org/10.22331/q-2021-12-20-605} {\bibfield  {journal} {\bibinfo
  {journal} {{Quantum}}\ }\textbf {\bibinfo {volume} {5}},\ \bibinfo {pages}
  {605} (\bibinfo {year} {2021})}\BibitemShut {NoStop}%
\bibitem [{\citenamefont {Bombin}\ and\ \citenamefont
  {Martin-Delgado}(2007)}]{Bombin2007}%
  \BibitemOpen
  \bibfield  {author} {\bibinfo {author} {\bibfnamefont {H.}~\bibnamefont
  {Bombin}}\ and\ \bibinfo {author} {\bibfnamefont {M.~A.}\ \bibnamefont
  {Martin-Delgado}},\ }\bibfield  {title} {\bibinfo {title} {Optimal resources
  for topological two-dimensional stabilizer codes: Comparative study},\ }\href
  {https://doi.org/10.1103/PhysRevA.76.012305} {\bibfield  {journal} {\bibinfo
  {journal} {Physical Review A}\ }\textbf {\bibinfo {volume} {76}},\ \bibinfo
  {pages} {012305} (\bibinfo {year} {2007})}\BibitemShut {NoStop}%
\bibitem [{\citenamefont {Higgott}\ and\ \citenamefont
  {Breuckmann}(2023)}]{higgott2023hyperbolic}%
  \BibitemOpen
  \bibfield  {author} {\bibinfo {author} {\bibfnamefont {O.}~\bibnamefont
  {Higgott}}\ and\ \bibinfo {author} {\bibfnamefont {N.~P.}\ \bibnamefont
  {Breuckmann}},\ }\href {https://doi.org/10.48550/arXiv.2308.03750} {\bibinfo
  {title} {Constructions and performance of hyperbolic and semi-hyperbolic
  floquet codes}},\ \bibinfo {howpublished} {arXiv:2308.03750} (\bibinfo {year}
  {2023})\BibitemShut {NoStop}%
\bibitem [{\citenamefont {Bombin}\ and\ \citenamefont
  {Martin-Delgado}(2006)}]{Bombin2006}%
  \BibitemOpen
  \bibfield  {author} {\bibinfo {author} {\bibfnamefont {H.}~\bibnamefont
  {Bombin}}\ and\ \bibinfo {author} {\bibfnamefont {M.~A.}\ \bibnamefont
  {Martin-Delgado}},\ }\bibfield  {title} {\bibinfo {title} {Topological
  quantum distillation},\ }\href
  {https://doi.org/10.1103/physrevlett.97.180501} {\bibfield  {journal}
  {\bibinfo  {journal} {Physical Review Letters}\ }\textbf {\bibinfo {volume}
  {97}},\ \bibinfo {pages} {180501} (\bibinfo {year} {2006})}\BibitemShut
  {NoStop}%
\bibitem [{\citenamefont {Kubica}(2018)}]{Kubicathesis}%
  \BibitemOpen
  \bibfield  {author} {\bibinfo {author} {\bibfnamefont {A.~M.}\ \bibnamefont
  {Kubica}},\ }\emph {\bibinfo {title} {The ABCs of the Color Code: A Study of
  Topological Quantum Codes as Toy Models for Fault-Tolerant Quantum
  Computation and Quantum Phases Of Matter}},\ \href
  {https://doi.org/10.7907/059V-MG69} {Ph.D. thesis} (\bibinfo {year}
  {2018})\BibitemShut {NoStop}%
\bibitem [{\citenamefont {Breuckmann}\ and\ \citenamefont
  {Eberhardt}(2021)}]{Breuckmann2021}%
  \BibitemOpen
  \bibfield  {author} {\bibinfo {author} {\bibfnamefont {N.~P.}\ \bibnamefont
  {Breuckmann}}\ and\ \bibinfo {author} {\bibfnamefont {J.~N.}\ \bibnamefont
  {Eberhardt}},\ }\bibfield  {title} {\bibinfo {title} {Quantum low-density
  parity-check codes},\ }\href {https://doi.org/10.1103/prxquantum.2.040101}
  {\bibfield  {journal} {\bibinfo  {journal} {PRX Quantum}\ }\textbf {\bibinfo
  {volume} {2}},\ \bibinfo {pages} {040101} (\bibinfo {year}
  {2021})}\BibitemShut {NoStop}%
\bibitem [{\citenamefont {Gidney}(2023)}]{Gidney2023pairmeasurement}%
  \BibitemOpen
  \bibfield  {author} {\bibinfo {author} {\bibfnamefont {C.}~\bibnamefont
  {Gidney}},\ }\bibfield  {title} {\bibinfo {title} {A {P}air {M}easurement
  {S}urface {C}ode on {P}entagons},\ }\href
  {https://doi.org/10.22331/q-2023-10-25-1156} {\bibfield  {journal} {\bibinfo
  {journal} {{Quantum}}\ }\textbf {\bibinfo {volume} {7}},\ \bibinfo {pages}
  {1156} (\bibinfo {year} {2023})}\BibitemShut {NoStop}%
\bibitem [{\citenamefont {Bravyi}\ \emph {et~al.}(2023)\citenamefont {Bravyi},
  \citenamefont {Cross}, \citenamefont {Gambetta}, \citenamefont {Maslov},
  \citenamefont {Rall},\ and\ \citenamefont {Yoder}}]{Bravyi2023}%
  \BibitemOpen
  \bibfield  {author} {\bibinfo {author} {\bibfnamefont {S.}~\bibnamefont
  {Bravyi}}, \bibinfo {author} {\bibfnamefont {A.~W.}\ \bibnamefont {Cross}},
  \bibinfo {author} {\bibfnamefont {J.~M.}\ \bibnamefont {Gambetta}}, \bibinfo
  {author} {\bibfnamefont {D.}~\bibnamefont {Maslov}}, \bibinfo {author}
  {\bibfnamefont {P.}~\bibnamefont {Rall}},\ and\ \bibinfo {author}
  {\bibfnamefont {T.~J.}\ \bibnamefont {Yoder}},\ }\href
  {https://doi.org/10.48550/ARXIV.2308.07915} {\bibinfo {title} {High-threshold
  and low-overhead fault-tolerant quantum memory}},\ \bibinfo {howpublished}
  {arXiv:2308.07915} (\bibinfo {year} {2023})\BibitemShut {NoStop}%
\bibitem [{\citenamefont {Xu}\ \emph {et~al.}(2023{\natexlab{b}})\citenamefont
  {Xu}, \citenamefont {Ataides}, \citenamefont {Pattison}, \citenamefont
  {Raveendran}, \citenamefont {Bluvstein}, \citenamefont {Wurtz}, \citenamefont
  {Vasic}, \citenamefont {Lukin}, \citenamefont {Jiang},\ and\ \citenamefont
  {Zhou}}]{Xu2023ldpc}%
  \BibitemOpen
  \bibfield  {author} {\bibinfo {author} {\bibfnamefont {Q.}~\bibnamefont
  {Xu}}, \bibinfo {author} {\bibfnamefont {J.~P.~B.}\ \bibnamefont {Ataides}},
  \bibinfo {author} {\bibfnamefont {C.~A.}\ \bibnamefont {Pattison}}, \bibinfo
  {author} {\bibfnamefont {N.}~\bibnamefont {Raveendran}}, \bibinfo {author}
  {\bibfnamefont {D.}~\bibnamefont {Bluvstein}}, \bibinfo {author}
  {\bibfnamefont {J.}~\bibnamefont {Wurtz}}, \bibinfo {author} {\bibfnamefont
  {B.}~\bibnamefont {Vasic}}, \bibinfo {author} {\bibfnamefont {M.~D.}\
  \bibnamefont {Lukin}}, \bibinfo {author} {\bibfnamefont {L.}~\bibnamefont
  {Jiang}},\ and\ \bibinfo {author} {\bibfnamefont {H.}~\bibnamefont {Zhou}},\
  }\href {https://doi.org/10.48550/ARXIV.2308.08648} {\bibinfo {title}
  {Constant-overhead fault-tolerant quantum computation with reconfigurable
  atom arrays}},\ \bibinfo {howpublished} {arXiv:2308.08648} (\bibinfo {year}
  {2023}{\natexlab{b}})\BibitemShut {NoStop}%
\bibitem [{\citenamefont {Zuk}\ \emph {et~al.}(2023)\citenamefont {Zuk},
  \citenamefont {Cohen}, \citenamefont {Gorshkov},\ and\ \citenamefont
  {Retzker}}]{zuk2023robust}%
  \BibitemOpen
  \bibfield  {author} {\bibinfo {author} {\bibfnamefont {I.}~\bibnamefont
  {Zuk}}, \bibinfo {author} {\bibfnamefont {D.}~\bibnamefont {Cohen}}, \bibinfo
  {author} {\bibfnamefont {A.~V.}\ \bibnamefont {Gorshkov}},\ and\ \bibinfo
  {author} {\bibfnamefont {A.}~\bibnamefont {Retzker}},\ }\href
  {https://doi.org/10.48550/arXiv.2306.09149} {\bibinfo {title} {Robust gates
  with spin-locked superconducting qubits}},\ \bibinfo {howpublished}
  {arXiv:2306.09149} (\bibinfo {year} {2023})\BibitemShut {NoStop}%
\bibitem [{\citenamefont {Heinsoo}\ \emph {et~al.}(2018)\citenamefont
  {Heinsoo}, \citenamefont {Andersen}, \citenamefont {Remm}, \citenamefont
  {Krinner}, \citenamefont {Walter}, \citenamefont {Salath{\'e}}, \citenamefont
  {Gasparinetti}, \citenamefont {Besse}, \citenamefont {Poto{\v{c}}nik},
  \citenamefont {Wallraff} \emph {et~al.}}]{heinsoo2018rapid}%
  \BibitemOpen
  \bibfield  {author} {\bibinfo {author} {\bibfnamefont {J.}~\bibnamefont
  {Heinsoo}}, \bibinfo {author} {\bibfnamefont {C.~K.}\ \bibnamefont
  {Andersen}}, \bibinfo {author} {\bibfnamefont {A.}~\bibnamefont {Remm}},
  \bibinfo {author} {\bibfnamefont {S.}~\bibnamefont {Krinner}}, \bibinfo
  {author} {\bibfnamefont {T.}~\bibnamefont {Walter}}, \bibinfo {author}
  {\bibfnamefont {Y.}~\bibnamefont {Salath{\'e}}}, \bibinfo {author}
  {\bibfnamefont {S.}~\bibnamefont {Gasparinetti}}, \bibinfo {author}
  {\bibfnamefont {J.-C.}\ \bibnamefont {Besse}}, \bibinfo {author}
  {\bibfnamefont {A.}~\bibnamefont {Poto{\v{c}}nik}}, \bibinfo {author}
  {\bibfnamefont {A.}~\bibnamefont {Wallraff}}, \emph {et~al.},\ }\bibfield
  {title} {\bibinfo {title} {Rapid high-fidelity multiplexed readout of
  superconducting qubits},\ }\href
  {https://doi.org/10.1103/PhysRevApplied.10.034040} {\bibfield  {journal}
  {\bibinfo  {journal} {Physical Review Applied}\ }\textbf {\bibinfo {volume}
  {10}},\ \bibinfo {pages} {034040} (\bibinfo {year} {2018})}\BibitemShut
  {NoStop}%
\bibitem [{\citenamefont {Sunada}\ \emph {et~al.}(2023)\citenamefont {Sunada},
  \citenamefont {Yuki}, \citenamefont {Wang}, \citenamefont {Miyamura},
  \citenamefont {Ilves}, \citenamefont {Matsuura}, \citenamefont {Spring},
  \citenamefont {Tamate}, \citenamefont {Kono},\ and\ \citenamefont
  {Nakamura}}]{sunada2023photon}%
  \BibitemOpen
  \bibfield  {author} {\bibinfo {author} {\bibfnamefont {Y.}~\bibnamefont
  {Sunada}}, \bibinfo {author} {\bibfnamefont {K.}~\bibnamefont {Yuki}},
  \bibinfo {author} {\bibfnamefont {Z.}~\bibnamefont {Wang}}, \bibinfo {author}
  {\bibfnamefont {T.}~\bibnamefont {Miyamura}}, \bibinfo {author}
  {\bibfnamefont {J.}~\bibnamefont {Ilves}}, \bibinfo {author} {\bibfnamefont
  {K.}~\bibnamefont {Matsuura}}, \bibinfo {author} {\bibfnamefont {P.~A.}\
  \bibnamefont {Spring}}, \bibinfo {author} {\bibfnamefont {S.}~\bibnamefont
  {Tamate}}, \bibinfo {author} {\bibfnamefont {S.}~\bibnamefont {Kono}},\ and\
  \bibinfo {author} {\bibfnamefont {Y.}~\bibnamefont {Nakamura}},\ }\href
  {https://doi.org/10.48550/arXiv.2309.04315} {\bibinfo {title}
  {Photon-noise-tolerant dispersive readout of a superconducting qubit using a
  nonlinear purcell filter}},\ \bibinfo {howpublished} {arXiv:2309.04315}
  (\bibinfo {year} {2023})\BibitemShut {NoStop}%
\bibitem [{\citenamefont {Boradjiev}\ and\ \citenamefont
  {Vitanov}(2013)}]{boradjiev2013control}%
  \BibitemOpen
  \bibfield  {author} {\bibinfo {author} {\bibfnamefont {I.~I.}\ \bibnamefont
  {Boradjiev}}\ and\ \bibinfo {author} {\bibfnamefont {N.~V.}\ \bibnamefont
  {Vitanov}},\ }\bibfield  {title} {\bibinfo {title} {Control of qubits by
  shaped pulses of finite duration},\ }\href
  {https://doi.org/10.1103/PhysRevA.88.013402} {\bibfield  {journal} {\bibinfo
  {journal} {Physical Review A}\ }\textbf {\bibinfo {volume} {88}},\ \bibinfo
  {pages} {013402} (\bibinfo {year} {2013})}\BibitemShut {NoStop}%
\bibitem [{Note3()}]{Note3}%
  \BibitemOpen
  \bibinfo {note} {Because Floquet codes encode two logical qubits, the word
  error rate is four times the values presented if we assume independent $X$
  and $Z$ failure probabilities. The thresholds will remain the
  same.}\BibitemShut {Stop}%
\bibitem [{\citenamefont {Wang}\ \emph {et~al.}(2003)\citenamefont {Wang},
  \citenamefont {Harrington},\ and\ \citenamefont {Preskill}}]{Wang2003}%
  \BibitemOpen
  \bibfield  {author} {\bibinfo {author} {\bibfnamefont {C.}~\bibnamefont
  {Wang}}, \bibinfo {author} {\bibfnamefont {J.}~\bibnamefont {Harrington}},\
  and\ \bibinfo {author} {\bibfnamefont {J.}~\bibnamefont {Preskill}},\
  }\bibfield  {title} {\bibinfo {title} {Confinement-higgs transition in a
  disordered gauge theory and the accuracy threshold for quantum memory},\
  }\href {https://doi.org/10.1016/s0003-4916(02)00019-2} {\bibfield  {journal}
  {\bibinfo  {journal} {Annals of Physics}\ }\textbf {\bibinfo {volume}
  {303}},\ \bibinfo {pages} {31–58} (\bibinfo {year} {2003})}\BibitemShut
  {NoStop}%
\bibitem [{\citenamefont {Harrington}(2004)}]{Harrington2004}%
  \BibitemOpen
  \bibfield  {author} {\bibinfo {author} {\bibfnamefont {J.~W.}\ \bibnamefont
  {Harrington}},\ }\emph {\bibinfo {title} {Analysis of Quantum
  Error-Correcting Codes: Symplectic Lattice Codes and Toric Codes}},\ \href
  {https://doi.org/10.7907/AHMQ-EG82} {Ph.D. thesis} (\bibinfo {year}
  {2004})\BibitemShut {NoStop}%
\end{thebibliography}%

\end{document}